\documentclass[a4paper,USenglish,dvipsnames]{article}
\usepackage[breaklinks,hidelinks]{hyperref} 
\usepackage{fullpage}
\usepackage{amssymb}
\usepackage{amsmath}
\usepackage{amsthm}
\usepackage{thmtools}
\usepackage{mathtools}
\usepackage[capitalise]{cleveref}

\usepackage{amsmath}
\usepackage{amssymb}
 \newtheorem{assumption}{Assumption}

\usepackage{bm}
\usepackage{bbm}

\usepackage{xspace}
\usepackage{turnstile}

\usepackage{booktabs}
\usepackage{multirow}
\usepackage{pifont}
\usepackage{mathtools}

\usepackage{multicol}
\usepackage{stackengine}
\usepackage{pgf}

\usepackage{tikz}
\usetikzlibrary{decorations.pathreplacing,positioning}
\usetikzlibrary{arrows,automata,calc}
\tikzset{>=stealth, shorten >=1pt}
\tikzset{every edge/.style = {thick, ->, draw}}
\tikzset{every loop/.style = {thick, ->, draw}}
\pgfdeclarelayer{background}
\pgfsetlayers{background,main}

\usetikzlibrary{
  calc,
  arrows,
  positioning,
  shapes,
  shadows,
  automata,
  external,
  shadows,
  decorations.pathreplacing,
  decorations.pathmorphing,
  decorations.markings,
  backgrounds,
  patterns
}

\tikzset{  
  node distance=4em,
  >=stealth',
  shadow/.style		= {opacity=.25, black, shadow xshift=0.08, shadow yshift=-0.08},
  plainnode/.style 	= {draw, ultra thick, fill=gray!10},
  pl0/.style	       = {circle, minimum size=6mm, inner sep=0mm,plainnode, drop shadow = {shadow}},
  ind/.style         = {circle,draw,fill=black},
  nind/.style        = {circle,draw},
  wt/.style          = {text = white, yshift = -0.1em,scale=0.9}
}

\usepackage{anyfontsize}

\newcommand{\myquote}[1]{``#1''}

\newcommand{\nats}{\mathbbm{N}}
\renewcommand{\epsilon}{\varepsilon}
\renewcommand{\phi}{\varphi}
\newcommand{\size}[1]{|#1|}

\newcommand{\set}[1]{\{#1\}}

\newcommand{\vectordots}[2]{
\begin{pmatrix}
  #1 \\
    \vspace{-.6cm}\\
  \vdots \\
    \vspace{-.6cm}\\
  #2\\
\end{pmatrix}}

\newcommand{\F}{\mathop{\mathbf{F}\vphantom{a}}\nolimits}
\newcommand{\G}{\mathop{\mathbf{G}\vphantom{a}}\nolimits}

\newcommand{\ltl}{{LTL}\xspace}

\newcommand{\ctlstar}{{CTL$^*$}\xspace}
\newcommand{\hyltl}{{HyperLTL}\xspace}
\newcommand{\hyctlstar}{{HyperCTL$^*$}\xspace}

\newcommand{\Q}{Q}

\newcommand{\hylogicpda}{HyperPDA\xspace}
\newcommand{\hylogicvpa}{HyperVPA\xspace}

\newcommand{\Succ}[1]{\mathrm{Succ}(#1)}

\newcommand{\var}{\mathcal{V}}

\newcommand{\gni}{\mathrm{GNI}}
\newcommand{\inp}{\mathrm{in}}
\newcommand{\out}{\mathrm{out}}
\newcommand{\propo}{\texttt{p}}

\newcommand{\pspace}{\textsc{PSpace}\xspace}
\newcommand{\expt}{\textsc{ExpTime}\xspace}
\newcommand{\fiveexp}{\textsc{5ExpTime}\xspace}

\newcommand{\twoexp}{\textsc{2ExpTime}\xspace}

\newcommand{\tower}{\textsc{Tower}\xspace}

\newcommand{\tsys}{\mathfrak{T}}
\newcommand{\traces}[1]{{L}(#1)}

\newcommand{\aut}{\mathcal{A}}
\newcommand{\autb}{\mathcal{B}}

\newcommand{\autd}{\mathcal{D}}
\newcommand{\initmark}{I}

\newcommand{\combine}[1]{\mathrm{mrg}(#1)}
\newcommand{\projone}[1]{\mathrm{pr}_V(#1)}
\newcommand{\projtwo}[1]{\mathrm{pr}_{\cal{P}}(#1)}

\newcommand{\Gammabot}{\Gamma_{\!\bot}}
\newcommand{\sh}{\mathrm{sh}}
\newcommand{\trans}[1]{\xrightarrow{#1}}

\newcommand{\Sigmacall}{\Sigma_c}
\newcommand{\Sigmaskip}{\Sigma_s}
\newcommand{\Sigmareturn}{\Sigma_r}

\newcommand{\gsgame}{\mathcal{G}}
\newcommand{\SigmaI}{\Sigma_1}
\newcommand{\SigmaO}{\Sigma_2}

\newcommand{\step}{\texttt{Step}}
\newcommand{\properstep}{\texttt{ProperStep}}

\newcommand{\effect}{\mathit{ef}}
\newcommand{\ints}{\mathbbm{Z}}

\newcommand{\prophyCall}[2]{\texttt{CallHump}\langle{#1},{#2}\rangle}
\newcommand{\prophySkip}[2]{\texttt{SkipHump}\langle{#1},{#2}\rangle}
\newcommand{\prophyReturn}[2]{\texttt{ReturnHump}\langle{#1},{#2}\rangle}

\newcommand{\prophyMCallAcc}[2]{\texttt{MatchedCallStep}\langle{#1},{#2}\rangle}
\newcommand{\prophyUCallAcc}[2]{\texttt{UnmatchedCallStep}\langle{#1},{#2}\rangle}

\newcommand{\prophySkipAcc}[2]{\texttt{SkipStep}\langle{#1},{#2}\rangle}
\newcommand{\prophyReturnAcc}[2]{\texttt{UnmatchedReturnStep}\langle{#1},{#2}\rangle}

\newcommand{\firstF}[1]{{\mathrm{FirstAcc}}({#1})}
\newcommand{\type}[1]{{\mathit{type}}({#1})}
\newcommand{\ch}{\mathit{choice}}

\newcommand{\opt}[1]{\mathrm{opt}\langle #1 \rangle}
\newcommand{\nonempty}{\textsc{nonempty}}

\newcommand{\pvar}[1]{x_{#1}}

\newtheorem{theorem}{Theorem}
\newtheorem{lemma}{Lemma}
\newtheorem{corollary}{Corollary}
\newtheorem{proposition}{Proposition}
\newtheorem{definition}{Definition}
\newtheorem{remark}{Remark}

\crefname{equation}{Eq.}{Eqs.}
\crefname{claim}{Claim}{Claims}
\crefname{assumption}{Assumption}{Assumptions}

\title{Logics for Context-free Hyperproperties\thanks{Partly supported by the project  `Hyperlogics: Expressiveness, Monitorability and Tools (H.-Lo)' of the Icelandic Research Fund, project no.~2612260-051.}}
\author{Sarah Winter (Université Paris Cité, CNRS, IRIF, Paris, France)\\ Martin Zimmermann (Aalborg University, Aalborg, Denmark)}
\date{}

\begin{document}

\maketitle

\begin{abstract}
    We introduce a novel logic for the specification of context-free hyperproperties, which capture, e.g., the flow of information in security-critical recursive systems. Intuitively, the logic extends  visibly pushdown automata by quantification over traces, just like HyperLTL, the most important logic for regular hyperproperties, extends LTL by quantification over traces.
Using a game-based approach, we show that model-checking is decidable for formulas with a single quantifier alternation, provided the stack height of the visibly pushdown automaton only depends on the traces bound to the variables of the first quantifier block. 
A single quantifier alternation suffices to express many information-flow properties studied in the literature. 
Complementarily, we show that model-checking is undecidable for formulas with a single quantifier alternation, if the stack behavior of the visibly pushdown automaton may depend on the second quantifier block. 
This also implies that model-checking is undecidable for almost all fragments with more than one quantifier alternation.
\end{abstract}

\section{Introduction}

The specification and verification of security-critical systems involves reasoning about the flow of information, which requires simultaneous analysis of multiple execution traces of the system.
Clarkson and Schneider~\cite{ClarksonS10} coined the term \emph{hyperproperties} for such properties, which are formally sets of sets of traces.
A system satisfies a hyperproperty if its set of traces is an element of the hyperproperty.
This should be contrasted with classical trace properties, which are sets of traces.
A system satisfies a trace property if its set of traces is a subset of the trace property.

Temporal logics are an attractive specification language both for trace and hyperproperties.
Arguably the most important logic for trace properties is Linear Temporal Logic (\ltl)~\cite{Pnueli77} while the most important logic for hyperproperties is \hyltl~\cite{ClarksonFKMRS14}, which extends \ltl with trace quantification.
For example, the formula
\[
\phi_\gni = \forall \pi.\ \forall \pi'.\ \exists \pi''.\ \G \left(\bigwedge\nolimits_{\propo \in L_{\inp} \cup L_{\out}} \propo_\pi \leftrightarrow \propo_{\pi''}\right) \wedge \G\left( \bigwedge\nolimits_{\propo \in H_\inp} \propo_{\pi'} \leftrightarrow \propo_{\pi''} \right)
\]
expresses generalized noninterference~\cite{gni}: 
for all pairs~$\pi$ and $\pi'$ of traces there is a trace~$\pi''$ that agrees with the low-security inputs (propositions in $L_\inp$) and low-security outputs (propositions in $L_\out$) of $\pi$ and the high security inputs (propositions in $H_\inp$) of $\pi'$. 
Intuitively, it is satisfied if every input-output behavior observable by a low-security user is compatible with any sequence of high-security inputs, i.e., the low security behavior does not leak information about the high-security inputs.
Many information-flow properties from the literature, e.g., generalized noninterference,  can be expressed with a single quantifier alternation~\cite{ClarksonFKMRS14}.

Model-checking of finite-state systems against \ltl and \hyltl specifications is $\pspace$-complete~\cite{sc85} and $\tower$-complete~\cite{Rabe16diss,MZ20}, respectively.
However, finite-state systems are rather restrictive, as they, for example, cannot model the call-stack of a recursive program.
Similarly, both \ltl and \hyltl are restricted to $\omega$-regular properties. 
Hence, it is natural to investigate whether more general system models and/or more expressive specification languages retain a decidable model-checking problem.

Pushdown systems (pushdown automata (PDA) without an acceptance condition) and context-free languages play a central role in these extensions: Pushdown systems naturally model recursive systems with finite data (and induce in general infinite, but finitely represented, configuration graphs) while context-free languages generalize $\omega$-regular languages with, e.g., abilities to reason about the evolution of the call stack of a program. 
Model-checking pushdown systems against \ltl is \expt-complete~\cite{DBLP:conf/concur/BouajjaniEM97} while the undecidable universality problem for context-free languages can easily be reduced to model-checking finite-state systems against context-free specifications.

However, by considering fragments of context-free languages, decidability can be regained.
Probably the most important fragment here are the visibly pushdown languages~\cite{DBLP:conf/stoc/AlurM04} (see also~\cite{m80} for earlier work on input-driven pushdown automata), where the input letter being processed determines how the stack height of a pushdown automaton evolves. 
This allows to synchronize runs of different visibly pushdown automata (VPA) on the same word and yields, e.g., much better closure properties and better algorithmic properties.
Model-checking visibly pushdown systems against visibly pushdown specifications is \expt-complete~\cite{DBLP:conf/stoc/AlurM04}.

After the introduction of \hyltl, several generalizations have been considered:
\begin{itemize}
    \item Pommellet and Touili showed that model-checking pushdown systems and visibly pushdown systems against \hyltl specifications is undecidable, even for formulas with $\exists\exists$ quantifier prefix~\cite{pt}. On the other hand, they present incomplete methods based on over- and under-approximations and show how these can be used to check security policies.

    \item Frenkel and Sheinvald introduced hypergrammars~\cite{fs}, which extend context-free grammars (over finite words) with trace quantification, just like \hyltl extends \ltl with trace quantification. Their membership problem can be seen as model-checking a regular language against a context-free hyperproperty. It can be solved in exponential time for formulas with $\exists^*$ quantifier prefix, but is undecidable for $\forall^*$ formulas.

    \item \hyctlstar extends the branching-time logic \ctlstar by trace quantification and contains \hyltl. Model-checking finite-state systems against \hyctlstar is also \tower-complete~\cite{ClarksonFKMRS14}.
    Bajwa et al.\ introduced stack-aware \hyctlstar~\cite{bajwa}, in which \hyctlstar formulas are only allowed to relate traces which have the same call-stack access pattern. Thus, all traces under consideration when evaluating a formula are synchronized. Model-checking pushdown systems against stack-aware \hyctlstar is also \tower-complete~\cite{bajwa}.

    \item Gutsfeld et al.\ presented mumbling $H_\mu$~\cite{hmulogic}, a hyperlogic for asynchronous hyperproperties based on the linear-time $\mu$-calculus and show that, under synchronization assumptions, model-checking of visibly pushdown systems against mumbling $H_\mu$ is decidable.
\end{itemize}

\subparagraph*{Our Contribution.} 
Inspired by the work of Frenkel and Sheinvald, we introduce \hylogicpda and \hylogicvpa, which extend $\omega$-PDA and $\omega$-VPA by quantification over (infinite) traces, which yields very natural logics for context-free hyperproperties. 
Thus, the formula~$\forall \pi_0.\ \exists \pi_1.\ \aut$, where $\aut$ is an $\omega$-PDA over $\Sigma \times \Sigma$ is satisfied by a system~$\tsys$ if for every trace~$t_0$ of $\tsys$, there is a trace~$t_1$ of $\tsys$ such that the pair $(t_0, t_1)$ is in $L(\aut)$.
For example, stack-aware noninference and stack-aware observational determinism~\cite{bajwa} can be expressed in \hylogicvpa, as the trace quantifiers range only over traces whose stack behavior is synchronized, which can be captured by $\omega$-VPA.

In its full generality, model-checking finite-state systems against \hylogicpda specifications is undecidable (even for formulas with a single universal quantifier), as one can easily capture universality of $\omega$-PDA.
 On the positive side, the $\exists^*$ fragment can be model-checked in exponential time, as it can be reduced to the nonemptiness problem for $\omega$-PDA.

These results highlight once again that general pushdown automata are too expressive in the context of model-checking hyperproperties. 
Thus, our main focus is on \hylogicvpa, in particular on formulas with one quantifier alternation. 
Recall that $\omega$-VPA are controlled by the input.
We prove that model-checking finite-state systems against \hylogicvpa formulas is decidable for formulas of the form~$\forall^+ \exists^*.\ \aut$ and $\exists^+ \forall^*.\ \aut$ if the control only depends on the letters of the traces quantified in the first quantifier block, but is undecidable if the control only depends on the letters of the traces quantified in the second quantifier block.
The latter result also implies that model-checking is undecidable for almost all quantifier fragments with more than one quantifier alternation. 
The only case we leave open is for formulas with more than one quantifier alternation when control depends on the first block.
Thus, we exhibit an almost complete picture of the decidability border for visibly context-free hyperproperties. 

Our decidability result is proven by extending the game-based characterization of $\forall^* \exists^*$~\hyltl model-checking using prophecies~\cite{BF} (see also~\cite{hyperliveness,prophy,WZtracy}) to context-free specifications.
In our setting, we employ prophecies recognized by $\omega$-VPA and construct a visibly pushdown game, which can be solved effectively~\cite{lms}. 
Both the prophecies and the game construction rely on the fact that the first quantifier block controls the behavior of $\aut$.
Note that our decidable classes contain in particular stack-aware noninference and stack-aware observational determinism~\cite{bajwa}.

\section{Preliminaries}
\label{sec_prelims}

We denote the set of nonnegative integers by $\nats$. 

\subparagraph*{Traces and Transition Systems.}
An alphabet is a nonempty finite set. 
The sets of finite and infinite words over an alphabet~$\Sigma$ are denoted by $\Sigma^*$ and $\Sigma^\omega$, respectively. The length of a finite or infinite word~$w$ is denoted by $\size{w} \in \nats \cup \set{\infty}$. 
For a word~$w$ of length at least $n$ and $i,i'$ with $0 \le i \le i' < n$, we write $w[i,i']$ for the infix of $w$ starting at position $i$ and ending at position~$i'$ (both included).
 Given $k$ infinite words~$w_0,\ldots, w_{k-1}$, let their merge (also known as zip), which is an infinite word over $\Sigma^k$, be defined as
\[
\combine{w_0, \ldots, w_{k-1}}  = \vectordots{w_0(0)}{w_{k-1}(0)}\vectordots{w_0(1)}{w_{k-1}(1)}\vectordots{w_0(2)}{w_{k-1}(2)}\cdots .
\]
We define $\combine{w_0, \ldots, w_{k-1}}$ for finite words~$w_0, \ldots, w_{k-1}$ of the same length analogously.


A \emph{transition system}~$\tsys = (V,E,V_\initmark, \lambda)$ consists of a finite set~$V$ of vertices, a set~$E \subseteq V \times V$ of (directed) edges, a nonempty set~$V_\initmark \subseteq V$ of initial vertices, and a labelling~$\lambda\colon V \rightarrow \Sigma$ of the vertices by labels from some alphabet~$\Sigma$.
We assume that every vertex has at least one outgoing edge.
For $v \in V$, we denote by $\Succ{v}$ the set of its successors.
A \emph{path}~$\rho$ in $\tsys$ is an infinite sequence~$\rho = v_0v_1v_2\cdots$ of vertices with  $v_0 \in V_\initmark$ and $v_{n+1} \in \Succ{v_n}$ for every $n \ge 0$.
Every path~$\rho$ induces its \emph{trace}, the $\omega$-word~$ \lambda(\rho ) = \lambda(v_0)\lambda(v_1)\lambda(v_2)\cdots \in \Sigma^\omega$.
The \emph{language} of $\tsys$ is $\traces{\tsys} = \set{\lambda(\rho) \mid \rho \text{ is a path of $\tsys$}}$.
For a nonempty $V' \subseteq V$, we write $\tsys_{V'}$ to denote the transition system~$(V,E,V', \lambda)$ obtained from $\tsys$ by making $V'$ the set of initial states, and use $\tsys_v$ as shorthand for $\tsys_{\set{v}}$ for $v \in V$.


\subparagraph*{Pushdown Automata.}
An \emph{$\omega$-pushdown automaton} ($\omega$-PDA for short)~$\aut = (Q, \Sigma, \Gamma, q_\initmark, \Delta, F)$ consists of a finite set~$Q$ of states with the initial state~$q_\initmark \in Q$, an input alphabet~$\Sigma$, a stack alphabet~$\Gamma$, a transition relation~$\Delta$ to be specified, and a set~$F \subseteq Q$ of accepting states.
For notational convenience, we define $\Sigma_\epsilon = \Sigma \cup \set{\epsilon}$ and $\Gammabot = \Gamma \cup \set{\bot}$, where $\bot \notin \Gamma$ is a designated stack bottom symbol.
Then, the transition relation~$\Delta$ is a subset of  $Q \times \Gammabot \times \Sigma_\epsilon \times Q \times \Gammabot^{\le 2}$ that we require to neither write nor delete the stack bottom symbol from the stack:
If
$(q, \bot, a, q', \gamma) \in \Delta$, then $\gamma \in \bot \cdot (\Gamma \cup \set{\epsilon}) $, and if $(q, X, a, q', \gamma) \in \Delta$ for $X \in \Gamma$, then $\gamma \in \Gamma^{\le 2}$. 
Given a transition~$\tau = (q,X,a,q',\gamma)$ let $\ell(\tau) = a \in \Sigma_\epsilon$. 
We say that $\tau$ is an $\ell(\tau)$-transition and that $\tau$ is a $\Sigma$-transition, if $\ell(\tau) \in \Sigma$.
For a finite or infinite sequence~$\rho$ over $\Delta$, $\ell(\rho)$ is defined by applying $\ell$ homomorphically to every transition.

A stack content is a finite word in $\bot \Gamma^*$ (i.e., the top of the stack is at the end) and a configuration~$c = (q, \gamma)$ of $\aut$ consists of a state~$q \in Q$ and a stack content~$\gamma$.  
The stack height of $c$ is $\sh(c) = \size{\gamma}-1$.
The initial configuration is $(q_\initmark, \bot)$.

A transition~$\tau = (q, X, a, q', \gamma' ) \in \Delta $ is enabled in a configuration~$c$ if $c = (q, \gamma X)$ for some $\gamma \in \Gammabot^*$.
In this case, we write~$(q, \gamma X) \trans{\tau} (q', \gamma\gamma')$.
A run of $\aut$ is a finite or infinite sequence~$r = c_0 \tau_0 c_1 \tau_1 c_2 \tau_2 \cdots$ of configurations and transitions with $c_n \trans{\tau_n}c_{n+1}$ for every $n$. A finite run is required to end with a configuration. 
A run is initial if it starts in the initial configuration.
An infinite run~$r= c_0 \tau_0 c_1 \tau_1 c_2 \tau_2 \cdots$ is a run of $\aut$ on $w \in \Sigma^\omega$, if $w = \ell(\tau_0 \tau_1 \tau_2 \cdots) $ (this implies that $\rho$ contains infinitely many $\Sigma$-transitions). 
We say that $r$ is accepting if there are infinitely $n$ such that the state of $c_n$ is in $F$, i.e., we consider Büchi acceptance.
The language~$L(\aut)$ recognized by an $\omega$-PDA~$\aut$ contains all $w \in \Sigma^\omega$ such that $\aut$ has an accepting run on $w$.

\emph{Visibly pushdown automata} are defined with respect to a partition~$(\Sigmacall, \Sigmareturn, \Sigmaskip)$ of the input alphabet into calls~(letters in $\Sigmacall$), returns~(letters in $\Sigmareturn$), and skips~(letters in $\Sigmaskip$) and have to satisfy the following conditions:
\begin{itemize}
    
    \item A letter~$a \in \Sigmacall$ is only processed by transitions of the form~$(q, X, a, q', XY)$ with $X\in \Gammabot$, i.e., some stack symbol~$Y$ is pushed onto the stack.
    
    \item A letter~$a \in \Sigmareturn$ is only processed by transitions of the form~$(q, X, a, q', \epsilon)$ with $X \neq \bot$ or the form~$(q, \bot, a, q',\bot)$, i.e., the topmost stack symbol is removed, or if the stack is empty, it is left unchanged.
    
    \item A letter~$a \in \Sigmaskip$ is only processed by transitions of the form~$(q, X, a, q',X)$ with $X \in \Gammabot$, i.e., the stack is left unchanged.

    \item There are no $\epsilon$-transitions.
    
\end{itemize}
Note that we allow, w.l.o.g., the automata to access the top stack symbol during calls and skips (see \cite[Section~2.1]{DBLP:conf/stoc/AlurM04}).
An $\omega$-PDA is a \emph{visibly} $\omega$-PDA ($\omega$-VPA for short), if there is a partition of its input alphabet into $(\Sigmacall, \Sigmareturn, \Sigmaskip)$ satisfying the conditions above.

Let $\Sigma$ be partitioned into $(\Sigmacall, \Sigmareturn, \Sigmaskip)$, let $\Sigma'$ be partitioned into $(\Sigmacall',\Sigmareturn', \Sigmaskip')$, and let $f\colon \Sigma \rightarrow \Sigma'$. We say that $f$ is a renaming, if $f(\Sigmacall) \subseteq \Sigmacall'$, $f(\Sigmareturn) \subseteq \Sigmareturn'$, and $f(\Sigmaskip) \subseteq \Sigmaskip'$, i.e., $f$ preserves the type of the letter.

\begin{proposition}[\cite{DBLP:conf/stoc/AlurM04}]
Let $\aut_0$ and $\aut_1$ be $\omega$-VPA over the same alphabet~$\Sigma$ with the same partition~$(\Sigmacall, \Sigmareturn, \Sigmaskip)$. Furthermore, let $f\colon \Sigma \rightarrow \Sigma'$ be a renaming where $\Sigma'$ is partitioned into $(\Sigmacall', \Sigmareturn', \Sigmaskip')$.
Then:
\begin{itemize}
    \item $L(\aut_0) \cap L(\aut_1)$ is recognized by an $\omega$-VPA with the partition~$(\Sigmacall, \Sigmareturn, \Sigmaskip)$ and polynomial size in $\size{\aut_0}$ and $\size{\aut_1}$.
    \item $L(\aut_0) \cup L(\aut_1)$ is recognized by an $\omega$-VPA with the partition~$(\Sigmacall, \Sigmareturn, \Sigmaskip)$ and polynomial size in $\size{\aut_0}$ and $\size{\aut_1}$.
    \item $\Sigma^\omega \setminus L(\aut_0)$ is recognized by an $\omega$-VPA with the partition~$(\Sigmacall, \Sigmareturn, \Sigmaskip)$ and exponential size in $\size{\aut_0}$.
    \item $f(L(\aut_0))$ is recognized by an $\omega$-VPA with the partition~$(\Sigmacall', \Sigmareturn', \Sigmaskip')$ and size~$\size{\aut_0}$.
\end{itemize}
\end{proposition}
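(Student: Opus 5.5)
Throughout, I will exploit one fact: in an $\omega$-VPA the letter being read determines the shape of the stack operation. Hence two $\omega$-VPA over the same partition~$(\Sigmacall, \Sigmareturn, \Sigmaskip)$, run on the same word, have stack heights that evolve in lockstep. I treat the four items in turn.

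\emph{Intersection and union.} For intersection I will use the product construction. The states are $Q_0 \times Q_1 \times \set{0,1}$, where the extra bit is the standard flag for the intersection of Büchi conditions. The stack alphabet is $\Gamma_0 \times \Gamma_1$, and $\bot$ is identified with the pair~$(\bot,\bot)$. A call~$a$ pushes the pair~$(Y_0,Y_1)$ whenever the components push $Y_0$ and $Y_1$. A return pops the pair, or leaves $\bot$ unchanged. A skip leaves the stack unchanged. Because the automata are visibly, both components always perform the same kind of stack operation, so this simulation is exact and the product is again visibly with the same partition. Its size is polynomial. For union I will take a disjoint union with a fresh initial state, which copies the outgoing transitions of both initial states. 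Since there are no $\epsilon$-transitions, I will instead guess the component on the first letter. The remaining run is the run of the chosen component, and the size is linear.

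\emph{Renaming.} I keep the states and the stack alphabet. Every transition $(q,X,a,q',\gamma)$ is replaced by $(q,X,f(a),q',\gamma)$. Since $f$ preserves letter types, the new transitions satisfy the visibly constraints for the partition of $\Sigma'$. Runs on $w$ correspond to runs on $f(w)$, and in the other direction a run on $w'$ yields a preimage word~$w$. So the language is exactly $f(L(\aut_0))$.

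\emph{Complementation.} This is the main obstacle, because $\omega$-VPA cannot be determinized in general. I will follow the Alur--Madhusudan approach. Decompose $w$ into a sequence of maximal well-matched blocks, entered through pending calls that are never matched. Abstract the behaviour of $\aut_0$ on each well-matched block by a summary: a relation over $Q \times Q$, annotated with whether an accepting state was visited. Summaries can be computed deterministically on the stack. At a call, push the current summary together with the call letter. At the matching return, combine the popped summary with the summary of the inner block. The number of summaries is $2^{O(\size{Q}^2)}$. The result is a pseudo-run word over a finite alphabet of summaries and pending calls, and $w \in L(\aut_0)$ holds iff this word lies in an $\omega$-regular language recognized by a Büchi automaton of polynomial size. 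To complement, I will build an $\omega$-VPA that simultaneously does the following: it computes the summaries deterministically; it nondeterministically guesses which calls are pending, pushing a marker and verifying that marked calls are never popped; and it runs a complement Büchi automaton on the pseudo-run, which is exponential by Safra/Kupferman--Vardi. The total size is exponential in $\size{\aut_0}$. The delicate step is the verification of the pending-call guesses, which I will handle with Büchi conditions that force unmarked calls to be returned; for this I cite the original construction.
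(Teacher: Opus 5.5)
Your constructions for intersection, union and renaming are the standard ones and are correct. The paper gives no proof of this proposition and only cites Alur and Madhusudan. Where it does spell out complementation (proof of \cref{thm_vpapione}), it uses the route of L\"oding et al.: determinize into a stair VPA with exponential blowup, complement it for free, and translate back with polynomial blowup. Your summary-and-pseudo-run approach is the Alur--Madhusudan one and is a legitimate alternative. However, the decomposition you describe fails as stated.

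Suppose $w$ has only finitely many pending calls (in particular, none). Then the part after the last pending call is an \emph{infinite} word in which every call is matched. This causes three problems:
\begin{itemize}
\item it is not a finite well-matched block and has no finite summary;
\item your pseudo-run becomes a finite word, on which no B\"uchi automaton can run;
\item your progress condition (``unmarked calls must be returned'') never sees this block completed.
\end{itemize}
So your construction does not handle such words correctly.

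The fix, which is what the original construction does, is to cut $w$ at every top-level position, i.e., at every step in the sense of \cref{remark_steps}. Each piece is then one of the following:
\begin{itemize}
\item a pending call;
\item a skip;
\item a return on the empty stack (these can occur before the first pending call, and your sketch omits them);
\item a minimal chunk $c\,u\,r$, consisting of a matched call, its matching return, and everything in between.
\end{itemize}
Only the chunks are replaced by summaries, and the pseudo-run is then always infinite. Evaluating the B\"uchi condition of the complemented pseudo-run automaton only at the moments a top-level pseudo-letter is emitted is exactly what forces every call guessed as matched to be returned. Your safety check on marked stack symbols still handles calls guessed as pending.

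One further point specific to this paper's model: calls and skips read the top stack symbol, so a chunk's behaviour, and that of top-level skips, depends on the symbol below it. You should either first move the top symbol into the control state (the paper notes this is w.l.o.g.\ with polynomial cost) or index the summaries by that symbol.

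With these repairs you obtain an $\omega$-VPA of size exponential in $|\aut_0|$ over the same partition, so the claimed bound stands.
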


\section{Logics for Context-Free Hyperproperties}
\label{sec_logic}

In this section, we introduce our logics and present some preliminary results.

\subparagraph*{HyperPDA.}
Let $\var = \set{\pi_0, \pi_1, \pi_2, \ldots}$ be \emph{the} set of trace variables.
A formula of \hylogicpda has the form~$\phi = \Q_0 \pi_0.\ Q_1 \pi_1.\ \ldots \Q_{k-1} \pi_{k-1}.\ \aut$ for some $k \ge 1$ where each $\Q_i$ is either an existential or universal quantifier and where $\aut$ is an $\omega$-PDA over an alphabet of the form~$\Sigma^k$. 
We call $\aut$ the automaton of $\phi$ and $k$ the arity of both $\phi$ and $\aut$.
As usual, we classify formulas of \hylogicpda by their quantifier alternations. Let $n \ge 1$.
$\Sigma_n$ ($\Pi_n$) contains all formulas with $n-1$ quantifier alternations beginning with an existential (universal) quantifier.

The semantics of \hylogicpda is defined with respect to a {trace assignment}, a partial mapping~$\Pi \colon \var \rightarrow \Sigma^\omega$. The assignment with empty domain is denoted by $\Pi_\emptyset$. Given a trace assignment~$\Pi$, a variable~$\pi$, and a trace~$t \in \Sigma^\omega$ we denote by $\Pi[\pi \rightarrow t]$ the assignment that coincides with $\Pi$ everywhere but at $\pi$, which is mapped to $t$. 

For sets~$T \subseteq \Sigma^\omega$ of traces and trace assignments~$\Pi$ we define 
\begin{itemize}
	\item $(T, \Pi) \models \exists \pi_j.\ \phi$ if there exists a trace~$t \in T$ such that $(T,\Pi[\pi_j \rightarrow t]) \models \phi$, 
	\item $(T, \Pi) \models \forall \pi_j.\ \phi$ if for all traces~$t \in T$: $(T,\Pi[\pi_j \rightarrow t]) \models \phi$, and
    \item $(T, \Pi) \models \aut$ if $\aut$ accepts $\combine{\Pi(\pi_0), \ldots, \Pi(\pi_{k-1})}$, where $k$ is the arity of $\aut$.
\end{itemize}

We say that $T$ {satisfies} a formula~$\phi$ if $(T, \Pi_\emptyset) \models \phi$. In this case, we write $T \models \phi$ and say that $T$ is a {model} of $\phi$. 
A transition system~$\tsys$ satisfies $\phi$, written $\tsys \models \phi$, if $\traces{\tsys}\models \phi$. 

\begin{remark}
\label{remark_hyltlsubsump}
\hylogicpda subsumes \hyltl, as quantifier-free \hyltl formulas can be translated into Büchi automata (which are $\omega$-PDA that do not use their stack), as they are (essentially) \ltl formulas. Hence, all lower bounds for \hyltl apply to \hylogicpda, e.g., \hylogicpda satisfiability is $\Sigma_1^1$-hard~\cite{DBLP:journals/lmcs/FortinKTZ25} and we conjecture that the problem is  $\Sigma_1^1$-complete.
\end{remark}

The \hylogicpda model-checking problem asks, given a transition system~$\tsys$ and a \hylogicpda formula~$\phi$, whether $\tsys \models \phi$. The model-checking problem for fragments~$\Sigma_n$ or $\Pi_n$ is defined by restricting the input formulas to the fragment.
The following result follows from emptiness of $\omega$-PDA being decidable respectively universality being undecidable, where the lower bound for $\Sigma_1$ is obtained by a reduction from the intersection problem for DFA~\cite{Kozen}.

\begin{theorem}
\label{thm_altfreefragments}
\hfill
\begin{enumerate}
    \item\label{item_existentialfragment} \hylogicpda model-checking for $\Sigma_1$ formulas is in \expt and \pspace-hard.
    \item\label{item_universalfragment} \hylogicpda model-checking for $\Pi_1$ formulas is undecidable.
\end{enumerate}
\end{theorem}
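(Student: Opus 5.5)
For the upper bound in Item~\ref{item_existentialfragment}, let $\phi = \exists \pi_0.\ \cdots \exists \pi_{k-1}.\ \aut$ and $\tsys = (V,E,V_\initmark,\lambda)$. We have $\tsys \models \phi$ if and only if $L(\aut) \cap \set{\combine{t_0,\ldots,t_{k-1}} \mid t_j \in \traces{\tsys}}$ is nonempty.

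The second set is recognized by the $k$-fold product of $\tsys$, viewed as a Büchi automaton. Its states are $V^k$ together with a fresh initial state, every state is accepting, and it reads $(\lambda(v_0),\ldots,\lambda(v_{k-1}))$ when moving into $(v_0,\ldots,v_{k-1})$. The product of this Büchi automaton with the $\omega$-PDA $\aut$ is again an $\omega$-PDA, of size $\size{\aut}\cdot\size{V}^k$. Here $\epsilon$-transitions of $\aut$ leave the $\tsys$-component unchanged, and acceptance is inherited from $\aut$ since the $\tsys$-component is trivially accepting. Emptiness of $\omega$-PDA is decidable in time polynomial in the automaton, for instance via the standard saturation/summary-based check for Büchi pushdown systems. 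Since $\size{V}^k$ is exponential in the input (with $k$ part of the formula), this gives an \expt algorithm.

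For \pspace-hardness, I reduce from emptiness of the intersection of DFA $\autd_0,\ldots,\autd_{k-1}$ over an alphabet $\Delta$, which is \pspace-hard~\cite{Kozen}. Let $\tsys$ be the transition system generating exactly $\Delta^\omega$ (one vertex per letter, complete edge relation, all vertices initial), possibly with an end marker $\#$. Then $\tsys$ has all traces $\Delta^*\#^\omega$ plus all of $\Delta^\omega$. The formula is $\exists\pi_0\ldots\exists\pi_{k-1}.\ \aut$, where $\aut$ is a Büchi automaton (no stack) of size polynomial in $\sum_j \size{\autd_j}$. It checks that all components are equal, which needs only a letter-by-letter comparison and no state blow-up. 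It also checks that component~$j$ has a prefix $w\#$ with $w$ accepted by $\autd_j$, running $\autd_j$ on component~$j$ in parallel.

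The delicate point is that running all the $\autd_j$ in one product would be exponential. To avoid this, I use $k$ copies of one trace: $\aut$ runs the DFAs as separate components. Checking each component only against its own DFA is still a product of the $\autd_j$, so that alone does not suffice. The fix is to have $\aut$ nondeterministically track, for each $j$, only $\autd_j$'s state on track~$j$. Encoding this in a polynomial-size automaton requires care: let the traces carry DFA states. Take $\tsys$ with vertices $\bigcup_j$ (states of $\autd_j$ $\times\, \Delta$), so each trace $t_j$ can encode a run of $\autd_j$. The automaton $\aut$ then checks locally that track~$j$ is a correct accepting run of $\autd_j$ and that all tracks carry the same letters. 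Both checks are local and need only polynomially many states. I expect this encoding to be the main obstacle.

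For Item~\ref{item_universalfragment}, I reduce universality of $\omega$-PDA, which is undecidable (for instance via the finite-word case by appending $\#^\omega$). Given an $\omega$-PDA $\autb$ over $\Sigma$, take $\tsys$ with $\traces{\tsys} = \Sigma^\omega$ and $\phi = \forall\pi_0.\ \autb$, with arity $1$. Then $\tsys\models\phi$ if and only if $L(\autb) = \Sigma^\omega$.
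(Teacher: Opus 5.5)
Your \expt upper bound and your reduction for Item~2 are the paper's arguments. The gap is in the \pspace-hardness, at exactly the step you flag: your final encoding does not give a polynomial-size automaton. Write $Q_j$ and $\delta_j$ for the states and transition function of $\autd_j$.

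There are two ways to read your construction, and both fail. If the edges of $\tsys$ on $\bigcup_j (Q_j\times\Delta)$ are unconstrained, then $\aut$ must check $\delta_j(q_j,a)=q_j'$ between consecutive letters on every track. Since $\aut$ reads all $k$ tracks in lockstep, it must in general remember a tuple in $\prod_j Q_j$, which is the very product you wanted to avoid. Nondeterministically following a single track does not help, because the condition is a conjunction over all tracks. If instead the edges of $\tsys$ follow $\delta_j$, the remaining checks are indeed letter-local. But the labels still expose DFA states, so $\aut$ needs a transition for every letter $((q_0,a),\ldots,(q_{k-1},a))$ with arbitrary $q_j\in Q_j$. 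Transitions are explicit tuples over the product alphabet, so this means $|\Delta|\cdot\prod_j |Q_j|$ transitions, again exponential.

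The missing idea is to move the DFAs entirely into $\tsys$ and let the labels forget the DFA states. Your vertex set $\bigcup_j (Q_j\times\Delta)$ is already the right one; the edges and labels are what must change:
\begin{itemize}
\item In copy $j$, give $(q,a)$ edges to $(\delta_j(q,a),b)$ for all $b$, and to a $\#$-sink of copy $j$ if $\delta_j(q,a)$ is accepting.
\item Make the vertices $(q^0_j,a)$ initial, where $q^0_j$ is the initial state of $\autd_j$; make the sink initial too if $q^0_j$ is accepting.
\item Label $(q,a)$ only by $(a,j)$, and label the sink of copy $j$ by $(\#,j)$.
\end{itemize}
Then the traces of copy $j$ that contain $\#$ are exactly the tagged words $w\#^\omega$ with $w\in L(\autd_j)$. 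The automaton $\aut$ only has to check that all tracks carry the same letter, with track $j$ tagged $j$, and that $\#$ eventually occurs. This uses only the $|\Delta|+1$ letters $((a,0),\ldots,(a,k-1))$ and $((\#,0),\ldots,(\#,k-1))$. So two states and polynomially many transitions suffice, relying on automata being allowed to be incomplete. The tag $j$ is what forces the $j$-th quantified trace to come from the copy of $\autd_j$. This is the paper's construction.
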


\begin{proof}
\ref{item_existentialfragment}.) We have $\tsys \models \exists \pi_0.\ \ldots \exists \pi_{k-1}.\ \aut$ if and only if 
\[
\set{\combine{t_0, \ldots, t_{k-1}} \mid t_0, \ldots, t_{k-1} \in \traces{\tsys}} \cap L(\aut) \neq \emptyset.
\]
As the set on the left-hand side of the intersection is $\omega$-regular, i.e., recognized by some Büchi automaton (see, e.g.,~\cite{GTW02} for definitions), and languages of $\omega$-PDA are effectively closed under intersections with $\omega$-regular languages~\cite{cg2}, the model-checking problem for $\Sigma_1$ formulas boils down to emptiness-checking for $\omega$-PDA.
The resulting $\omega$-VPA is of size~$\size{\tsys}^k \cdot\size{\aut}$. 
Thus, as emptiness can be decided in polynomial-time~\cite{pdaemptiness}, we obtain membership in \expt.

To prove the \pspace lower bound, we present a reduction from the intersection problem for DFA: given a sequence~$\autd_0, \ldots, \autd_{k-1}$ of DFA, determine whether $\bigcap_{i=0}^{k-1} L(\autd_i)$ is nonempty.
Kozen showed that this problem is \pspace-complete~\cite{Kozen}.

Given such a sequence~$\autd_0, \ldots, \autd_{k-1}$ of DFA (w.l.o.g., over some joint alphabet~$\Sigma$), one can construct a transition system~$\tsys$ such that 
\[
\traces{\tsys} = \bigcup_{i=0}^{k-1} \set{(w(0),i)\cdots (w(n-1),i) \#^\omega\cdots \mid w(0)\cdots w(n-1) \in L(\autd_i)} \cup (\Sigma\times\set{i})^\omega.
\]
To this end, one takes the disjoint union of the $\aut_i$, moves the transition labels of the DFA to the states (which requires to extend the state set), and adds $i$ to the label of the states resulting from $\aut_i$.

Furthermore, one can construct a Büchi automaton for the language
\[
\set{
\combine{w_0, \ldots, w_{k-1}} \mid \text{there exists } w(0) \cdots w(n-1) \in \Sigma^* \text{ s.t.\ } w_i = (w(0),i) \cdots (w(n-1),i) \#^\omega \text{ for all }i
}.
\]
This Büchi automaton can be turned into an equivalent $\omega$-VPA~$\aut$. Both $\tsys$ and $\aut$ are polynomial in the sum of the sizes of the $\autd_i$. For $\aut$, we rely on the fact that we allow automata to be incomplete, i.e., not all (exponentially many) letters must have transitions, only the (polynomially many) required to accept the words in $L(\aut)$ as described above.

Now, the languages of the $\aut_i$ have a nonempty intersection if and only if $\tsys \models \exists \pi_0.\ \ldots \exists \pi_{k-1}.\ \aut$.

\ref{item_universalfragment}.) 
Let $\aut$ be an $\omega$-PDA over some alphabet~$\Sigma$ and let $\tsys_U$ be a transition system with $\traces{\tsys_U} = \Sigma^\omega$, which can be constructed with $\size{\Sigma}$ many vertices.
Then, we have $\tsys_U \models \forall \pi_0.\ \aut$ if and only if $\aut$ is universal.
As universality for $\omega$-PDA is undecidable~\cite{cgdet}, so is the model-checking problem for $\Pi_1$ formulas.
\end{proof}

\begin{remark}
Note that we do not use the stack in the lower bound for \cref{thm_altfreefragments}.\ref{item_existentialfragment}, i.e., one could conjecture that the problem can be shown \expt-hard by utilizing the stack.
\end{remark}

As our undecidability result holds even for formulas with a single universal quantifier, we obtain undecidability for any quantifier-fragment that allows universal quantifiers.

\begin{corollary}
\hylogicpda model-checking is undecidable for all $\Sigma_n$ with $n>1$ and all $\Pi_n$ with $n \ge 1$, i.e., in particular for the full logic.
\end{corollary}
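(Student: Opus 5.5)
The corollary follows by reducing the $\Pi_1$ case of \cref{thm_altfreefragments}.\ref{item_universalfragment} to each of the listed fragments. Every such fragment contains formulas whose quantifier prefix has at least one universal quantifier. For $\Pi_n$ with $n \ge 1$, a formula may begin with a universal block. For $\Sigma_n$ with $n > 1$, the prefix begins with an existential block followed by a universal one. The plan is therefore to pad a $\Pi_1$ formula $\forall \pi_0.\ \aut$ with dummy quantifiers so that it lands syntactically in the target fragment, while its truth value stays unchanged.

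Concretely, start from the reduction in the proof of \cref{thm_altfreefragments}.\ref{item_universalfragment}: $\tsys_U \models \forall \pi_0.\ \aut$ iff $\aut$ is universal. Fix a target prefix $Q_0 \pi_0 \ldots Q_{k-1}\pi_{k-1}$ of the desired alternation type, and choose an index $j$ with $Q_j = \forall$. Build an $\omega$-PDA $\aut'$ over $\Sigma^k$ that reads $\combine{w_0,\ldots,w_{k-1}}$, ignores every component except the $j$-th, and simulates $\aut$ on $w_j$. This is obtained by replacing each $\Sigma$-transition labelled $a$ with transitions labelled by every tuple whose $j$-th entry is $a$; stack behaviour and acceptance are untouched, so $\aut'$ accepts exactly the merges whose $j$-th component lies in $L(\aut)$.

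Then show $\tsys_U \models Q_0\pi_0\ldots Q_{k-1}\pi_{k-1}.\ \aut'$ iff $\aut$ is universal. The key observation is that $\traces{\tsys_U} = \Sigma^\omega$ is nonempty and the matrix depends only on $\pi_j$. Consequently, all quantifiers other than $Q_j$ are vacuous: an existential over a nonempty set, or a universal, applied to a formula not depending on its variable leaves it unchanged. The formula is thus equivalent to $\forall \pi_j.\ \aut$, which holds iff $L(\aut)=\Sigma^\omega$. A short induction over the prefix makes this rigorous.

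The only mild obstacle is this bookkeeping of the vacuous quantifiers, in particular the nonemptiness of the trace set, which is what makes the dummy existentials harmless. Undecidability of $\omega$-PDA universality~\cite{cgdet} then transfers to every listed fragment. Since the full logic contains all of these fragments, its model-checking problem is undecidable as well.
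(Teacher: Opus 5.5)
Your proposal is correct and follows the paper's argument. The paper simply observes that the $\Pi_1$ undecidability already holds for a single universal quantifier, and hence transfers to every fragment that admits one; your padding with vacuous quantifiers over the nonempty trace set $\Sigma^\omega$, using an automaton that reads only the $j$-th component, spells out that implicit embedding.
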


Our preliminary results show that a single universal quantifier makes model-checking undecidable, as  universality for $\omega$-PDA is undecidable. 
Thus, it is prudent to study formulas with restricted classes of $\omega$-PDA for which universality is decidable.
Thus, we restrict ourselves to $\omega$-VPA, which are closed under all Boolean operations, which implies in particular that universality is decidable.

\subparagraph*{HyperVPA.}
\hylogicvpa is the restriction of \hylogicpda to formulas whose automaton is an $\omega$-VPA.
For \hylogicvpa, we can refine the definition of the quantifier fragments~$\Sigma_n$ and $\Pi_n$.
To this end, let $\aut$ be an $\omega$-VPA of arity~$k > 0$ and let $I \subseteq \set{0,1,\ldots, k-1}$.
We say that $\aut$ is controlled by the indexes in $I$, if for all letters~$(a_0, a_1, \ldots, a_{k-1})$ and $(b_0, b_1, \ldots, b_{k-1})$ of $\aut$, $a_i = b_i$ for all $i \in I$ implies that they are both calls, or both returns, or both skips. 
Intuitively, the type of a letter of $\aut$ only depends on the positions in $I$.
The fragments~$\Sigma_n$ and $\Pi_n$ of \hylogicvpa are defined as for \hylogicpda.
Additionally, let $1 \le j \le n$.
$\Sigma_{n,j}$ and $\Pi_{n,j}$ contain the formulas of \hylogicvpa in $\Sigma_n$ and $\Pi_n$, respectively, whose automaton is controlled by the set of indexes of the $j$-th quantifier block.

While \hylogicvpa does not have a negation operator it is nevertheless closed under negation, since a negation can be pushed over quantifiers and since $\omega$-VPA are closed under complementation~\cite{DBLP:conf/stoc/AlurM04}.
Formally, given a formula~$\phi = \Q_0 \pi_0.\ Q_1 \pi_1.\ \ldots \Q_{k-1} \pi_{k-1}.\ \aut$ of \hylogicvpa, we define its negation~$\neg \phi$ as the formula~$\overline{\Q_0} \pi_0.\ \overline{Q_1} \pi_1.\ \ldots \overline{\Q_{k-1}} \pi_{k-1}.\ \overline{\aut}$ where $\overline{\exists} = \forall$, $\overline{\forall} = \exists$, and where $\overline{\aut}$ denotes an $\omega$-VPA accepting the complement of $L(\aut)$.
Note that $\overline{\aut}$ is defined with respect to the same partition of the input alphabet as $\aut$.
Hence, $\aut$ is controlled by some $I\subseteq \set{0,1,\ldots, k-1}$ if and only if $\overline{\aut}$ is controlled by $I$.
Note though that $\overline{\aut}$ may be exponentially larger than $\aut$~\cite{DBLP:conf/stoc/AlurM04}.

\begin{remark}
\label{remark_negprops}
Let $\phi$ be an \hylogicvpa formula, $n>0$, and $1 \le j \le n$.
\begin{enumerate}
    \item\label{remark_negprops_idem} $\phi $ and $\neg\neg\phi$ are equivalent.
    \item\label{remark_negprops_frag_n} $\phi$ is in $\Sigma_n$ if and only if $\neg \phi$ is in $\Pi_n$; and $\phi$ is in $\Sigma_{n,j}$ if and only if $\neg \phi$ is in $\Pi_{n,j}$.
    \item\label{remark_negprops_mc} $\tsys \models \phi$ if and only if $\tsys\not\models\neg\phi$.
\end{enumerate}
\end{remark}

\begin{remark}
\hylogicvpa subsumes \hyltl for the same reason \hylogicpda subsumes \hylogicvpa (see \cref{remark_hyltlsubsump}).
Hence, \hylogicvpa satisfiability is $\Sigma_1^1$-hard and we again conjecture completeness.
\end{remark}

The decidability result for the $\Sigma_1$-fragment of \hylogicpda (\cref{thm_altfreefragments}.\ref{item_existentialfragment}) carries over to \hylogicvpa, as it is a fragment of \hylogicpda.
Similarly, the lower bound carries over, as it does not use the stack of the $\omega$-PDA. 

\begin{corollary}
\hylogicvpa model-checking for $\Sigma_1$ formulas is in \expt and \pspace-hard.
\end{corollary}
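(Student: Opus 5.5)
The corollary follows from \cref{thm_altfreefragments}.\ref{item_existentialfragment} by checking that both halves of that proof remain valid when the automaton is an $\omega$-VPA.

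For the upper bound, every \hylogicvpa formula is also a \hylogicpda formula, since an $\omega$-VPA is in particular an $\omega$-PDA. Hence a $\Sigma_1$ instance $(\tsys, \exists \pi_0.\ \ldots \exists \pi_{k-1}.\ \aut)$ of \hylogicvpa model-checking is a $\Sigma_1$ instance of \hylogicpda model-checking with the same answer. The semantics of the two logics coincide on such formulas, so no translation is needed. Membership in \expt therefore carries over directly, with the same bound $\size{\tsys}^k \cdot \size{\aut}$ on the product automaton followed by a polynomial-time emptiness check.

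For the lower bound, I will re-examine the reduction from DFA intersection non-emptiness~\cite{Kozen} in the proof of \cref{thm_altfreefragments}.\ref{item_existentialfragment} and verify that it produces \hylogicvpa formulas. The transition system $\tsys$ is unchanged. The automaton built there is a Büchi automaton over $\Sigma^k$ that never touches its stack. Any such automaton becomes an $\omega$-VPA by declaring every letter a skip, i.e.\ using the partition $(\emptyset, \emptyset, \Sigma^k)$, and replacing each transition $(q,a,q')$ by $(q,\bot,a,q',\bot)$. The resulting automaton has no $\epsilon$-transitions and leaves the stack unchanged on skips, so it satisfies all the visibility conditions. Its language and its size are unchanged, so the reduction stays polynomial and still produces the same yes/no answer.

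The only point needing care is this last step: one must confirm that the automaton in the original reduction really uses neither $\epsilon$-moves nor the stack, and that incompleteness of the automaton is still permitted for $\omega$-VPA. Both hold by the paper's definitions, since a Büchi automaton reads exactly one letter per transition and the definition of $\omega$-VPA does not require completeness. \pspace-hardness therefore follows.
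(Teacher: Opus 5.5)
Your proposal is correct and matches the paper's argument: the \expt upper bound holds because \hylogicvpa is by definition a fragment of \hylogicpda, and \pspace-hardness holds because the reduction from DFA intersection in \cref{thm_altfreefragments}.\ref{item_existentialfragment} never uses the stack. Your explicit conversion of the stack-free B\"uchi automaton into an $\omega$-VPA in which every letter is a skip simply spells out the step the paper states as ``this B\"uchi automaton can be turned into an equivalent $\omega$-VPA''.
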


On the other hand, due to closure of $\omega$-VPA under complement, model-checking the $\Pi_1$- fragment of \hylogicvpa is also decidable.
Here, the lower bound follows from universality for VPA being \expt-complete~\cite{DBLP:conf/stoc/AlurM04}

\begin{theorem}
\label{thm_vpapione}
\hylogicvpa model-checking for $\Pi_1$ formulas is \expt-complete.
\end{theorem}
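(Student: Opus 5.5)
Both bounds follow from the closure properties of $\omega$-VPA in the proposition of Alur and Madhusudan recalled in \cref{sec_prelims}; the upper bound goes through the negation of the formula and the $\Sigma_1$ case.

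\emph{Upper bound.} Let $\phi = \forall \pi_0.\ \ldots \forall \pi_{k-1}.\ \aut$ and $\tsys$ be given. By \cref{remark_negprops}, $\tsys \models \phi$ if and only if $\tsys \not\models \neg\phi$, where $\neg\phi = \exists \pi_0.\ \ldots \exists \pi_{k-1}.\ \overline{\aut}$. I will not invoke the $\Sigma_1$ algorithm as a black box, because that would give an exponential blow-up in $k$ on top of the exponential complementation. Instead, I check directly whether
\[
T_k \cap L(\overline{\aut}) \neq \emptyset, \quad\text{where } T_k = \set{\combine{t_0, \ldots, t_{k-1}} \mid t_0, \ldots, t_{k-1} \in \traces{\tsys}}.
\]
The language $T_k$ is recognized by a Büchi automaton whose states are $k$-tuples of vertices of $\tsys$; its size is $\size{\tsys}^k$. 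I take the product of this Büchi automaton with $\overline{\aut}$ and check the product for emptiness. The product uses only the stack of $\overline{\aut}$, so it is again an $\omega$-VPA over the same partition; Büchi acceptance is handled by the standard flag construction.

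The complement $\overline{\aut}$ has size $2^{O(\size{\aut}^2)}$ (stated as exponential in the proposition). Since $k$ is at most the size of the formula, the product has size at most exponential in $\size{\tsys}+\size{\aut}$. Emptiness of $\omega$-PDA is decidable in polynomial time, so the whole procedure runs in exponential time.

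\emph{Lower bound.} I reduce universality of $\omega$-VPA, which is \expt-complete~\cite{DBLP:conf/stoc/AlurM04}, to model-checking $\Pi_1$ formulas. This mirrors the proof of \cref{thm_altfreefragments}.\ref{item_universalfragment}. Given an $\omega$-VPA $\aut$ over $\Sigma$, let $\tsys_U$ be the transition system with vertex set $\Sigma$, all edges, all vertices initial and identity labelling. Then $\traces{\tsys_U} = \Sigma^\omega$, and $\tsys_U \models \forall \pi_0.\ \aut$ holds if and only if $L(\aut) = \Sigma^\omega$. The reduction is polynomial. Here the arity is $1$ and $\Sigma^1$ is identified with $\Sigma$, so the partition is inherited unchanged.

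The main point of care is the size accounting in the upper bound. The exponential cost of complementing $\aut$ and the factor $\size{\tsys}^k$ must combine to a single exponential, and nesting the $\Sigma_1$ bound would not guarantee this, which is why I build the product directly. If the universality lower bound in the literature were stated only for finite-word VPA, I would transfer it to $\omega$-VPA by a standard padding argument with a fresh skip letter $\#$: map a finite word $w$ to $w\#^\omega$ and add $\Sigma^*\#$-words with non-$\#$ letters after $\#$ to the language.
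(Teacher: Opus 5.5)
Your proposal is correct and follows the paper's proof. For the upper bound, you intersect the $|\tsys|^k$-state Büchi automaton for $k$-tuples of traces with an exponential-size complement of $\aut$ and check emptiness in polynomial time; your detour through $\neg\phi$ is just a rephrasing of the paper's condition $T_k \cap \overline{L(\aut)} = \emptyset$. For the lower bound, you use the same reduction from $\omega$-VPA universality via a transition system generating $\Sigma^\omega$. Your padding fallback is unnecessary, since the cited \expt-hardness already holds for $\omega$-VPA. As written it would also fail, because the constructed automaton must additionally accept all words in $\Sigma^\omega$ containing no $\#$.
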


\begin{proof}
Let $\aut$ be an $\omega$-VPA.
We have $\tsys \models \forall \pi_0.\ \ldots \forall \pi_{k-1}.\ \aut$ if and only if 
\[
\set{\combine{t_0, \ldots, t_{k-1}} \mid t_0, \ldots, t_{k-1} \in \traces{\tsys}} \subseteq L(\aut),
\]
which is equivalent to 
\[
\set{\combine{t_0, \ldots, t_{k-1}} \mid t_0, \ldots, t_{k-1} \in \traces{\tsys}} \cap \overline{L(\aut)} = \emptyset,
\]
where $\overline{L(\aut)}$ is the complement of $L(\aut)$.

The \expt upper bound follows then from the following bounds:
\begin{itemize}
    \item $\omega$-VPA can be complemented with an exponential blow-up: Löding et al.~\cite{lms} showed that every $\omega$-VPA can be determinized into a stair VPA with an exponential blow-up, that stair VPA are complementable without a size blowup, and that stair VPA can be turned into equivalent standard $\omega$-VPA with a polynomial blowup. 

    Hence, one can construct an exponentially-sized $\omega$-VPA for $\overline{L(\aut)}$.
    
    \item One can construct, by taking $k$ copies of $\tsys$, a Büchi automaton for $\set{\combine{t_0, \ldots, t_{k-1}} \mid t_0, \ldots, t_{k-1} \in \traces{\tsys}}$, which can be turned into an equivalent $\omega$-VPA of exponential size.

    \item As $\omega$-VPA are closed under intersection, we can also construct an exponentially-sized $\omega$-VPA for $ \set{\combine{t_0, \ldots, t_{k-1}} \mid t_0, \ldots, t_{k-1} \in \traces{\tsys}} \cap \overline{L(\aut)}$.

    \item Emptiness of $\omega$-PDA (and thus of $\omega$-VPA) can be checked in polynomial time~\cite{pdaemptiness}.
\end{itemize}

For the matching lower bound, recall that universality of $\omega$-VPA is \expt-complete~\cite{DBLP:conf/stoc/AlurM04}. Thus, noting that an $\omega$-VPA $\aut$ (with alphabet~$\Sigma$) is universal if and only if $\tsys_\Sigma \models \forall \pi_0.\ \aut$, where $\tsys_\Sigma$ is a transition system with $\traces{\tsys_\Sigma} = \Sigma^\omega$, yields the desired reduction from $\omega$-VPA universality to \hylogicvpa model-checking for $\Pi_1$ formulas.
\end{proof}

In the next two sections, we consider the fragments~$\Sigma_2$, i.e., formulas with quantifier-prefix~$\exists^*\forall^* $ and $\Pi_2$, i.e., formulas with quantifier-prefix~$\forall^*\exists^*$. In \cref{sec_dec}, we show that model-checking is decidable for the fragments~$\Sigma_{2,1}$ and $\Pi_{2,1}$, i.e., if the automaton is controlled by the first quantifier block.
Dually, in \cref{sec_undec}, we show that model-checking is undecidable for the fragments~$\Sigma_{2,2}$ and $\Pi_{2,2}$, i.e., if the automaton is controlled by the second quantifier block.
Finally, using these results, we show at the end of \cref{sec_undec} that (almost all) remaining fragments have an undecidable model-checking problem.

\section{Fragments of HyperVPA with Decidable Model-Checking}
\label{sec_dec}

In this section, we first show that the model-checking problem is decidable for the fragment~$\Pi_{2,1}$, i.e., formulas of the form~$\forall^*\exists^*.\ \aut$ such that the stack height in $\aut$ only depends on the universally quantified traces. This then also implies decidability for $\Sigma_{2, 1}$.

In the following, we focus on formulas of the form~$\forall \pi.\ \exists \pi'.\ \aut$ (i.e., with a single variable in each quantifier block) and with variables $\pi$ and $\pi'$ instead of $\pi_0$ and $\pi_1$. Both assumptions simplify our notation in the following proof. The renaming of variables is inconsequential while we comment on how to generalize the proof to general $\forall^*\exists^*$ formulas in \cref{remark_generalcase}.

Intuitively, we capture the semantics of $\tsys \models \forall \pi.\ \exists \pi'.\ \aut$ by a two-player game between Falsifier (constructing a trace~$t$ of $\tsys$ for $\pi$) and Verifier (constructing a trace $t'$ of $\tsys$ for $\pi'$). Verifier wins if $\combine{t,t'} \in L(\aut)$.
To obtain decidability of the game, the players need to pick their traces in alternation. 
But this puts Verifier at a disadvantage as she has only access to a prefix of $t$ when determining $t'$, while $t'$ may need to depend on all letters of $t$.

Assume, e.g., that $\aut$ is equivalent to the \ltl formula~$a_{\pi'} \leftrightarrow \F a_{\pi}$, i.e., Verifier needs to pick an $a$ in the first round if and only if Falsifier plays an $a$ in some round. 
Verifier does not have a winning strategy, even though $\tsys$ may satisfy the formula.
However, a single bit of information about Falsifier's move (\myquote{will $t$ contain an $a$?}) is sufficient for Verifier to win.

We define a game where Falsifier, in every round, has to make binding predictions about the membership of the suffix of $t$ starting in the current round for a precomputed list of languages (that only depends on $\aut$ and $\tsys$). 
Such \myquote{prophecies} have previously been applied to \hyltl model-checking with a single~\cite{BF} and any number~\cite{prophy} of quantifier alternations.
In the context-free setting, the prophecies need to give Verifier also information about the evolution of the stack height (which is fully controlled by Falsifier, i.e., he can indeed make predictions about it).

For example, assume that $\aut$ accepts $\combine{t,t'}$ if and only if either
\begin{itemize}
    \item $t$ has a nonempty prefix that causes the stack height of a run of $\aut$ on $\combine{t,t'}$ to reach stack height zero (which only depends on $t$ due to our assumption on control of $\aut$) and $t'(0) = t(n)$, where $n > 0$ is the minimal position with this property, or
    \item $t$ does not have such a prefix and $t'(0) = \#$ for some special symbol~$\#$.
\end{itemize}
This specification does require Verifier not only to have information about the evolution of the stack height (which is under the control of Falsifier) but also about which letter Falsifier is playing when reaching stack height zero for the first time again.
In general, one can even construct examples where Verifier needs information about moves at the next time the current stack height (which may possibly be nonzero) is reached again for the first time.

In the prophecies we define later, we will actually require Falsifier to provide even more information: he does not only have to provide information about (certain) future letters, but also with which transition they may be processed.

Our main result is that for every $\omega$-VPA~$\aut$ and transition system~$\tsys$, there is a computable list of prophecies so that Verifier wins the game with these prophecies if and only if $\tsys \models \forall \pi.\ \exists \pi'.\ \aut$, and the resulting game can be solved effectively.

\subparagraph*{Gale-Stewart Games.}
A Gale-Stewart game~$\gsgame(L)$ is given by an $\omega$-language~$L \subseteq (\SigmaI \times \SigmaO)^\omega$, its winning condition.
It is played between Falsifier and Verifier in rounds~$i = 0,1,2, \ldots$:
In each round, first Falsifier picks a letter~$\alpha(i) \in \SigmaI$, then Verifier picks a letter~$\beta(i) \in \SigmaO$.
After $\omega$ rounds, the players have constructed an outcome~$\combine{\alpha(0)\alpha(1)\alpha(2)\cdots, \beta(0)\beta(1)\beta(2)\cdots}$
which is winning for Verifier if it is in $ L$.
A strategy for Verifier in $\gsgame(L)$ is a mapping~$\sigma \colon \SigmaI^* \rightarrow \SigmaO$. 
An outcome as above is consistent with $\sigma$, if $\beta(i) = \sigma(\alpha(0) \cdots \alpha(i))$ for all $i$. 
A strategy~$\sigma$ for Verifier is winning if every outcome that is consistent with $\sigma$ is in $L$. 
Verifier wins $\gsgame(L)$ if she has a winning strategy for $\gsgame(L)$.

\begin{proposition}[\cite{lms}]
\label{prop_games}
    The following problem is \twoexp-complete: Given an $\omega$-VPA~$\aut$ over a product alphabet~$\SigmaI \times \SigmaO$, does Verifier win $\gsgame(L(\aut))$?
\end{proposition}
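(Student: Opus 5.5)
The plan is to follow the approach of Löding, Madhusudan and Serre~\cite{lms}. We reduce the Gale-Stewart game~$\gsgame(L(\aut))$ to a parity game on the configuration graph of a pushdown system and then apply the classical algorithm for pushdown parity games. The upper bound is exponential in the size of a deterministic automaton for $L(\aut)$. The lower bound comes from encoding alternating exponential-space Turing machines, since $\textsc{AExpSpace} = \twoexp$.

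\emph{Upper bound.} First determinize $\aut$, in the same way as in the proof of \cref{thm_vpapione}. By~\cite{lms}, $\aut$ can be turned into an equivalent deterministic stair $\omega$-VPA~$\autd$ with a parity (or Muller) acceptance condition. Only the states reached at \emph{steps} are evaluated, i.e., at positions after which the stack height never drops below its current value. $\autd$ has exponentially many states and polynomially many priorities in $\size{\aut}$. Next build a pushdown game arena. A configuration of the arena consists of a state of $\autd$, its stack content, and a flag saying whose turn it is. At a Falsifier vertex, he chooses $a \in \SigmaI$ and moves to an intermediate Verifier vertex that remembers $a$. Verifier then chooses $b \in \SigmaO$, and $\autd$ performs its unique transition on $(a,b)$, which is a push, a pop or a skip according to the type of $(a,b)$. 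Since $\autd$ is deterministic, plays of this arena correspond bijectively to outcomes of $\gsgame(L(\aut))$. Verifier's strategies $\SigmaI^*\to\SigmaO$ correspond to strategies in the arena, so Verifier wins $\gsgame(L(\aut))$ if and only if she wins the pushdown game with the stair parity condition.

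The main obstacle is that the stair condition only looks at steps. I would handle this by the standard trick of turning it into an ordinary parity condition on a pushdown game. Whenever a call is played, we push, together with the stack symbol, the minimal priority seen since the matching position. We also let one player guess whether the call will ever be matched, in Walukiewicz's style; alternatively, one can use directly that the stair condition is a visibly pushdown (hence ``regular over summaries'') condition that Walukiewicz's reduction handles. Walukiewicz's algorithm solves pushdown parity games in time exponential in the number of control states and priorities. Our control states are already exponential in $\size{\aut}$, so the overall running time is doubly exponential.

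\emph{Lower bound.} Reduce from acceptance of an alternating Turing machine~$M$ with space $2^n$. The players describe a computation of $M$ configuration by configuration. Each cell is written together with its $n$-bit address, so consecutive addresses can be checked by a polynomial-size automaton. Whoever owns the current state chooses the transition: Falsifier for universal states, Verifier for existential states. Verifier's letters are used for her choices, and all of Falsifier's letters are calls or skips, so the stack stores the history. To let $\aut$ detect a cheating player, the opponent may at any point ``challenge'' a cell. The challenge is played with return letters that pop back to the corresponding cell of the previous configuration, identified via its address. There the automaton checks the transition relation of $M$ locally with polynomially many states. I expect some care in making the letter types depend only on the combined letter, and in ensuring that a false challenge loses. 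The resulting automaton has polynomial size, and Verifier wins if and only if $M$ accepts, which gives \twoexp-hardness.
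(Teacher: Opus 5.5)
The paper does not prove this statement itself. It cites L\"oding et al.~\cite{lms} and only remarks that a Gale-Stewart game reduces with linear blowup to their games on visibly pushdown arenas, which is your arena before the product with $\autd$ is taken.

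\textbf{Upper bound.} Your reconstruction has the right architecture, but the step that turns the stair condition into an ordinary parity condition fails as you describe it. A binary guess ``this call will (not) be matched'' cannot be made binding, because matching is decided by later moves. The partition of $\SigmaI\times\SigmaO$ may depend only on the other player's component, so he alone controls those moves. Take a trivial winning condition: universal if Verifier guesses, empty if Falsifier guesses. Then the other player falsifies every guess, and the modified game has the wrong winner.

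Your alternative remark points in the right direction, but Walukiewicz's reduction is for parity conditions and has to be adapted to steps. At a call, Verifier proposes a set $R$ of states of $\autd$ in which she claims the matching return will end. Falsifier then chooses one of two options:
\begin{itemize}
\item \emph{Enter} the hump: the play stops at the matching return, and Verifier wins iff the state reached is in $R$.
\item \emph{Skip} it: he picks some $p\in R$, and play continues from $p$ with the stack as before the call. Only the priority of $p$ counts.
\end{itemize}
No minimal priorities need to be recorded, since the stair condition ignores humps. Every position of an infinite play of the modified game is then a step, so the stair condition becomes a plain parity condition.

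To keep your doubly-exponential count, $R$ must be stored in the stack alphabet, or you build Walukiewicz's finite game directly. Walukiewicz's bound is exponential in control states and priorities, while the stack alphabet enters only polynomially for a fixed number of priorities. Storing $R$ in the control states would add a third exponential.

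\textbf{Lower bound.} Here you take a different route from the paper, which inherits hardness from~\cite{lms}, but your sketch leaves out the crux. A polynomial-size automaton cannot store an $n$-bit address. Nondeterminism does not help, because equality of two addresses is a universal property. So ``pop back to the cell with the same address'' needs an explicit mechanism. For example, during the challenge phase Falsifier must copy the challenged address into his component block by block. A deviation from the copy is detected by guessing a bit position, and popped addresses are compared with the copy letter by letter.

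Likewise, correct increments of consecutive addresses cannot be verified with polynomially many states. Only an error can be detected, by guessing a bit position together with its carry. So all addresses must be written by Falsifier.

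A shorter route is to transfer the hardness result of~\cite{lms} for games on visibly pushdown arenas by encoding the arena into the winning condition:
\begin{itemize}
\item Falsifier's letters carry the claimed current arena state.
\item The type of a combined letter is the type of the move of that state's owner, so the type is determined by the letter itself.
\item Falsifier loses if his claim is false, and a player loses if his move is illegal.
\end{itemize}
This converse encoding is what \twoexp-hardness for Gale-Stewart games actually requires. The reduction mentioned in the paper only goes in the direction needed for membership.
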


Löding et al.\ formally proved that the winner of games played on configuration graphs of visibly pushdown systems (visibly pushdown automata without an acceptance condition) and with winning conditions given by $\omega$-VPA can be determined in doubly-exponential time. Gale-Stewart games with winning conditions given by $\omega$-VPA can be reduced to the games considered by Löding et al.\ with a linear blowup. 

\subparagraph*{Games with Prophecies.}
For the remainder of the section, fix a transition system~$\tsys$ with set~$V$ of vertices and a formula~$\phi = \forall \pi.\ \exists \pi'.\ \aut$, such that $\aut$ is controlled by the first trace. Furthermore, let $\Sigma$ be the alphabet used to label vertices in $\tsys$, i.e., the alphabet of $\aut$ is $\Sigma \times \Sigma$. 
As $\aut$ is an $\omega$-VPA, $\Sigma \times \Sigma$ is partitioned into $(\Sigmacall', \Sigmareturn', \Sigmaskip')$. Finally, as $\aut$ is controlled by the first component, there is a partition of $\Sigma$ into $(\Sigmacall, \Sigmareturn, \Sigmaskip)$ such that $\Sigmacall' = \Sigmacall \times \Sigma$, $\Sigmareturn' = \Sigmareturn \times \Sigma$, and $\Sigmaskip' = \Sigmaskip \times \Sigma$.
We call $(\Sigmacall, \Sigmareturn, \Sigmaskip)$ the projected partition of $\aut$.

To make the predictions of Falsifier indeed binding, we use so-called prophecy variables that are used by Falsifier to make his predictions and then use the winning condition of the game to ensure he loses when he violates a prediction. 
Formally, let $\mathcal{P} $ be a finite set of prophecies (to be defined later), i.e., languages over $\Sigma$, and let $\pvar{P}$ be the prophecy variable associated to $P \in \cal{P}$.

Now, let $\SigmaI = V \times 2^{\set{\pvar{P} \mid P \in \mathcal{P}}}$, let $\SigmaO = V$, let $\projone{\cdot}$ and $\projtwo{\cdot}$ denote the projections from $\SigmaI$ to $V$ and from $\SigmaI$ to $2^{\set{\pvar{P} \mid P \in \mathcal{P}}}$, and let $L(\aut,\tsys,\mathcal{P}) $ be the language
\begin{align*}
&\set{
\combine{\alpha, \beta} \mid  \alpha\in \SigmaI^\omega, \beta\in\SigmaO^\omega, \text{ if $\projone{\alpha}$ is a path of $\tsys$, then $\beta$ is a path of $\tsys$ and},\\
&\qquad\left[ 
\forall i \in \nats.\ \forall P \in \mathcal{P}.\ \pvar{P} \in \projtwo{\alpha(i)} \leftrightarrow \lambda(\projone{\alpha(i)\alpha(i+1)\alpha(i+2)\cdots}) \in P 
\right] \rightarrow\\
&\qquad\qquad\combine{\lambda(\projone{\alpha}),\lambda(\beta)} \in L(\aut)
}    
\end{align*}
expressing that whenever Falsifier's predictions are correct, then the specification expressed by $\aut$ must be satisfied.
Further, we require Falsifier to actually pick a path in $\tsys$. If he does so, then Verifier also needs to pick a path in $\tsys$, otherwise she trivially wins.

\subparagraph*{Prophecies for HyperVPA Model-Checking.}
In the following, for $\tsys$ and $\aut$ as fixed above, we present a finite set~$\mathcal{P}$ of prophecies such that $\tsys \models \forall \pi.\ \exists \pi'.\ \aut$ if and only if Verifier wins $\gsgame(L(\aut,\tsys,\mathcal{P}))$.
Further, we show that $L(\aut,\tsys,\mathcal{P})$ is recognized by an $\omega$-VPA, i.e., the winner of the game can be effectively determined.

We continue by recalling some elementary properties about the evolution of the stack during a run of a pushdown automaton.

\begin{figure}[tb]
    \centering
    \usetikzlibrary{decorations.pathreplacing}
\usetikzlibrary{decorations.pathmorphing}
\usetikzlibrary{shapes,shapes.symbols,automata,arrows,shadows}
\usetikzlibrary{calc}
\usetikzlibrary{math} 
 
\tikzset{p0/.style = {shape = circle,    draw, thick, minimum size = 0.4cm}}
\tikzset{p1/.style = {shape = rectangle, draw, thick, minimum size = 0.4cm}}
\tikzset{>=stealth, shorten >=1pt}
\tikzset{every edge/.style = {thick, ->, draw}}
\tikzset{every loop/.style = {thick, ->, draw}}
 
\hspace*{-0.275cm}
\begin{tikzpicture}[scale=.92]

\tikzmath{\i = .18; \o =.24;} 

\draw[fill=black] (0,0)    circle (\i cm);      \draw[thick, draw=black, fill=none] (0,0) circle (\o cm);
\draw[fill=black] (1,0.5)  circle (\i cm);      \draw[thick, draw=black, fill=none] (1,0.5) circle (\o cm);
\draw[fill=black] (2,1)    circle (\i cm);     \draw[thick, draw=black, fill=none] (2,1) circle (\o cm);
\draw[fill=black] (3,1.5)  circle (\i cm);
\draw[fill=black] (4,2)    circle (\i cm);
\draw[fill=black] (5,1.5)  circle (\i cm);
\draw[fill=black] (6,2)    circle (\i cm);       
\draw[fill=black] (7,1.5)  circle (\i cm);     
\draw[fill=black] (8,1)    circle (\i cm);     \draw[thick, draw=black, fill=none] (8,1) circle (\o cm);
\draw[fill=black] (9,1.5)  circle (\i cm);     \draw[thick, draw=black, fill=none] (9,1.5) circle (\o cm);
\draw[fill=black] (10,1.5) circle (\i cm);     \draw[thick, draw=black, fill=none] (10,1.5) circle (\o cm);
\draw[fill=black] (11,2)   circle (\i cm);      
\draw[fill=black] (12,2.5) circle (\i cm);
\draw[fill=black] (13,2)   circle (\i cm);
\draw[fill=black] (14,1.5) circle (\i cm);     \draw[thick, draw=black, fill=none] (14,1.5) circle (\o cm);
 
\path[draw, thick] (0,0) -- (1,0.5) -- (2,1) -- (3,1.5) -- (4,2) -- (5,1.5) -- (6,2) -- (7,1.5) -- (8,1) -- (9,1.5) -- (10,1.5) -- (11,2) -- (12,2.5) -- (13,2) -- (14,1.5);
 
\foreach \x in {0, 0.5,...,2.5}
\path[draw, dashed] (-.6, \x) -- (14,\x);
 
 
\path[draw, -stealth, thick] (-.6,0) -- (-.6,3.0);
 
 
 
\draw[draw=none] (-0.8,0) -- node[sloped]{stack height} (-0.8,3);

\node[draw=none] at (0,0) {\footnotesize\color{white}0};
\node[draw=none] at (1,0.5) {\footnotesize\color{white}1};
\node[draw=none] at (2,1) {\footnotesize\color{white}2};
\node[draw=none] at (3,1.5) {\footnotesize\color{white}3};
\node[draw=none] at (4,2) {\footnotesize\color{white}4};
\node[draw=none] at (5,1.5) {\footnotesize\color{white}5};
\node[draw=none] at (6,2) {\footnotesize\color{white}6};
\node[draw=none] at (7,1.5) {\footnotesize\color{white}7};
\node[draw=none] at (8,1) {\footnotesize\color{white}8};
\node[draw=none] at (9,1.5) {\footnotesize\color{white}9};
\node[draw=none] at (10,1.5) {\footnotesize\color{white}10};
\node[draw=none] at (11,2) {\footnotesize\color{white}11};
\node[draw=none] at (12,2.5) {\footnotesize\color{white}12};
\node[draw=none] at (13,2) {\footnotesize\color{white}13};
\node[draw=none] at (14,1.5) {\footnotesize\color{white}14};

\end{tikzpicture}
    \caption{Development of the stack height during a run.
        Positions that are steps (that is, onward, the stack height never goes below the current stack height) are circled, all other positions lie inside humps.
        The pair~$(2,7)$ of positions  is a call matched by a return. Other such pairs are $(3,4)$, $(5,6)$, $(10,13)$, and $(11,12)$.
        Positions~$0$, $1$, and $8$ are calls which remain unmatched.
        The result of an unmatched call is that onward, the current stack height is never reached again. Hence, these positions are proper steps.
    }
    \label{fig_pushdownstack}
\end{figure}

\begin{remark}[Step, hump, matching]
\label{remark_steps}
A \emph{step} in a run is a position $n$ such that from there onward, the stack height will never be below the stack height at position $n$. Every run has infinitely many steps.
A step is proper, if its stack height is never reached again. 
Whenever, a position of a run is not a step, we say the run is in a \emph{hump}. Between two steps of the same stack height, the stack will be build up and down, giving the appearance of a \myquote{hump} over time.
A call is \emph{matched} if the stack height at the call is reached again. It is then matched with the return that reaches that stack height again for the first time. Otherwise, the call is \emph{unmatched}.
\cref{fig_pushdownstack} shows an illustration. 
\end{remark}

A central role of our prophecies is to gather information about the development of the stack height during a run.
Let us stress again that the stack height during a run is controlled solely by Falsifier, as this enables the approach of model-checking via games with prophecies in the setting of visibly pushdown specifications: The intention of prophecies is that Falsifier has to give Verifier additional information that help her overcome the disadvantage of having to pick~$t'$ without knowing $t$ completely.
This requires that Falsifier only makes truthful predictions about the future.
If Falsifier is not truthful, Verifier wins automatically.
Assume that Falsifier makes predictions about the development of the stack height. 
If Verifier's moves can also influence the stack height, she can actively render Falsifier's predictions false, making her win the game unjustly. 

Before introducing the prophecies concerning the development of the stack height, we need some auxiliary notions.
Given a finite word~$w$ over $\Sigma = \Sigmacall \cup \Sigmareturn \cup \Sigmaskip$ (recall that $(\Sigmacall,\Sigmareturn,\Sigmaskip)$ is the projected partition), we define its effect~$\effect(w) \in \ints$ as $|w|_{\Sigmacall} - |w|_{\Sigmareturn}$.
Note that this does not necessarily match the stack height of a run processing~$w$, as a return on an empty stack height is like a skip. However, if the effect never gets negative, then it matches.

In the following, we define our prophecies and give some intuition, see also \cref{fig_pushdownstack}.
We begin with two auxiliary prophecies.
\begin{itemize}
\itemsep0.5em
\item 
$\step = \set{ t \in \Sigma^\omega \mid \effect(t(0) \cdots t(n)) \ge 0 \text{ for all } n \ge 0} $  
\begin{itemize}
    \item 
With this prophecy, we require Falsifier to predict whether the current position is a step (i.e., the current stack top symbol (and everything below) will never be removed). 
\end{itemize}

\item
$\properstep = \set{ t \in \Sigma^\omega \mid \effect(t(0) \cdots t(n)) \ge 1 \text{ for all } n \ge 0} $
\begin{itemize}
    \item 
Here, we require Falsifier to say whether the current position is a \emph{proper step}, i.e., whether it is a step and the current stack height is never reached again. Note that this implies that $t(0)$  can only be processed by a call-transition. 
\end{itemize}
\end{itemize}

Before we can introduce our main prophecies (which are to be used in conjunction with $\step$ and $\properstep$), we need two more definitions.

\begin{definition}[Fresh run]
Let $r = c_0\tau_0c_1\tau_1c_2\tau_2\cdots$ be a run.
We say that $r$ is \emph{fresh} if
\begin{itemize}
    \item $c_0 = (q, \bot A)$ for some $A \neq \bot$ and $\tau_0 = (q,A,a,q',\gamma)$,  or 
    \item $c_0 = (q, \bot)$ and $\tau_0 = (q,\bot,a,q',\gamma)$,
\end{itemize}
i.e., the stack content of $c_0$ is the smallest one that enables $\tau_0$.  
\end{definition}

\begin{remark}
Fix some $t,t' \in \Sigma^\omega$ and let $\rho$ be a run of $\aut$ starting in some (not necessarily initial) configuration~$(q,\gamma A)$ processing $\combine{t,t'}$.
If the effect of $t(0) \cdots t(n)$ is nonnegative for all $n$, then every configuration reached during $\rho$ has a stack content of the form~$\gamma A \gamma'$, i.e., $\gamma A$ is never removed from the stack and the only symbol that is ever \myquote{accessed}  from the pre-filled part of the stack is the initial top stack symbol~$A$.
Thus, one can analyze runs processing $\combine{t,t'}$ by considering fresh runs only.
This observation is crucial for the construction of our prophecies.

On the other hand, if the effect of some $t(0) \cdots t(n)$ is negative, then we will only reason about prefixes with a nonnegative effect.
\end{remark}

Also, we need to express that we can build an accepting run of the $\omega$-VPA $\aut$.
Since $\aut$ uses Büchi acceptance, an accepting run is a run that visits the set of accepting states infinitely often.
However, it is not enough to specify that it is possible in the future to visit infinitely many times an accepting state, a run must actually make progress towards visiting an accepting state in order to be accepting in the limit.
Hence, we need to compare runs in terms of visiting an accepting state as soon as possible.

\begin{definition}[$\firstF{\cdot}$]
Given a (possibly finite) run $r = c_0\tau_0c_1\tau_1c_2\tau_2\cdots$ of $\aut$, we define $\firstF{r} = n$ where $n$ is minimal with $c_n$ being a configuration whose state is accepting. If no such configuration exists, we define $\firstF{r} = \infty$.
\end{definition}

Now, we introduce the main prophecies.
The first ones are relevant whenever Falsifier indicates that the current position (in a run) is a step (as indicated by the prophecy \step{}).
We dub them \myquote{\texttt{Step}}-prophecies.
Assume Falsifier has already made his move in some round, say to vertex~$u$, i.e., $\lambda(u)$ determines what type (that is, call, skip or return) the transition processing~$\lambda(u)$ and Verifier's move in $\aut$ has.
If the current position is a step, four possibilities can occur:
If $\lambda(u)$ induces a call, it is either eventually matched by a return (handled via \myquote{\texttt{MatchedCallStep}}) or never matched by a return (handled via \myquote{\texttt{UnmatchedCallStep}}).
Alternatively, $\lambda(u)$ can induce a skip (handled via \myquote{\texttt{SkipStep}}).
Finally, it is also possible that $\lambda(u)$ induces a return.
However, then it must be a return on the empty stack (which induces the same stack behavior as a skip), otherwise the position is clearly not a step.
This is handled via \myquote{\texttt{UnmatchedReturnStep}}.

Each prophecy includes a detailed description of its intended use.
To denote parameters (for any prophecy) we use $u,v,v_\tau,\ldots$ for vertices of $\tsys$, and $\tau,\eta,\ldots$ for transitions of $\aut$.

\begin{itemize}
        \item 
$\prophyMCallAcc{u,v,v_\tau,v_\eta}{\tau,\eta}$ $=$ $\{ t \in \traces{\tsys_u} \mid \text{there exists an infinite path } v_0v_1\cdots$ in $\tsys$ and a fresh infinite run $r = c_0\tau_0c_1\tau_1\cdots$ of $\aut$ on $\combine{t,\lambda({v_0v_1\cdots})}$ such that:
{\setlength{\abovedisplayskip}{0pt}%
\setlength{\abovedisplayshortskip}{0pt}%
    \begin{align*}
        & v_0 = v_\tau \in \Succ{v}, \text{ and } \tau_0 = \tau, \text{ and $\tau$ is a call-transition}, \\
        & n\in \nats \text{ is minimal with } \sh(c_0) = \sh(c_{n+1})\text{ (which we require to exist)},\\
        & v_{n} = v_\eta, \text{ and } \tau_n = \eta \text{ (note that $\eta$ must be a return-transition}),\\
        & r \text{ is accepting},\text{ and } \\
        & \text{for all paths } \rho' \text{ in } \tsys_{v'} \text{ where $v' \in \Succ{v}$ and for all accepting runs } r' \text{ of } \aut \\
        & \text{\quad on } \combine{t,\lambda(\rho')} \text{ of the form } c_0\tau'_0c'_1\tau'_1\cdots \text{ we have } \firstF{r} \leq \firstF{r'}\}.
    \end{align*}}
    
    \begin{itemize}
        \item Falsifier continues his path from $u$, and Verifier continues her path from $v_\tau \in \Succ{v}$, the traces induced by these continuations are called $t$ and $t'$, respectively.
        \item The above starting point reflects that in the game, first Falsifier makes his move (he has moved to $u$), then it is Verifier's turn, she is in $v$, and needs to move to some vertex in $\Succ{v}$.
        This prophecy reflects her possibilities if she chooses to move to $v_\tau \in \Succ{v}$.
        \item Say, so far $\aut$ has reached $q$ and $A \in \Gammabot$ is the topmost stack symbol (meaning the exact configuration is of the form $(q,\gamma'A)$ for some $\gamma' \in \Gammabot^*$). Note that $q$ and $A$ are not parameters of the prophecy, we just use them for illustrative purposes. 
        \item We require that Verifier has the possibility to continue her trace such that the run on $\combine{t,t'}$ is accepting using $\tau$ as its next transition.
        The current stack top symbol and everything below will never be removed, so having an accepting run continuing from $(q,\gamma'A)$ is equivalent to having an accepting run from $(q,A)$, i.e., to have a fresh run.
        \item As this is a matched call, it is required that its matching return is processed by $\eta$ (processing the letter $(t(n),\lambda(v_\eta)) \in \Sigma^2$) that happens $n$ transitions later.
        \item As explained before the definition of $\firstF{\cdot}$, for acceptance it is important to make actual progress towards visiting an accepting state.
        Here, it is required that $r$ is a run that visits an accepting state as soon as possible among all accepting runs.
        \item In \cref{fig_pushdownstack}, this type of prophecy is used at positions~$2$ and $10$.
    \end{itemize}
\end{itemize}
The above prophecy is central, in a global view, to ensuring Verifier can win the game if the formula is satisfied: 
It is used when the current position is a step and a call occurs that will be matched by a return eventually, i.e., the position after the matching return is a step again.
The prophecy ensures that after the hump between the call and its matching return, Verifier can continue to play in a way that $\aut$ can still accept.

The upcoming three prophecies cover the possibilities to reach a step directly after a step without a hump in between, which simplifies their definitions slightly. 

\begin{itemize}
\itemsep0.5em 
\item
    $\prophyUCallAcc{u,v,v_\tau}{\tau}$ $=$ $\{ t \in \traces{\tsys_u} \mid \text{there exists an infinite path } v_0v_1\cdots$  in $\tsys$  and a fresh infinite run $r = c_0\tau_0c_1\tau_1\cdots$ of $\aut$ on $\combine{t,\lambda({v_0v_1\cdots})}$ such that:
    {\setlength{\abovedisplayskip}{0pt}%
\setlength{\abovedisplayshortskip}{0pt}%
    \begin{align*}
        & v_0 = v_\tau \in \Succ{v}, \text{ and }\tau_0 = \tau, \text{ and $\tau$ is a call-transition}, \\
        & \sh(c_0) < \sh(c_{i+1}) \text{ for all } i\ge 0,\\
        & r \text{ is accepting}, \text{ and } \\
       & \text{for all paths } \rho' \text{ in } \tsys_{v'} \text{ where $v' \in \Succ{v}$ and all accepting runs } r' \text{ of } \aut 
        \\
        & \text{\quad on } \combine{t,\lambda(\rho')} \text{ of the form } c_0\tau'_0c'_1\tau'_1\cdots \text{ we have } \firstF{r} \leq \firstF{r'}\}. 
    \end{align*}}
    \begin{itemize}
    \item This is very similar to the above case, except that the call is unmatched.
    \item In \cref{fig_pushdownstack}, this type of prophecy is used in positions~$0$, $1$, and $8$.
    \end{itemize}

\item
$\prophySkipAcc{u,v,v_\tau}{\tau}$ $=$ $\{ t \in \traces{\tsys_u} \mid \text{there exists an infinite path } v_0v_1\cdots$  in $\tsys$  and a fresh infinite run $r = c_0\tau_0c_1\tau_1\cdots$ of $\aut$ on $\combine{t,\lambda({v_0v_1\cdots})}$ such that:
{\setlength{\abovedisplayskip}{0pt}%
\setlength{\abovedisplayshortskip}{0pt}%
    \begin{align*}
        & v_0 = v_\tau, \text{ and } \tau_0 = \tau, \text{ and $\tau$ is a skip-transition}, \\
        & r \text{ is accepting},\text{ and } \\
        & \text{for all paths } \rho' \text{ in } \tsys_{v'} \text{ where $v' \in \Succ{v}$ and all accepting runs } r' \text{ of } \aut \\
        & \text{\quad on } \combine{t,\lambda(\rho')} \text{ of the form } c_0\tau'_0c'_1\tau'_1\cdots \text{ we have } \firstF{r} \leq \firstF{r'}\}. 
    \end{align*}}
\begin{itemize}
    \item Exactly like $\prophyUCallAcc{u,v,v_\tau}{\tau}$, but the first letter is a skip, not a call.
    \item In \cref{fig_pushdownstack}, this type of prophecy is used in position~$9$.
\end{itemize}

\item
$\prophyReturnAcc{u,v,v_\tau}{\tau}$ $=$ $\{ t \in \traces{\tsys_v} \mid \text{there exists an infinite path }  v_0v_1\cdots$  in $\tsys$  and a fresh infinite run  $r = c_0\tau_0c_1\tau_1\cdots$ of $\aut$ on $\combine{t,\lambda({v_0v_1\cdots})}$ such that:
{\setlength{\abovedisplayskip}{0pt}%
\setlength{\abovedisplayshortskip}{0pt}%
    \begin{align*}
        & v_0 = v_\tau, \tau_0 = \tau, \text{ and $\tau$ is a return-transition of the form } (q_1,\bot,(\lambda(u),\lambda(v_0)),q_2,\bot), \\
        & r \text{ is accepting},\text{ and } \\
        & \text{for all paths } \rho' \text{ in } \tsys_{v'} \text{ where $v' \in \Succ{v}$ and all accepting runs } r' \text{ of } \aut \\
        & \text{\quad on } \combine{t,\lambda(\rho')} \text{ of the form } c_0\tau'_0c'_1\tau'_1\cdots \text{ we have } \firstF{r} \leq \firstF{r'}\}. 
    \end{align*}}
\begin{itemize}
    \item Like \myquote{\texttt{SkipStep}}, but for the special case of returns on the empty stack.
\end{itemize}
\end{itemize}

The next three prophecies are relevant whenever the current position is not a step, i.e., the position is inside a hump.
We dub them \myquote{\texttt{Hump}}-prophecies.
These prophecies give only information about the future until the stack height goes below the current stack height (which eventually happens, as the position is in a hump, see \cref{remark_steps}).
In a hump, it is possible to encounter calls, which must be matched as unmatched calls always induce a step. This is handled by \myquote{\texttt{CallHump}}.
It is furthermore possible to encounter skips, which is handled by \myquote{\texttt{CallSkip}}, or to encounter returns, which is handled by \myquote{\texttt{ReturnHump}}.

The \myquote{\texttt{Step}}-prophecies take a global view;
their purpose is to ensure that in every position that is a step, Verifier can still win taking the whole future into account.
In contrast, the \myquote{\texttt{Hump}}-prophecies take on a more local view;
their purpose is to ensure that Verifier is able to handle the current hump. 

\begin{itemize}
\itemsep0.5em
    \item 
$\prophySkip{u,v,v_\tau,v_\eta}{\tau,\eta}$ $=$ $\{ t \in \traces{\tsys_u} \mid$ there exists a finite path $\rho = v_0\cdots v_{n}$  in $\tsys$ and a fresh finite run $r = c_0\tau_0c_1\tau_1\cdots\tau_nc_{n+1}$ of $\aut$ on $\combine{t[0,n],\lambda(\rho)}$ such that:
{\setlength{\abovedisplayskip}{0pt}%
\setlength{\abovedisplayshortskip}{0pt}%
    \begin{align*}
        & v_0 = v_\tau, \text{ and } \tau_0 = \tau, \text{ and $\tau$ is a skip-transition}, \\
        & n \in\nats \text{ is minimal with } \sh(c_0) - 1 = \sh(c_{n+1}) \text{ (which we require to exist)}\},\\
        & v_n = v_\eta, \text{ and } \tau_n = \eta \text{ (note that $\eta$ must be a return-transition}),\text{ and } \\
        & \text{for all paths } \rho' = v'_0\cdots v'_n \text{ in $\tsys$ where } v'_0 \in \Succ{v} \text { and } v'_n = v_n \text{ and }\\
        &\text{\quad all finite runs } r' \text{ of } \aut \text{ on } \combine{t[0,n],\lambda(\rho')} \text{ of the form } c_0\tau'_0c'_1\tau'_1\cdots c'_{n}\tau_nc_{n+1}\\
        &\text{\quad we have } \firstF{r} \leq \firstF{r'}\}. 
    \end{align*}}
    \begin{itemize}
        \item The current stack top symbol (say, for illustration purposes only, it is $A$ and the exact configuration reached is of the form $(q,\gamma A)$ for some $\gamma \in \Gammabot^*$) will be removed eventually (note that this implies $A \neq \bot$).
        \item We are now only interested in what happens until the current stack top symbol is removed (what happens after is handled by some \myquote{\texttt{MatchedCallStep}}-prophecy).
        \item Concretely, we require that Verifier has the possibility that the run on $\combine{t,t'}$ continues from $(q_1,\gamma A)$ using $\tau$ (processing $(t(0),\lambda(v_\tau)) \in \Sigma^2$) as its next transition and the current stack top symbol $A$ is removed via $\eta$ (processing $(t(n),\lambda(v_\eta)) \in \Sigma^2$) after $n$ transitions.
        \item The concrete stack content $\gamma$ below $A$ is irrelevant for the calculation of the run until the current top symbol is removed, thus it suffices to specify that there is such a run starting from $(q_1,A)$ instead, that is, it suffices to consider such a fresh run.
        \item As for the other prophecies, since we use Büchi acceptance, we need to make progress towards visiting an accepting state.
        Hence, the finite run $r$ must be a run that visits an accepting state as soon as possible (if possible at all)  compared to all other runs that respect the same conditions on how the next return must happen.
    \end{itemize}
\item
 $\prophyCall{u,v,v_\eta,v_\tau,v_\vartheta}{\tau,\eta,\vartheta}$ $=$ $\{ t \in \traces{\tsys_u} \mid$ there exists a finite path $\rho = v_0\cdots v_{n}$  in $\tsys$ and a fresh finite run $r = c_0\tau_0c_1\tau_1\cdots\tau_nc_{n+1}$ of $\aut$ on $\combine{t[0,n],\lambda(\rho)}$ such that:
 {\setlength{\abovedisplayskip}{0pt}%
\setlength{\abovedisplayshortskip}{0pt}%
    \begin{align*}
        & v_0 = v_\tau, \text{ and } \tau_0 = \tau, \text{ and $\tau$ is a call-transition}, \\
        & m\in\nats \text{ is minimal with } \sh(c_0) = \sh(c_{m+1}) \text{ (which we require to exist)},\\
        & v_m = v_\eta, \text{ and } \tau_m = \eta \text{ (note that $\eta$ must be a return-transition}),\\
        & n \in\nats \text{ is minimal with } \sh(c_0) - 1 = \sh(c_{n+1}) \text{ (which we require to exist)},\\
        & v_n = v_\vartheta, \text{ and } \tau_n = \vartheta \text{ (note that $\vartheta$ must be a return-transition}),\text{ and } \\       
        & \text{for all paths } \rho' = v'_0\cdots v'_n \text{ in $\tsys$ where } v'_0 \in \Succ{v} \text { and } v'_n = v_n\\ 
        &\text{\quad and all finite runs } r' \text{ of } \aut
        \text{ on } \combine{t[0,n],\lambda(\rho')} \text{ of the form } c_0\tau'_0c'_1\tau'_1\cdots c'_{n}\tau_nc_{n+1}\\ &\text{\quad we have } \firstF{r} \leq \firstF{r'}\}. 
    \end{align*}}
    \begin{itemize}
        \item The current letter is a call (obviously matched, as all calls in a hump are). 
        We require that the call is processed by $\tau$ (processing $(t(0),\lambda(v_\tau)) \in \Sigma^2$)) and the matching return, $m$ transitions later, processed by $\eta$ (processing $(t(m),\lambda(v_\eta)) \in \Sigma^2$).
        \item Since we are already in hump, the stack symbol that was on top at the beginning will be removed too, after $n-m$ additional transitions, via $\vartheta$ (processing $(t(n),\lambda(v_\vartheta)) \in \Sigma^2$).
        \item Again, we need to make progress towards visiting an accepting state.
        As for the previous prophecy, how the topmost stack symbol must be removed has been fixed before, when its matching call has been made. 
        Thus, we are looking for an optimal (in terms of visiting an accepting state as soon as possible) path/run among those that, after $n$ transitions, reach $v_\vartheta$ and use $\vartheta$.
        \item In \cref{fig_pushdownstack}, this type of prophecy is used in positions~$3$, $5$, and $11$.
    \end{itemize}
\item
 $\prophyReturn{u,v_\eta}{\eta}$ $=$ $\{ t \in \traces{\tsys_u} \mid \eta $ has the form $(q,A,(\lambda(u),\lambda(v_\eta)),q',\varepsilon)$
  for some $q,q' \in Q,$ and $A \in \Gamma$ $\}$. 
    \begin{itemize}

        \item The topmost stack symbol is removed and it is specified how Verifier is able to do so.
        \item This prophecy does not include the condition regarding visiting an accepting state as soon as possible.
        This is because all returns (that are not returns on the empty stack) have been fixed beforehand when their matching call was made.
        At that time, progress towards visiting an accepting state has been ensured. 
        This also explains why this prophecy does not have a vertex $v$ (symbolizing the vertex Verifier is in) as parameter, since we do not care that Verifier could also move to some $v' \in \Succ{v}$ with $v' \neq v_\eta\in \Succ{v}$.
        \item This type of prophecy is used in all positions where a return happens on a nonempty stack.
        In \cref{fig_pushdownstack}, that is, positions~$4$, $6$, $7$, $12$, and $13$. 
        \end{itemize}
\end{itemize}

For each of these prophecies, one can construct an $\omega$-VPA accepting it. 
The only nontrivial aspect here is to ensure that the run~$r$ referred to in the definitions of the prophecies satisfies $\firstF{r} \leq \firstF{r'}$ for all other runs~$r'$, which can be taken care of using projection and complementation. Here, we again rely on the stack height being controlled by the universally quantified variable only.

Recall that a fresh run is a run~$r = c_0\tau_0c_1\tau_1c_2\tau_2\cdots$ such that 
\begin{itemize}
    \item $c_0 = (q, \bot A)$ for some $A \neq \bot$ and $\tau_0 = (q,A,a,q',\gamma)$,  or 
    \item $c_0 = (q, \bot)$ and $\tau_0 = (q,\bot,a,q',\gamma)$,
\end{itemize}
Note that $c_0$ is uniquely determined by $\tau_0$. 
Hence, a fresh run is uniquely determined by the sequence~$\tau_0 \tau_1 \tau_2 \cdots$, as each $c_{n+1}$ is uniquely determined by applying $\tau_n$ to $c_n$.
In the following, we will often use this property.

\begin{lemma}
\label{lemma_propheciesarevpls}    
Each prophecy in $\cal{P}$ is recognized by an $\omega$-VPA of at most exponential (in $\size{\aut}$ and $\size{\tsys}$) size. Furthermore, all these $\omega$-VPA use the projected partition of $\aut$.
\end{lemma}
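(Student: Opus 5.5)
We construct, for each prophecy, an $\omega$-VPA over $\Sigma$ with the projected partition $(\Sigmacall,\Sigmareturn,\Sigmaskip)$, proceeding from simple to complex. $\step$ and $\properstep$ are straightforward: the VPA pushes on calls, pops on returns, and rejects as soon as a return is read with the stack at its initial level. For $\properstep$ it additionally checks that the first letter is a call whose pushed symbol is marked and never popped, i.e.\ the effect stays $\ge 1$. $\prophyReturn{u,v_\eta}{\eta}$ is just a condition on $t(0)$ together with $t \in \traces{\tsys_u}$, so a Büchi automaton from a copy of $\tsys$ suffices; since it never uses its stack, it is trivially a VPA for any partition. The constraint $t \in \traces{\tsys_u}$ appearing in every prophecy is handled the same way and intersected in via the closure of $\omega$-VPA under intersection.

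For the existential part of the remaining prophecies, we build an $\omega$-VPA $\aut_\exists$ over $\Sigma \times V$ (equivalently over $\Sigma\times\Sigma$ enriched with vertex information) that reads $\combine{t,v_0v_1\cdots}$. It checks that $v_0v_1\cdots$ is a path of $\tsys$ starting in $v_\tau$ (resp.\ lying in $\Succ{v}$), simulates $\aut$ from the configuration determined by $\tau$ (a fresh run, so the initial stack is $\bot A$ or $\bot$, which can be stored in the finite control), and forces the first transition to be $\tau$. Because $\aut$ is controlled by the first component, the type of each letter $(t(i),\lambda(v_i))$ is determined by $t(i)$, so $\aut_\exists$ is a VPA with the partition $(\Sigmacall\times V,\Sigmareturn\times V,\Sigmaskip\times V)$. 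The stack-height constraints (the matching return of the first call happening via $\eta$ at $v_\eta$; the first drop below the initial level happening via $\vartheta$ at $v_\vartheta$; unmatchedness) are checked by tagging stack symbols: the symbol pushed by the initial call is tagged, so its pop is visible and the automaton verifies that the popping transition is $\eta$ and the current vertex is $v_\eta$. Similarly, a return on the level-$0$ marker is detected via the stored $A$. For the \texttt{Hump}-prophecies the word after position $n$ is unconstrained, so after position $n$ the automaton moves to a universal accepting sink. In every case it also tracks $\firstF{r}$ by counting, which is not finite-state. Instead we handle minimality by \emph{guessing} a value and comparing runs in lockstep, as follows.

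The main obstacle is the clause requiring $\firstF{r}\le\firstF{r'}$ for all competing $(\rho',r')$. We express it as the negation of an existential statement: there is no $(\rho',r')$ (with the stated constraints, e.g.\ $r'$ accepting, or ending with $\tau_n c_{n+1}$ at $v'_n=v_n$) such that $r'$ visits $F$ strictly earlier than $r$. Consider words over $\Sigma\times V\times V$ carrying both Verifier paths. A VPA $\autb$ simulates $r$ and $r'$ in parallel; this is possible precisely because both runs share the same stack-height profile, being dictated by $t$ alone, so the two stacks can be merged into one stack over $\Gammabot\times\Gammabot$ with synchronized pushes and pops. $\autb$ checks that $r'$ reaches $F$ at a position at which $r$ has not yet done so, and that $r'$ satisfies its side conditions (Büchi acceptance, or the terminal constraints). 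With $\aut'_\exists$ denoting the analogous product of $\aut_\exists$ with the copy for $r'$, projecting $\autb$ away the $\rho'$-component yields a VPA (via the renaming clause of the proposition of \cite{DBLP:conf/stoc/AlurM04}) recognizing the set of pairs $(t,\rho)$ admitting a better competitor. Complementing this gives an exponential blow-up, and intersecting with $\aut_\exists$ and projecting away $\rho$ yields the prophecy. The concern is that complementation followed by projection might compound to a doubly exponential size. To avoid this, we note that the automaton being complemented is polynomial in $\size{\aut}$ and $\size{\tsys}$, so its complement is singly exponential; the subsequent intersection and projection are polynomial and size-preserving respectively. The overall size is thus exponential, and every automaton involved uses (a renaming of) the projected partition, as the statement requires.
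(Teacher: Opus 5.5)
There is a genuine gap in how you handle the minimality clause. The run $r$ is never part of the input word: $\aut_\exists$ reads $\combine{t,\rho}$, $\autb$ reads $(t,\rho,\rho')$, and both guess $r$ through nondeterminism. So ``the set of pairs $(t,\rho)$ admitting a better competitor'' is not well defined. Since $\aut$ is nondeterministic, one pair $(t,\rho)$ generally carries many runs $r$ with different values of $\firstF{r}$.

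After projecting away $\rho'$, your automaton recognizes the pairs $(t,\rho)$ for which \emph{some} run $r$ (under whatever side conditions $\autb$ imposes on $r$) is beaten by some $(\rho',r')$. Its complement therefore says that \emph{every} such run on $(t,\rho)$ is optimal. Intersecting with $\aut_\exists$ gives ``some run satisfying the side conditions exists, and all of them are optimal''. The prophecy needs ``some run satisfying the side conditions is optimal''.

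These conditions really differ. Suppose that after the prescribed first transition $\tau$, $\aut$ can nondeterministically enter $F$ either at the next step or one step later, with identical stack behaviour. Then every $(t,\rho)$ has a suboptimal run, so your automaton rejects $t$. Yet $t$ belongs to the prophecy, witnessed by the faster run. If $\autb$ places no constraint on $r$ at all, for instance not even acceptance, the resulting language shrinks further.

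The fix is to make $r$ visible in the word, which is exactly what the paper does. It works with $L_1$ over $\combine{t,\rho,\tau_0\tau_1\cdots}$, which encodes the existential constraints. It also uses $L_2$ over $\combine{t,\rho,\tau_0\tau_1\cdots,\rho',\tau'_0\tau'_1\cdots}$, expressing that an accepting competitor beats the run induced by $\tau_0\tau_1\cdots$. A fresh run is determined by its transition sequence, and letter types depend only on $t$. Hence both languages are recognized by $\omega$-VPA over renamings of the projected partition.

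The order of operations is then:
\begin{enumerate}
\item project away only $\rho'$ and $\tau'_0\tau'_1\cdots$;
\item complement the result;
\item intersect with $L_1$;
\item only then project away $\rho$ and $\tau_0\tau_1\cdots$.
\end{enumerate}
Marking the single position $\firstF{r}$ in the word with an extra bit would also suffice. With this change, your size analysis goes through and matches the paper's: there is a single complementation of a polynomial-size automaton, and all other operations are polynomial or size-preserving.
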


\begin{proof}
We begin by present $\omega$-VPA for the auxiliary prophecies~$\step$ and $\properstep$ in \cref{fig_proph_aux}.

The automaton for $\step$ keeps track of the effect of the prefix processed thus far using the stack height, as long as the effect is nonnegative: All inputs that would yield a prefix with negative effect cannot be processed (i.e., a letter from $\Sigmareturn$ when the stack is empty).

The automaton for $\properstep$ works similarly, but we first require to process a letter from $\Sigmacall$, which puts a single~$B$ on the stack. From there onward, the automaton also keeps track of the effect of the prefix processed thus far using the stack height, as long as the effect does not reach zero again: All inputs that would yield a prefix with effect zero cannot be processed (i.e., a letter from $\Sigmareturn$ when the topmost stack symbol is $B$).

\begin{figure}
    \centering

    \begin{tikzpicture}[thick]

  \node[state,accepting] (qi) at (0,0) {};

  \path[->, > = stealth]
  (-1,0) edge (qi)
  (qi) edge[loop above] node[above,align=center] {$c, \bot \mid \bot A$ \\ $c, A \mid AA$} ()
  (qi) edge[loop right] node[right] {$r, A \mid \epsilon$} ()
  (qi) edge[loop below] node[below,align=center] {$s, \bot \mid \bot$ \\ $s, A \mid A$} ()
 ;

  \node[state] (qii) at (5,0) {};
   \node[state] (qiii) at (8,0) {};

  \path[->, > = stealth]
  (4,0) edge (qii)
  (qii) edge node[above] {$c, \bot \mid \bot B$} (qiii)
  (qiii) edge[loop above] node[above,align=center] {$c, B \mid B A$ \\ $c, A \mid AA$} ()
  (qiii) edge[loop right] node[right] {$r, A \mid \epsilon$} ()
  (qiii) edge[loop below] node[below,align=center] {$s, B \mid B$ \\ $s, A \mid A$} ()
 ;

    \end{tikzpicture}

    \caption{The $\omega$-VPA recognizing $\step$ and $\properstep$. A transition~$(q, X, a, q', \gamma)$ is depicted by an edge from $q$ to $q'$ labeled by $a, X \mid \gamma$. Here, $c$, $r$, and $s$ stand for arbitrary letters from $\Sigmacall$, $\Sigmareturn$, and $\Sigmaskip$, respectively.}
    \label{fig_proph_aux}
\end{figure}

Next, let us consider~$\prophyMCallAcc{u,v,v_\tau,v_\eta}{\tau,\eta}$. We first construct two auxiliary $\omega$-VPA and then use closure properties to obtain an $\omega$-VPA recognizing the prophecy.
To this end, first consider the language~$L_1$ of words of the form~$\combine{t, \rho, \tau_0 \tau_1 \tau_2 \cdots}$ satisfying the following properties (cp.\ the definition of $\prophyMCallAcc{u,v,v_\tau,v_\eta}{\tau,\eta}$):
\begin{itemize}
    \item $t \in \traces{\tsys_u}$,
    \item $\rho = v_0 v_1 v_2 \cdots$ is a path of $\tsys$,
    \item $\tau_0 \tau_1 \tau_2 \cdots$ is a sequence of transitions of $\aut$ that induces a fresh run~$c_0 \tau_0 c_1 \tau_1\cdots $ of $\aut$ on $\combine{t, \lambda(\rho)}$ (which is uniquely determined by $\tau_0 \tau_1 \tau_2 \cdots$),
    \item $v_0 = v_\tau \in \Succ{v}$, and $\tau_0 = \tau$, and $\tau$ is a call-transition,
    \item $n$ is the smallest number such $\sh(c_0) = \sh(c_{n+1})$ (which we require to be well-defined),
    \item $v_n = v_\eta$ and $\tau_n = \eta$ (note, $\eta$ is a return-transition), and 
    \item $r$ is accepting. 
\end{itemize}
It expresses all but the last requirement in the definition of $\prophyMCallAcc{u,v,v_\tau,v_\eta}{\tau,\eta}$. The last one will be taken care of later.

One can construct an $\omega$-VPA recognizing $L_1$ using the product of the states of $\aut$ (to simulate the run~$r$ induced by $\tau_0 \tau_1 \tau_2 \cdots$) and two copies of the vertices of $\tsys$ (to check that $t$ is in $\traces{\tsys_u}$ and to guess the path~$\rho$), and using the same stack alphabet as $\aut$ (again, to simulate $r$). 
Furthermore, it checks all the initial constraints on $v_0$ and $\tau_0$ using the state space and uses the stack to ensure that the first time the stack height~$\sh(c_0)$ is reached, the vertex just processed in $\rho$ is $v_\eta$ and that the transition used to process it is $\eta$. This requires an additional component of the states to keep track of whether that stack height has been reached already or not. The accepting states are inherited from $\aut$ to ensure that the run~$r$ of $\aut$ induced by $\tau_0 \tau_1 \tau_2 \cdots$ is accepting, and that stack height~$\sh(c_0)$ is reached again. 
Finally, the partition is the one induced by the projected one of $\aut$, i.e., it depends only on the first component.
We leave the slightly tedious, but straightforward details to the reader.

Instead, we focus on the last requirement, i.e., that there is no other accepting run~$r'$ starting in $c_0$ and processing $\combine{t, \lambda(\rho')}$ for some path~$\rho'$ starting in some $v_0' \in \Succ{v}$ such that $\firstF{r'} < \firstF{r}$.
To this end, consider the language~$L_2$ of words of the form~$\combine{t, \rho, \tau_0 \tau_1 \tau_2 \cdots, \rho', \tau_0' \tau_1' \tau_2' \cdots}$ satisfying the following properties:
\begin{itemize}
    \item $\rho = v_0 v_1 v_2 \cdots$ is a path of $\tsys$,
    \item $\rho' = v_0' v_1' v_2' \cdots$ is a path of $\tsys$ starting in some $v_0' \in \Succ{v}$,
    \item $\tau_0\tau_1\tau_2$ is a sequence of transitions of $\aut$  that induces a (not necessarily accepting) fresh run~$c_0 \tau_0 c_1 \tau_1\cdots $ of $\aut$ on $\combine{t, \lambda(\rho)}$ (which is uniquely determined by $\tau_0 \tau_1 \tau_2 \cdots$),
    \item $\tau_0'\tau_1'\tau_2'$ is a sequence of transitions of $\aut$ that induces an accepting fresh run~$c_0 \tau_0' c_1' \tau_1'\cdots $ of $\aut$ on $\combine{t, \lambda(\rho')}$ (which is uniquely determined by $\tau_0' \tau_1' \tau_2'\cdots$), and 
    \item $\firstF{r'} < \firstF{r}$.
\end{itemize}

An $\omega$-VPA recognizing $L_2$ can be constructed using the product of two copies of the state space of $\aut$ (to simulate the runs~$r$ and $r'$) and two copies of the set of vertices of $\tsys$ (to check that $\rho$ and $\rho'$ are indeed paths of $\tsys$), and using the product of two copies of the stack alphabet of $\aut$ (to simulate the runs~$r$ and $r'$). Here, we rely on the fact that $\aut$ is controlled by the letters of $t$ only, i.e., $r$ and $r'$ always have the same stack height.
Again, the automaton checks the initial constraint on $v_0'$ using its state space as well as an additional component of the state space to reject when $\firstF{r'} \ge \firstF{r}$.
The accepting states are inherited from the second copy of the states of $\aut$ to ensure that $r'$ is accepting.
Again, the partition is the one induced by the projected one of $\aut$, i.e., it depends only on the first component.
And again, we leave the details to the reader.

Now, when projecting away the last two components of $L_2$ and then complementing the resulting $\omega$-VPA we have eliminated all $\combine{t, \rho, \tau_0 \tau_1 \tau_2 \cdots}$ where the run~$r$ induced by $\tau_0 \tau_1 \tau_2 \cdots$ is not among the ones reaching $F$ as soon as possible. Call the resulting language~$L_2'$.
Hence, $\prophyMCallAcc{u,v,v_\tau,v_\eta}{\tau,\eta}$ is the language obtained by taking the intersection of $L_1$ and $L_2'$ and then projecting away the last two components (representing $\rho$ and $\tau_0 \tau_1 \tau_2\cdots$). 
Note that both projections we have used here are actually renamings (projections that preserve the partition into calls, returns, and skips).
Hence, as $\omega$-VPA are closed under, complementation, intersection, and renaming~\cite{DBLP:conf/stoc/AlurM04}, we obtain an $\omega$-VPA for $\prophyMCallAcc{u,v,v_\tau,v_\eta}{\tau,\eta}$, which can can be shown to have at most exponential size in $\size{\aut}$ and $\size{\tsys}$, as the only \myquote{expensive} operation is the complementation.

Using a similar approach, one can construct $\omega$-VPA  of exponential size (in $\size{\aut}$ and $\size{\tsys}$) recognizing the prophecies~$\prophyUCallAcc{u,v,v_\tau}{\tau}$, $\prophySkipAcc{u,v,v_\tau}{\tau}$, and $\prophyReturnAcc{u,v,v_\tau}{\tau}$.
Also, the constructions can be adapted for the prophecies~$\prophySkip{u,v,v_\tau,v_\eta}{\tau,\eta}$ and\newline $\prophyCall{u,v,v_\eta,v_\tau,v_\vartheta}{\tau,\eta,\vartheta}$: here, we are only interested in finite runs~$r$ and $r'$ induced by the sequences~$\tau_0 \tau_1 \tau_2 \cdots$ and $\tau_0' \tau_1' \tau_2' \cdots$ until the first position where the stack height is strictly smaller than that of $c_0$. Note that this position only depends on $t$.
Hence, we again obtain $\omega$-VPA of exponential size in $\size{\aut}$ and $\size{\tsys}$.

Finally, the prophecy~$\prophyReturn{u,v_\eta}{\eta}$ only refers to the first letter of the input~$t$ and is is therefore trivially recognized by an $\omega$-VPA with two states.
\end{proof}

Using the $\omega$-VPA for the prophecies and closure properties of these automata, one can show that the winning condition of our game is also accepted by an $\omega$-VPA.

\begin{lemma}
\label{lemma_winningconditionisvpl}    
The winning condition~$L(\aut,\tsys, \cal{P})$ is recognized by an $\omega$-VPA of triply-exponential size (in $\size{\aut}$ and $\size{\tsys}$).
\end{lemma}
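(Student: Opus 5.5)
We build the automaton for $L(\aut,\tsys,\mathcal{P})$ compositionally from $\omega$-VPA for its constituents, using closure under Boolean operations and renamings. All of them use the partition of $\SigmaI \times \SigmaO$ induced by the projected partition of $\aut$: the type of a letter $((u,X),v)$ is the type of $\lambda(u)$ under $(\Sigmacall,\Sigmareturn,\Sigmaskip)$. This is well defined because $\aut$ is controlled by the first component. Since $\projone{\alpha}$ is played by Falsifier, the stack behaviour of every automaton below is determined by Falsifier's moves alone, so all of them can be synchronised on the same input.

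Write the winning condition as $\overline{C_1} \cup (C_2 \cap (\overline{C_3} \cup C_4))$, where:
\begin{itemize}
\item $C_1$ says that $\projone{\alpha}$ is a path of $\tsys$;
\item $C_2$ says that $\beta$ is a path of $\tsys$;
\item $C_3$ says that all predictions are correct;
\item $C_4$ says that $\combine{\lambda(\projone{\alpha}),\lambda(\beta)} \in L(\aut)$.
\end{itemize}
$C_1$ and $C_2$ are $\omega$-regular and checked by copies of $\tsys$, so they are recognised by $\omega$-VPA of polynomial size, as is the complement of $C_1$. $C_4$ is obtained from $\aut$ by a renaming (apply $\lambda$ componentwise and discard the prophecy bits), which preserves types, so it has size $\size{\aut}$.

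The main work is $C_3$. Its complement is the set of $\combine{\alpha,\beta}$ for which some position $i$ and some $P \in \mathcal{P}$ have a wrong prediction. Fix $P$. Using \cref{lemma_propheciesarevpls}, let $\autb_P$ be the exponential-size $\omega$-VPA for $P$, and let $\overline{\autb_P}$ be its complement, of doubly-exponential size. An automaton for the language $E_P$ of words where prediction $\pvar{P}$ is violated somewhere works as follows.
\begin{itemize}
\item It reads the prefix while only pushing and popping dummy symbols to stay visibly synchronised.
\item It nondeterministically guesses position $i$ and checks either $\pvar{P} \in \projtwo{\alpha(i)}$ together with a run of $\overline{\autb_P}$, or $\pvar{P} \notin \projtwo{\alpha(i)}$ together with a run of $\autb_P$, on $\lambda(\projone{\alpha(i)\alpha(i+1)\cdots})$.
\end{itemize}

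There is one subtlety, and it is the main obstacle. The suffix automaton starts with a nonempty stack from the prefix and must treat returns on that old content as returns on the empty stack. I would handle this by marking stack symbols pushed before the guess (disjoint copies of the stack alphabet). Returns that pop marked symbols are then simulated as $\bot$-returns of $\autb_P$ or $\overline{\autb_P}$, and calls and skips seeing a marked top are simulated as seeing $\bot$. This matches the semantics, since prophecies are evaluated on the suffix alone. The resulting $E_P$ has doubly-exponential size. Then $\overline{C_3} = \bigcup_{P} E_P$. $\mathcal{P}$ has only polynomially many members in $\size{\aut}$ and $\size{\tsys}$, so this union stays doubly exponential; it needs no complementation, because we directly built the automaton for violations.

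Finally, $\overline{C_3} \cup C_4$ is doubly exponential, and intersecting with $C_2$ and taking the union with $\overline{C_1}$ adds only polynomial factors. This gives a doubly-exponential automaton, within the claimed triply-exponential bound; the slack leaves room for a less careful implementation, for example one that complements the full conjunction $C_3$ once more. I would double-check two points: that the guess-and-mark construction keeps the partition visibly consistent, and that each $\autb_P$ uses the projected partition, as \cref{lemma_propheciesarevpls} guarantees.
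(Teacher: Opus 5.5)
Your argument is correct, but it takes a different route from the paper.

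The paper builds an alternating $\omega$-VPA $\aut_{\mathrm{prem}}$ for the premise $C_3$. At every position it spawns a copy of $\aut_P'$, or of its dual, according to the prediction bits. The paper then dualizes this automaton and adds the disjuncts for $\aut'$ and the path checks. Finally it pays a doubly-exponential blow-up to remove alternation from this exponential-size automaton, which is where the triply-exponential bound comes from.

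You instead use that $C_3$ occurs only negatively in the winning condition. Its complement $\overline{C_3}$ is existential (some prediction fails at some position), so a nondeterministic $\omega$-VPA can check it with a single guess of position and prophecy. The only costly step is complementing each exponential prophecy automaton once. This avoids alternating VPA entirely and yields a doubly-exponential automaton, one exponential better than the lemma states. Via \cref{prop_games}, it would also improve the model-checking bound by one exponential. The paper's construction is more uniform, since it never complements a prophecy automaton explicitly, but it pays for this in size.

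Your treatment of the stack at the start of the suffix run is the same subtlety the paper flags: copies spawned at height $h$ must treat returns below $h$ as returns on the empty stack. Your handling is sound, and a single dummy stack symbol for the prefix phase would already suffice.

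One terminological slip: $C_4$ is the \emph{preimage} of $L(\aut)$ under $((u,S),v)\mapsto(\lambda(u),\lambda(v))$, not an image under a renaming as in the proposition. It is still obtained by relabelling the transitions of $\aut$ (exactly the paper's $\aut'$), so your size claim stands.
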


\begin{proof}
In the following, all quantities (e.g., automata sizes) are measured in $\size{\aut}$ and $\size{\tsys}$.

There is a polynomial number of prophecies in $\cal{P}$. 
Due to \cref{lemma_propheciesarevpls}, for every $P \in \cal{P}$, there is an exponentially-sized $\omega$-VPA~$\aut_P$ recognizing $P$, which uses the input alphabet~$\Sigma$ and the projected partition of $\aut$.
The alphabet of $\aut_P$ can be extended to the input alphabet~$(V \times 2^{\set{\pvar{P} \mid P \in \mathcal{P}}}) \times V$ so that an input letter~$((v, S), v')$ is projected to $\lambda(v)$. 
The resulting $\omega$-VPA~$\aut_P'$ has the same size as $\aut_P$.
Similarly, $\aut$ can be extended to the input alphabet~$(V \times 2^{\set{\pvar{P} \mid P \in \mathcal{P}}}) \times V$, obtaining the $\omega$-VPA~$\aut'$. All these extended automata use the same partition that only depends on the vertex in the first component (picked by Falsifier).

These $\aut_P'$ can be combined into an alternating $\omega$-VPA~$\aut_{\mathrm{prem}}$ (see~\cite{altVPA} for formal definitions) of exponential size recognizing
\[\forall i \in \nats.\ \forall P \in \mathcal{P}.\ \pvar{P} \in \projtwo{\alpha(i)} \leftrightarrow \lambda(\projone{\alpha(i)\alpha(i+1)\alpha(i+2)\cdots}) \in P. \]
Intuitively, the alternating automaton reads, for each input letter it processes, which prophecies~$P$ are currently predicted to be true (using the component~$2^{\set{\pvar{P} \mid P \in \mathcal{P}}}$ of the input alphabet) and then spawns a fresh copy of each $\aut_P'$ so that $P$ is predicted to be true and spawns a copy of the dual automaton of each $\aut_P'$ so that $P$ is predicted to be false. 
The dual automaton of $\aut_P'$ accepts the complement language of $L(\aut_P')$ w.r.t.\ the alphabet~$(V \times 2^{\set{\pvar{P} \mid P \in \mathcal{P}}}) \times V$, but again ignores all inputs but the first vertex (picked by Falsifier).
Hence, it checks that the prophecy~$P$ is indeed violated.
One needs to take care that each copy spawned at stack height~$h$ treats returns that decrease the stack height below $h$ as returns on the empty stack.
We leave the straightforward details to the reader. 

Now, we complement $\aut_{\mathrm{prem}}$ by dualizing it (without size increase) and take the disjunction with $\aut'$, and a component that checks that if the sequence of vertices picked by Falsifier is a path in $\tsys$, then the sequence of vertices picked by Verifier is a path in $\tsys$ as well.
Altogether, this yields an exponentially sized alternating $\omega$-VPA recognizing $L(\aut, \tsys, \cal{P})$.
This can be turned into a triply-exponential equivalent $\omega$-VPA~\cite{altVPA} recognizing the winning condition.
\end{proof}

Finally, one can show that the game with prophecies in $\cal P$ captures the model-checking problem. 
One direction of the equivalence is rather straightforward:  a winning strategy for Verifier in $\gsgame(L(\aut,\tsys, \cal{P}))$ can be turned into a Skolem function for $\pi'$ witnessing that $\tsys \models \forall \pi.\ \exists \pi'.\ \aut$: given a trace~$t$ of $\tsys$ for $\pi$, consider the play where Falsifier picks the vertices of a path with trace~$t$ and where he always picks the prophecies correctly. Then, the winning strategy yields a $t'$ so that $\combine{t,t'}$ are accepted by $\aut$.

The (much) harder task is to show that the prophecies give Verifier enough information about the trace~$t$ Falsifier is constructing during a play to construct a \myquote{winning} $t'$ without having full access to $t$. 
Intuitively, she always has a move so that from the resulting configuration, she can still win, while also infinitely often ensuring that an accepting state is indeed visited.
Essentially, the parameters (in particular, the $v_\tau$ and the $v_\eta$) of carefully selected prophecies yield such moves.

\begin{lemma}
\label{lemma_prophygameprops}
Let $\cal{P}$ be the set of prophecies introduced above.
Verifier wins $\gsgame(L(\aut,\tsys, \cal{P}))$ if and only if $\tsys \models \forall \pi.\ \exists \pi'.\ \aut$.
\end{lemma}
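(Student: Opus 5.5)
The plan is to prove the two implications separately; the second one carries almost all of the work.

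\emph{From a winning strategy to the model.} Let $\sigma$ be a winning strategy for Verifier in $\gsgame(L(\aut,\tsys,\cal{P}))$, and let $t\in\traces{\tsys}$ be the trace of a path $u_0u_1\cdots$. We let Falsifier play $\alpha(i)=(u_i,S_i)$, where $S_i$ is exactly the set of $\pvar{P}$ with $\lambda(u_iu_{i+1}\cdots)\in P$. These predictions are correct by construction and $\projone{\alpha}$ is a path of $\tsys$. Since the outcome is consistent with $\sigma$, it lies in $L(\aut,\tsys,\cal{P})$. Hence $\beta$ is a path of $\tsys$ and $\combine{t,\lambda(\beta)}\in L(\aut)$. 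This gives a witness $t'=\lambda(\beta)$ for every $t$, so $\tsys\models\forall\pi.\ \exists\pi'.\ \aut$.

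\emph{From the model to a winning strategy: maintained invariant.} Assume $\tsys\models\phi$. Verifier's only nontrivial obligation arises when Falsifier plays a path and all his predictions are truthful; in every other case she wins trivially, so she may play arbitrarily once a prediction is violated. We therefore assume truthful predictions throughout. Verifier maintains, alongside her current vertex $v$, a configuration of $\aut$ that she simulates. She also keeps a stack of \emph{commitments}, one per open call, each fixing the vertex and transition ($v_\eta$, $\eta$) with which that call's matching return must be processed. The invariant has two parts:
\begin{itemize}
\item At every step position, the remaining suffix of $t$ admits some continuation of Verifier's path from a successor of $v$ together with an accepting fresh run from the current state and top symbol.
\item At every hump position, the current hump can be completed consistently with the pending commitments.
\end{itemize}

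\emph{Choosing moves.} At the start, $\tsys\models\phi$ gives an accepting run on $\combine{t,t'}$, so some \texttt{Step}-prophecy is true (position $0$ is a step, or a \texttt{Hump} prophecy holds). In each round, Falsifier's truthful $\step$ and $\properstep$ bits tell Verifier whether she is at a step and of which kind.
\begin{itemize}
\item At a step she selects a true \texttt{Step}-prophecy whose parameters agree with her current simulated state and top symbol, moves to $v_\tau$, and applies $\tau$. For a matched call she additionally pushes the commitment $(v_\eta,\eta)$.
\item In a hump on a call she selects a true $\prophyCall{\cdot}{\tau,\eta,\vartheta}$ whose $(v_\vartheta,\vartheta)$ equals the top commitment, and pushes $(v_\eta,\eta)$.
\item In a hump on a skip she uses $\prophySkip{\cdot}{\cdot}$ in the same way.
\item On a return she plays the committed $v_\eta$; $\prophyReturn{\cdot}{\eta}$ certifies that $\eta$ is applicable to the actual letter.
\end{itemize}
One must check that a suitable true prophecy always exists. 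The argument is that the witness run from the previous prophecy, restricted to its suffix or hump, is itself a witness for some prophecy with the right parameters. Freshness justifies ignoring the stack below the top symbol. The control assumption guarantees that stack heights, and hence the positions $n$ and $m$, depend on $t$ alone, so they agree between the witness run and Verifier's actual run.

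\emph{Main obstacle: Büchi acceptance.} The invariant only guarantees that the play stays extendable forever. It does not ensure that Verifier's actual run visits $F$ infinitely often, since she could postpone acceptance indefinitely. The $\firstF{\cdot}$-minimality clauses are what address this, and I would argue as follows. At a step position, let $d$ be the distance to the next accepting state in the optimal witness. Each subsequent choice, whether a \texttt{Step} prophecy or a chain of \texttt{Hump} prophecies inside a hump whose optimality is relative to the fixed exit commitment, realises an optimal run. Therefore the distance to the next visit of $F$ is nonincreasing and strictly decreasing per round until $F$ is reached. Every run has infinitely many steps (\cref{remark_steps}), and the invariant guarantees an accepting continuation at each one. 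Hence $F$ is reached within finitely many rounds after each step, and so infinitely often.

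The delicate part is composing optimality across a hump. A \texttt{MatchedCallStep} optimum involves the entire future, while the \texttt{Hump} choices are optimal only for finite runs ending with the committed return. I would prove a lemma that concatenating a \texttt{Hump}-optimal segment with the \texttt{Step}-optimal continuation after the matching return cannot be beaten by any accepting run with the same start, using a standard exchange argument on prefixes.
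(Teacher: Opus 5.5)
Your proposal is correct and follows essentially the same route as the paper. The soundness direction is identical, and for completeness your commitment stack, your choice of a true prophecy consistent with the top commitment, and your $\firstF{\cdot}$-based exchange argument correspond to the paper's strategy definition, its proof that $M(i)$ is always nonempty, and its proof that the constructed run copies an optimal run from every step until $F$ is visited. Two small details to tighten: first, a position with an empty stack is a step even when $\pvar{\step}$ is false (e.g., a return on the empty stack, which needs \texttt{UnmatchedReturnStep}), so Verifier must use her simulated stack height there rather than only the prophecy bits; second, the winning condition requires $\beta$ to be a path of $\tsys$ whenever $\projone{\alpha}$ is, even if a prediction is false, so she must keep moving to successors rather than play truly arbitrarily.
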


The proof of the above lemma is presented in two subsections:
\begin{itemize}
    \item In Subsection~\ref{subsec_frommctogame_app}, we prove the left-to-right direction of the correctness claim (i.e., soundness): If Verifier wins $\gsgame(L(\aut,\tsys, \cal{P}))$ then $\tsys \models \forall \pi.\ \exists \pi'.\ \aut$. 

    \item In Subsection~\ref{subsec_frommctogame_app}, we prove the right-to-left direction correctness claim (i.e., completeness): If $\tsys \models \forall \pi.\ \exists \pi'.\ \aut$ then Verifier wins $\gsgame(L(\aut,\tsys, \cal{P}))$. 
\end{itemize}
But, before we present the proof, let us first state our main result about \hylogicvpa.

Thus, combining \cref{lemma_winningconditionisvpl}, \cref{lemma_prophygameprops}, and \cref{prop_games}, we obtain our main result about \hylogicvpa model-checking.

\begin{theorem}
\hylogicvpa model-checking for $\Pi_{2,1}$ formulas is in \fiveexp.
\end{theorem}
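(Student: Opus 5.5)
The plan is to chain \cref{lemma_prophygameprops}, \cref{lemma_winningconditionisvpl}, and \cref{prop_games}, and then track the sizes carefully. First treat a formula $\forall \pi.\ \exists \pi'.\ \aut$ with one variable per block. By \cref{lemma_prophygameprops}, $\tsys \models \phi$ holds iff Verifier wins $\gsgame(L(\aut,\tsys,\mathcal{P}))$. So it suffices to decide this game.

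First we compute $\mathcal{P}$ and the automata for its elements. There are only polynomially many parameter tuples: vertices $u,v,v_\tau,v_\eta,v_\vartheta$ and transitions $\tau,\eta,\vartheta$. By \cref{lemma_propheciesarevpls}, each prophecy is recognized by an exponential-size $\omega$-VPA over the projected partition, and these can be constructed effectively. \cref{lemma_winningconditionisvpl} then yields an $\omega$-VPA $\autb$ for $L(\aut,\tsys,\mathcal{P})$. It has triply-exponential size, arising from the exponential alternating automaton and the exponential cost of removing alternation. The alphabet of $\autb$ is $\SigmaI \times \SigmaO$ with $\SigmaI = V \times 2^{\set{\pvar{P} \mid P\in\mathcal{P}}}$. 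This alphabet is only exponential, so it is dominated by the state count.

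Next, apply \cref{prop_games}. Solving a Gale--Stewart game whose winning condition is given by an $\omega$-VPA takes time doubly exponential in the automaton's size. A triply-exponential input therefore gives a running time of $2^{2^{2^{2^{2^{\mathrm{poly}}}}}}$, i.e.\ \fiveexp, in $\size{\aut}$ and $\size{\tsys}$. One check is needed here: the doubly-exponential bound of Löding et al.\ should be exponential only in the state space and stack alphabet, and polynomial in the alphabet size. Then the exponential alphabet and the linear blowup of the reduction to their pushdown games do not add further levels.

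Finally, remove the restriction to one variable per block, following \cref{remark_generalcase}. For $\forall \pi_0 \cdots \forall \pi_{k-1}.\ \exists \pi_k \cdots \exists \pi_{m-1}.\ \aut$, let Falsifier play tuples of vertices, i.e.\ work in the product system $\tsys^{k}$, and let Verifier play tuples in $\tsys^{m-k}$. Control by the first block then becomes control by Falsifier's component. The construction is unchanged, and the product sizes are only exponential in the formula while remaining polynomial in $\size{\tsys}$ for fixed arity.

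The main obstacle is this bookkeeping. One must confirm that the product blowup enters only the base of the tower, so that $\size{\tsys}^m$ behaves like a polynomial inside the exponentials. A single exponential in the input feeding the triple-exponential construction would still land within five exponentials, so the bound is robust there. The correctness argument itself is entirely delegated to \cref{lemma_prophygameprops}.
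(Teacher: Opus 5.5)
Your proposal is the paper's proof. The paper obtains the theorem by chaining \cref{lemma_prophygameprops}, \cref{lemma_winningconditionisvpl} and \cref{prop_games}: a triply-exponential $\omega$-VPA for the winning condition, then doubly-exponential game solving, with the general case deferred to \cref{remark_generalcase}.

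There are two corrections to your bookkeeping.

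First, the robustness claim in your last paragraph is false. The pipeline is five-fold exponential in the size of its base objects. These are the product automata that get complemented in \cref{lemma_propheciesarevpls}, together with the number of prophecies, which becomes $\size{V}^{O(m)}$ in the general case. If $\size{\tsys}^m$ is exponential in the input, this route gives six exponentials, not five. So the \fiveexp{} bound here really needs $\size{\tsys}^m$ to count as polynomial, e.g.\ for bounded arity. The paper likewise does not redo the size analysis beyond \cref{remark_generalcase}.

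Second, the third exponential in \cref{lemma_winningconditionisvpl} comes from alternation removal being doubly exponential, not singly. Your alphabet check is unnecessary: a bound doubly exponential in the total size of a triply-exponential automaton with a merely exponential alphabet already yields \fiveexp{}.
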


\begin{remark}
\label{remark_generalcase}
Recall that we restricted ourselves to formulas of the form~$\forall \pi.\ \exists \pi'.\ \aut$, i.e., with a single variable in each quantifier block. 
To generalize the construction to arbitrary $\Pi_{2,1}$ formulas (say with $k$ universal quantifiers and $k'$ existential ones), the prophecies are languages of $k$-tuples of traces and their definition existentially quantifies $k'$ paths. 
So, the parameters~$u$, $v$, $v_\tau$, and $v_\eta$ are replaced by vectors of vertices of length~$k$ or $k'$, respectively.
\end{remark}

Applying \cref{remark_negprops}, we also obtain decidability for the dual fragment of $\Pi_{2,1}$.

\begin{corollary}
\hylogicvpa model-checking for $\Sigma_{2,1}$ formulas is  in \fiveexp.
\end{corollary}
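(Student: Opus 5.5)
The corollary follows by duality from the preceding theorem for $\Pi_{2,1}$, using \cref{remark_negprops}. Given a transition system~$\tsys$ and a formula~$\phi \in \Sigma_{2,1}$, say $\phi = \exists \pi_0 \ldots \exists \pi_{k-1}.\ \forall \pi_k \ldots \forall \pi_{k+k'-1}.\ \aut$ with $\aut$ controlled by the indexes $\set{0,\ldots,k-1}$ of the first block, the plan has three steps.

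First, construct $\neg\phi = \forall \pi_0 \ldots \forall \pi_{k-1}.\ \exists \pi_k \ldots \exists \pi_{k+k'-1}.\ \overline{\aut}$. Here $\overline{\aut}$ is an $\omega$-VPA for the complement of $L(\aut)$ over the \emph{same} partition of the input alphabet. By \cref{remark_negprops}.\ref{remark_negprops_frag_n}, and because complementation preserves the partition and hence control by the same index set, $\neg\phi$ lies in $\Pi_{2,1}$.

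Second, apply the \fiveexp{} procedure for $\Pi_{2,1}$ to $(\tsys,\neg\phi)$, generalized to several variables per block as described in \cref{remark_generalcase}. Then invert the answer, which is justified by \cref{remark_negprops}.\ref{remark_negprops_mc}: $\tsys \models \phi$ if and only if $\tsys \not\models \neg\phi$.

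The only step needing care is the complexity accounting, since $\overline{\aut}$ may be exponentially larger than $\aut$. I would handle it as follows. The complementation via stair VPA of Löding et al.\ yields an $\omega$-VPA of size $2^{\mathrm{poly}(\size{\aut})}$. Inside the $\Pi_{2,1}$ procedure, this automaton is only complemented again, namely in the constructions of the prophecy automata (\cref{lemma_propheciesarevpls}) and in the dual automata used in \cref{lemma_winningconditionisvpl}. So rather than counting a full extra exponential, I would take the complement of $L(\aut)$ as given by the deterministic stair VPA, which can be complemented without blowup. Both the prophecy construction and the winning-condition construction then stay within the stated bounds, up to polynomial factors in the exponents, and the overall running time remains within \fiveexp{}.

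If this refined accounting fails, the fallback is to note the additional exponential explicitly and then re-examine whether the final game-solving step (\cref{prop_games}) is actually sensitive to it. I expect this accounting to be the only point of the argument that needs checking; the logical reduction itself is immediate.
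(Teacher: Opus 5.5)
Your logical reduction is the paper's own argument. The paper's proof of this corollary is just the one-line appeal to \cref{remark_negprops}, and that part of your proposal, including the use of \cref{remark_generalcase}, is fine and gives decidability. The gap is in your complexity paragraph. You rightly flag it as the delicate point, and the paper's proof does not address it at all, but your proposed fix does not work.

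\textbf{The automaton is not only complemented again.} Inside the $\Pi_{2,1}$ procedure, the automaton of the formula (here $\overline{\aut}$) is simulated positively:
\begin{itemize}
\item one copy of its states and stack alphabet in $L_1$;
\item two synchronized copies in $L_2$;
\item directly, as the disjunct $\aut'$ in the proof of \cref{lemma_winningconditionisvpl}.
\end{itemize}
What is complemented in \cref{lemma_propheciesarevpls} is the projection of $L_2$. That projection is nondeterministic, because the path $\rho'$ and the run $\tau'_0\tau'_1\cdots$ are projected away. What is dualized in \cref{lemma_winningconditionisvpl} are the prophecy automata and $\aut_{\mathrm{prem}}$. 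So $\overline{\aut}$ itself is never complemented, and there is nothing for a determinized representation to cancel.

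\textbf{The stair VPA cannot be plugged in.} The prophecies are defined via the B\"uchi structure of the automaton: its transitions $\tau,\eta,\vartheta$ are parameters, runs must be fresh, and $\firstF{\cdot}$ refers to accepting states. The argument of \cref{lemma_runAcc} also relies on B\"uchi acceptance, which a deterministic stair VPA with stair-parity acceptance does not provide. Converting back to a B\"uchi $\omega$-VPA gives a nondeterministic automaton of size $2^{\mathrm{poly}(\size{\aut})}$, and the whole construction must then be built on that automaton.

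\textbf{Resulting bound.} Feeding this size into the stated lemmas gives:
\begin{itemize}
\item the number of prophecies becomes exponential in $\size{\aut}$, since it is polynomial in the number of transitions;
\item the prophecy automata and $\aut_{\mathrm{prem}}$ become doubly exponential;
\item the nondeterministic $\omega$-VPA for the winning condition becomes quadruply exponential.
\end{itemize}
\cref{prop_games} then yields \textsc{6ExpTime}, not \fiveexp. Your fallback does not help. The problem in \cref{prop_games} is \twoexp-complete, and the bound used is doubly exponential in the size of the given VPA, so the extra exponential propagates. With the ingredients available in your proposal and in the paper, the duality argument gives decidability of $\Sigma_{2,1}$ model-checking with one more exponential than stated. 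Claiming \fiveexp{} would need an additional argument that your proposal does not supply.
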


\subsection{Soundness}
\label{subsec_frommctogame_app}

In this subsection, we prove that if Verifier wins $\gsgame(L(\aut,\tsys, \cal{P}))$ then $\tsys \models \forall \pi.\ \exists \pi'.\ \aut$. 
The proof is analogous to the one for game-based model-checking of $\forall^*\exists^*$ \hyltl model-checking~\cite{BF} and independent of the prophecies. 

\begin{proof}
Assume Verifier has a winning strategy for $\gsgame(L(\aut,\tsys,\cal{P}))$. 
To show $\tsys \models \forall \pi.\ \exists \pi'.\ \aut$, we construct for every $t \in \traces{\tsys}$ a $t' \in \traces{\tsys}$ such that $\aut$ accepts $\combine{t,t'}$.

To this end, fix such a $t$ and a path~$\rho$ of $\tsys$ such that $\lambda(\rho) = t$.
Then, consider the outcome~$(\alpha, \beta)$ of $\gsgame(L(\aut,\tsys,\cal{P}))$ where Verifier plays according to her winning strategy and where Falsifier in each round~$i$ picks the vertex~$\rho(i)$ as well as all prophecies in $\cal{P}$ that $t(i)t(i+1)t(i+2)\cdots $ is in. 
This outcome satisfies
\[\forall i \in \nats.\ \forall P \in \mathcal{P}.\ \pvar{P} \in \projtwo{\alpha(i)} \leftrightarrow \lambda(\projone{\alpha(i)\alpha(i+1)\alpha(i+2)\cdots}) \in P\]
by construction.
Hence, as Verifier played according to her winning strategy, which implies that the outcome is winning for her, we can conclude
\[
\combine{\lambda(\projone{\alpha}),\lambda(\beta)} \in L(\aut).
\]
Hence, picking $t' = \lambda(\beta)$ yields the desired result.
\end{proof}

\subsection{Completeness}
\label{subsec_fromgametomc_app}

In this subsection, we prove that if $\tsys \models \forall \pi.\ \exists \pi'.\ \aut$ then Verifier wins $\gsgame(L(\aut,\tsys, \cal{P}))$.
To this end, we first define a strategy for Verifier (\cref{def_strategy}), and then show that the strategy is winning (\cref{lemma_nonempty,lemma_runAcc}).

During a play of $\gsgame(L(\aut,\tsys, \cal{P}))$, Falsifier picks in each round~$i$ a vertex~$\rho(i)$ of $\tsys$ and a set~$P(i) \subseteq \cal{P}$ of prophecy variables and Verifier then has to pick a vertex~$\rho'(i)$.
We define the strategy for her inductively over the play length~$i$. To so, we inductively compute a subset~$M(i)$ of $P(i)$ and a run prefix~$r = c(0) \tau(0) \cdots \tau(i-1) c(i)$ processing the traces~$t$ and $t'$ of the two sequences~$\rho(0) \cdots \rho(i-1)$ and $\rho'(0) \cdots \rho'(i-1)$ picked by the players thus far (note that $\rho(i)$ is not yet processed, as Verifier has not yet picked $\rho'(i)$). Also note that these sequences are not necessarily path prefixes in $\tsys$, as the rules of the game only require the players to pick vertices.

The $M(i)$ and the run prefix~$r$ are used to determine Verifier's move:
If $M(i)$ is nonempty (we say that Verifier has used the \myquote{\nonempty-case} at position~$i$), then we pick a prophecy~$\ch(i)$ whose prophecy variable is in $M(i)$. This choice depends on~$r$.
The parameters of $\ch(i)$ then determine the move~$\rho'(i)$ the strategy outputs, as well as how to extend~$r$ to process~$(\rho(i), \rho'(i))$.
If $M(i)$ is empty, then we let Verifier pick an arbitrary $\rho'(i)$.
We will later prove that this case never occurs as long as Falsifier makes his predictions correctly. 
If he does not, Verifier wins by definition of $L(\aut, \tsys, \cal{P})$.
So, it then remains to show that Verifier wins all outcomes of plays where she has used the \nonempty-case at all positions. 

For convenience, we say that $\bullet$ is the predecessor of all vertices in $V_\initmark$, i.e., we let $\Succ{\bullet} = V_\initmark$.
This notation is used for round~$i=0$ to instantiate the parameter~$v$ of the prophecies.
Note that this parameter represents the vertex~$\rho'(i-1)$ that Verifier's sequence of moves is currently at, which is undefined for $i=0$. However, the parameter~$v$ in a prophecy is only used to refer to its successors, which for $\bullet$ yields exactly the initial vertices, which are those from which Verifier should select~$\rho'(0)$.

Finally, wether a letter $(a,a') \in \Sigma^*$ is a call, return, or skip in $\aut$ depends only on $a$. Thus, we say the type of $a$ is \myquote{call}, \myquote{return}, or \myquote{skip} and write $\type{a} \in \set{call, return, skip}$.

We are now ready to define the strategy, the $M(i)$, the $\ch(i)$, and the run prefix~$r$.

\begin{definition}[Strategy definition]
\label{def_strategy}
\upshape
Throughout the definition, we assume w.l.o.g.\ that $\aut$ is complete, i.e., in each configuration, every input letter can be processed. This can always be achieved by adding a non-accepting sink state and routing all missing transitions to it. This prevents us having to deal with words that cannot be processed by $\aut$.

We begin our inductive definition in round~$i=0$, i.e., Falsifier has picked a vertex~$\rho(0)$ and a set~$P(0)$ of prophecy variables. Hence, we have $t(0) = \lambda(\rho(0))$.
We define~$c(0) = (q_\initmark,\bot)$, i.e., the run prefix~$r$ consists only of the initial configuration, which does indeed process $\combine{\rho(0)\cdots \rho(i-1), \rho'(0)\cdots \rho'(i-1)}$, which is empty for $i=0$.

Note that the first position of every run is a step, because the stack is empty. We now consider several cases depending on whether Falsifier predicts that it is a proper step and on the type of $t(0)$:
If $\pvar{\properstep} \notin P(0)$ and $\type{t(0)} = call$, then we define
\begin{equation*}
M(0) = \set{\pvar{\prophyMCallAcc{\rho(0),\bullet,v_\tau,v_\eta}{\tau,\eta}} \in P(0) \mid v_\tau,v_\eta \in V,\ \tau,\eta \in \Delta}.
\end{equation*}
If $\pvar{\properstep \in P(0)}$ and $\type{t(0)} = call$, then we define
\begin{equation*}
M(0) = \set{\pvar{\prophyUCallAcc{\rho(0),\bullet,v_\tau}{\tau}} \in P(0) \mid v_\tau \in V,\ \tau \in \Delta}.
\end{equation*}
If $\type{t(0)} = skip$, then we define
\begin{equation*}
M(0) = \set{\pvar{\prophySkipAcc{\rho(0),\bullet,v_\tau}{\tau}} \in P(0) \mid v_\tau \in V,\ \tau \in \Delta}.
\end{equation*}
Lastly, if $\type{t(0)} = return$, then we define
\begin{equation*}
M(0) = \set{\pvar{\prophyReturnAcc{\rho(0),\bullet,v_\tau}{\tau}} \in P(0) \mid v_\tau \in V,\ \tau \in \Delta}.
\end{equation*}
This covers all possible cases.

If $M(0)$ is nonempty, we pick $\ch(0)$ to be an arbitrary prophecy from $M(0)$.
In all four cases above, we define $\rho'(0)$ (i.e., the move selected by the strategy we define) as the parameter~$v_\tau$ of $\ch(0)$ and extend $r$ by the parameter~$\tau$ of $\ch(0)$ and the uniquely determined configuration~$c(1)$ reached by applying $\tau$ in $c(0)$. This is well-defined by the definition of the prophecies.
If $M(0)$ is empty, let $\rho'(0)$ be any $v \in V_\initmark$ and let $\tau(0)$ be any transition of $\aut$ processing $(\lambda(\rho(0)),\lambda(\rho'(0)))$ that is enabled in $c(0)$.

Now, we consider a round~$i > 0$, i.e., Falsifier has picked a sequence $\rho(0) \cdots \rho(i)$ of vertices and a sequence $P(0)\cdots P(i)$ of sets of prophecy variables while Verifier has picked a sequence~$\rho'(0) \cdots \rho'(i-1)$ of vertices and we need to define her move~$\rho'(i)$.
Furthermore, we have defined subsets~$M(j)$ of $P(j)$ for every $j < i$ and a run prefix~$r$ processing the traces~$t$ and $t'$ of $\rho(0) \cdots \rho(i-1)$ and $\rho'(0) \cdots \rho'(i-1)$.

If there is some $j < i$ such that $M(j)$ is empty, then we let $M(i)=\emptyset$, $\rho'(i)$ be any $v \in \Succ{\rho'(i-1)}$, and let $\tau(i)$ be any transition of $\aut$ processing $(\lambda(\rho(i)),\lambda(\rho'(i)))$ enabled in the last configuration of the run prefix~$r$ constructed thus far. 
We then extend $r$ by $\tau(i)$ and the unique configuration reached by applying it in $c(i)$.
Otherwise, we proceed as follows.

First let us consider the case where $\pvar{\step} \in P(i)$ or the last configuration~$c(i)$ of $r$ is has an empty stack.
If additionally $\pvar{\properstep} \notin P(i)$ and $\type{t(i)} = call$, then we define
\begin{equation*}
M(i) = \set{\pvar{\prophyMCallAcc{\rho(i),\rho'(i-1),v_\tau,v_\eta}{\tau,\eta}} \in P(i) \mid v_\tau,v_\eta \in V,\ \tau,\eta \in \Delta}.
\end{equation*}
If additionally  $\pvar{\properstep} \in P(i)$ and $\type{t(i)} = call$, then we define
\begin{equation*}
M(i) = \set{\pvar{\prophyUCallAcc{\rho(i),\rho'(i-1),v_\tau}{\tau}} \in P(0) \mid v_\tau \in V,\ \tau \in \Delta}.
\end{equation*}
If additionally if $\type{t(i)} = skip$, then we define
\begin{equation*}
M(i) = \set{\pvar{\prophySkipAcc{\rho(i),\rho'(i-1),v_\tau}{\tau}} \in P(i) \mid v_\tau \in V,\ \tau \in \Delta}.
\end{equation*}
Lastly, if additionally $\type{t(i)} = return$, then we define
\begin{equation*}
M(i) = \set{\pvar{\prophyReturnAcc{\rho(i),\rho'(i-1),v_\tau}{\tau}} \in P(i) \mid v_\tau \in V,\ \tau \in \Delta}.
\end{equation*}
This again covers all possible subcases.

On the other hand, if $\pvar{\step} \notin P(i)$ and the stack of $c(i)$ is not empty, then we are interested in the maximal $j < i$ where a call-transition is taken in $r$ such that $\sh(c(j)) + 1 = \sh(c(i))$.
For example, in \cref{fig_pushdownstack}, if $i$ is position $3$, then $j$ is position $2$; if $i$ is position $5$, then $j$ is also position~$2$; and if $i$ is position $6$, then $j$ is position $5$.
Then $\ch(j)$ must be of the form 
\begin{equation*}
\prophyCall{\rho(j),\rho'(j-1),\rho'(j),v_\eta,v_\vartheta}{\tau(j),\eta,\vartheta}
\end{equation*}
or
\begin{equation*}
\prophyMCallAcc{\rho(j),\rho'(j-1),\rho'(j),v_\eta}{\tau(j),\eta}
\end{equation*}
for some $v_\eta,v_\vartheta \in V$ and some return-transitions $\eta,\vartheta$.

If the type of $t(i)$ is a call, we are interested in the prophecies that indicate the possibility for Verifier to do the following:
\begin{itemize}
    \item Take some call-transition $\tau$ (processing $(t(i),\lambda(v_\tau)) \in \Sigma^2$) by moving to some vertex $v_\tau \in \Succ{\rho'(i-1)}$.
    \item Say $m$ transitions later, take a return-transition $\eta'$ (processing $(t(i+m),\lambda(v_{\eta'})) \in \Sigma^2$) that will return the call done by $\tau$ by moving to some vertex $v_{\eta'}$.
    \item Say $n-m$ additional transitions later, take a return-transition $\eta$ (processing $(t(i+n),\lambda(v_{\eta})) \in \Sigma^2$) by moving to $v_\eta$ as fixed by $\ch(j)$.
\end{itemize}
Thus, if $\pvar{\step} \notin P(i)$, $\sh(c(i))>0$, and $\type{t(i)} = call$, we define
\begin{align}
\label{eq_callhump}
 M(i) = &\  \{\pvar{\prophyCall{\rho(i),\rho'(i-1),v_\tau,v_{\eta'},v_\eta}{\tau,\eta',\eta}} \in P(i) \mid v_\tau,v_{\eta'}\in V\ \tau,\eta' \in \Delta,\\
 & \quad \text{and $v_\eta$ and $\eta$ as in $\ch(j)$}\}.\nonumber
\end{align}

If $\pvar{\step} \notin P(i)$, $\sh(c(i))>0$, and $\type{t(i)} = skip$, then Verifier moves towards the point where $\eta$ is to be taken.
So, we define
\begin{align}
\label{eq_skiphump}
 M(i) = &\ \{\pvar{\prophySkip{\rho(i),\rho'(i-1),v_\tau,v_\eta}{\tau,\eta}} \in P(i) \mid v_\tau \in V, \tau \in \Delta,\\ 
 & \quad \text{and $v_\eta$ and $\eta$ as in $\ch(j)$}\}.\nonumber
\end{align}

Lastly, if $\pvar{\step} \notin P(i)$, $\sh(c(i))>0$, and $\type{t(i)} = return$, then Verifier has reached the point where $\eta$ is to be taken.
So, we define the singleton set
\begin{align}
\label{eq_returnhump}
 M(i) = \set{\pvar{\prophyReturn{\rho(i),v_\eta}{\eta}} \in P(i) \text{ where $v_\eta$ and $\eta$ as in $\ch(j)$}}.
\end{align}
Again, this covers all subcases.

Using these definitions, if $M(i)$ is nonempty, we pick $\ch(i)$ arbitrarily from $M(i)$.
We define the move~$\rho'(i)$ of Verifier and the next transition~$\tau(i)$ for $r$ as uniquely determined by $\ch(i)$: in the first two cases (\texttt{CallHump} or \texttt{SkipHump}), we select the parameters~$v_\tau$ and $\tau$ of $\ch(i)$, in the latter case (\texttt{ReturnHump}), we select the parameters~$v_\eta$ and $\eta$ of $\ch(i)$.

If $M(i)$ is empty, let $M(i)=\emptyset$, $\rho'(i)$ be any $v \in \Succ{\rho'(i-1)}$, and let $\tau(i)$ be any transition of $\aut$ processing $(\lambda(\rho(i)),\lambda(\rho'(i)))$ that is enabled in $c(i)$. In both cases, we extend $r$ by $\tau(i)$ and the unique configuration reached by applying it in $c(i)$. In both cases, this is well-defined.

This concludes the definition of the strategy for Verifier.
\end{definition}

We now show that this strategy is winning, i.e., that each outcome of~$\gsgame(L(\aut,\tsys,\cal{P}))$ that is consistent with the strategy is in $L(\aut, \tsys, \cal{P})$.
Here, we argue about each outcome in isolation.
Thus, let us fix some notation.

Let $\combine{\alpha,\beta}$ be a fixed such outcome.
Let $\rho = \projone{\alpha}$ be the sequence of vertices picked by Falsifier and let $\rho'=\beta$ be the sequence of vertices picked by Verifier. Note that this is by construction of the strategy a path of $\tsys$.
Further, let $t = \lambda(\rho)$ and $t' = \lambda(\rho')$.
Finally, let $P(0)P(1)P(2)\cdots = \projtwo{\alpha}$ be the sequence of sets of prophecy variables picked by Falsifier.

As $\combine{\alpha,\beta}$ is an outcome consistent with our strategy, we have defined a sequence~$M(0)M(1)M(2)\cdots$ with $M(i) \subseteq P(i)$ for all $i$,
$\ch(i)$ for all $i$ with nonempty $M(i)$,
as well as a run~$r = c(0)\tau(0)c(1)\tau(0)\cdots$ processing $\combine{t,t'}$.

Recall that we want to show that $\combine{\alpha,\beta}$ is in $L(\aut, \tsys, \cal{P})$. 
There are some trivial cases: If the sequence~$\rho$ picked by Falsifier is not a path of $\tsys$, then $\combine{\alpha,\beta}$ is indeed in $L(\aut, \tsys, \cal{P})$. 
Similarly, if Falsifier makes a wrong prediction, i.e., if $\alpha$ does not satisfy
\[\forall i \in \nats.\ \forall P \in \mathcal{P}.\ \pvar{P} \in \projtwo{\alpha(i)} \leftrightarrow \lambda(\projone{\alpha(i)\alpha(i+1)\alpha(i+2)\cdots}) \in P,
\]
then $\combine{\alpha,\beta}$ is also in $L(\aut, \tsys, \cal{P})$.
Hence, we can focus on those outcomes in which Falsifier picks a path and where all of his predictions are correct.

\begin{assumption}
\label{assump_truthful}
Let $\combine{\alpha,\beta}$ be an outcome of $\gsgame(L(\aut,\tsys,\cal{P}))$ and let $\rho$ be defined as above.
We assume that $\rho$ is a path of $\tsys$ and that $\alpha$ satisfies
\[\forall i \in \nats.\ \forall P \in \mathcal{P}.\ \pvar{P} \in \projtwo{\alpha(i)} \leftrightarrow \lambda(\projone{\alpha(i)\alpha(i+1)\alpha(i+2)\cdots}) \in P.
\] 
\end{assumption}

Under this assumption, we show that Verifier has used the \nonempty-case at each position.
This is a crucial step in our proof that our strategy is winning, as it implies that the strategy always makes purposeful moves.

Before we begin, we need to formally define infixes of runs and the concatenation of runs.
Let $r = c_0\tau_0c_1\tau_1\cdots$ be a run. Then $r[m,n]$ denotes $c_m\tau_m\cdots\tau_nc_{n+1}$. The concatenation $r\cdot r'$ of two runs $r,r'$ is defined if $r = c_0\tau_0\cdots\tau_mc_{m+1}$ is a finite run and $r' = c'_0\tau'_0\cdots$ is a finite or infinite run with $c_{m+1} = c'_0$. The result $r\cdot r'$ is defined as expected.

Furthermore, let $c$ be a configuration of $\aut$, $v \in V$ a vertex of $\tsys$, and $t \in \Sigma^\omega$.
We define 
\[\opt{c,v,t} = \min_{r} \firstF{r},\]
where $r$ ranges over accepting runs of $\aut$ starting with configuration $c$ and processing $\combine{t,t'}$, where $t' \in \traces{\tsys_{\Succ{v}}}$.
If there is no such accepting run, then $\opt{c,v,t} = \infty$.

\begin{lemma}
\label{lemma_nonempty}
Let $\combine{\alpha,\beta}$ be an outcome of $\gsgame(L(\aut,\tsys,\cal{P}))$ that is consistent with the strategy constructed in \cref{def_strategy}
and that satisfies \cref{assump_truthful}. Then, Verifier has used the \nonempty-case at every position.
\end{lemma}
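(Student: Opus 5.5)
We prove the claim by induction on $i$, carrying an invariant strong enough that at each position the prophecy variables collected in $M(i)$ are witnessed by an actual path/run, so that (by \cref{assump_truthful}) Falsifier must have set at least one of them. We split positions by their status in the run $r$ that the strategy builds: steps, handled by the \texttt{Step}-prophecies, and positions inside humps, handled by the \texttt{Hump}-prophecies.

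The invariant at step positions is the following. If position $i$ is a step of $r$ (equivalently, by truthfulness, $\pvar{\step}\in P(i)$ or $c(i)$ has empty stack), then $\opt{c(i),\rho'(i-1),t[i,\infty)}<\infty$. In words, some continuation $t'$ of Verifier from a successor of $\rho'(i-1)$ yields an accepting run from $c(i)$. For $i=0$ this is exactly $\tsys\models\forall\pi.\exists\pi'.\aut$ applied to $t$. Given the invariant, take an accepting run $r'$ from $c(i)$ that attains the optimum. Since $i$ is a step, the part of the stack below the top symbol is never touched, so the freshened version of $r'$ is a fresh run. Reading off its first transition, its first vertex, and, for a matched call, the matching return's vertex and transition, we see that $t[i,\infty)$ lies in the corresponding \texttt{MatchedCallStep}, \texttt{UnmatchedCallStep}, \texttt{SkipStep} or \texttt{UnmatchedReturnStep} prophecy. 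Which prophecy applies is selected by the type of $t(i)$ and by $\properstep$, and these agree with the run because the stack height is controlled by $t$ alone. Truthfulness then puts this prophecy's variable in $P(i)$, so $M(i)\neq\emptyset$.

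To propagate the invariant, we use the chosen $\ch(i)$ itself: it certifies an accepting fresh run beginning with $\tau(i)$ from vertex $\rho'(i)$. Its tail therefore witnesses $\opt{\cdot}<\infty$ at the next step position. For a skip or unmatched step this is position $i+1$. For a matched call it is the position after the matching return. We must also check that $r$ actually arrives there in the configuration the witness predicts, which is the job of the hump invariant.

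The hump invariant is as follows. For a position $i$ inside a hump, let $j$ be the maximal open call as in \cref{def_strategy}. Then there is a finite path from a successor of $\rho'(i-1)$ and a run from $c(i)$ on $t$ that ends by taking exactly the $(v_\eta,\eta)$ fixed in $\ch(j)$ at the first position where the stack drops below $\sh(c(i))$. We establish this by a nested induction over hump depth. The base case comes from the witness run of $\ch(j)$, which is a \texttt{MatchedCallStep} or a \texttt{CallHump}: restricting that run to the segment after $j$ gives such a run. The witness run's infix is fresh and depends on the stack only through the top symbol, which the game's $r$ shares because it applied the same transitions. Inductively, a \texttt{CallHump} or \texttt{SkipHump} chosen at $i$ supplies the required witness for $i+1$ and for the inner hump. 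At a return in a hump, \texttt{ReturnHump}$\langle\rho(i),v_\eta\rangle\langle\eta\rangle$ is trivially true for $t[i,\infty)$, since it only constrains $t(i)$, and $\eta$ is enabled because the top symbol is the one pushed at $j$. So $M(i)\neq\emptyset$, and popping restores the enclosing invariant: either the outer hump invariant, or, if $j$ was a \texttt{MatchedCallStep}, the step invariant at $i+1$.

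The main obstacle is this glueing. We must show that the \emph{specific} prophecy chosen at each point, which may be any member of $M(i)$, is consistent with the constraints ($v_\eta,\eta,\vartheta$, and vertex $v_n=v_\eta$) inherited from $\ch(j)$. We must also show that the freshness and stack-bottom conventions make witness runs transferable to $r$'s actual configurations. I would handle this by making both invariants explicit statements about the existence of witnesses with prescribed endpoints and checking each of the seven transitions between prophecy types separately. The $\firstF{}$-minimality clauses are not needed here, since they only matter for acceptance in the subsequent lemma.
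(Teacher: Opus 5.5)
Your proposal is correct and follows essentially the paper's route. The paper also inducts on $i$, takes the witness of a previously chosen prophecy ($\ch(i-1)$, or $\ch(j)$ and $\ch(j')$ for the relevant open calls) to get a continuation from $c(i)$ that respects the inherited return parameters, and re-optimizes it to place $t[i,\infty)$ in one of the prophecies defining $M(i)$. Your step and hump invariants state this as forward propagation, where the paper uses a backward case analysis on position $i-1$.

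One correction: the $\firstF{\cdot}$-minimality clauses are part of the prophecy languages, so at hump positions you must also pick a $\firstF{\cdot}$-minimal run ending with the inherited $(v_\eta,\eta)$, as you already do at steps, to get membership in a \texttt{CallHump} or \texttt{SkipHump} prophecy. What is genuinely unnecessary is any use of the minimality of earlier witnesses.
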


\begin{proof}
In the following, we write $t[i,\infty)$ for the suffix of $t$ starting at position~$i$ (which is included) and use all the notation~$(\rho$, $\rho'$, $t$, $t'$, the $P(i)$, the $M(i)$, the $\ch(i)$, and $r$) introduced above.

To begin with, we note that due to \cref{assump_truthful}, Falsifier picks a path of $\tsys$, i.e., $\projone{\alpha}$ is a path of $\tsys$, which implies $t \in \traces{\tsys}$.
Moreover, \cref{assump_truthful} guarantees that, for all $i\in \nats$ and all $P \in \cal{P}$, $t[i,\infty) \in P$ if and only if $\pvar{P} \in P(i)$, i.e., the prophecies are truthful.
This is a necessary precondition to prove that Verifier has always used the \nonempty-case, i.e., $M(i)$ is nonempty for each $i \geq 0$, because $M(i)$ is a subset of $P(i)$. So necessarily, $P(i)$ must be guaranteed to be correct (otherwise, Falsifier could just pick $P(i)=\emptyset$).
Note that the prophecies being truthful also implies that we can use them to check whether a position of $r$ is a step or a proper step. We will make use of that fact throughout the proof.

For each $i \geq 0$, we show inductively that $M(i)$ is nonempty.
Our induction hypothesis is that for each $j < i$, $M(j)$ is nonempty.

Case $i = 0$.
Since $\tsys \models \varphi$ and $t \in \traces{\tsys}$, there exists some $\tilde t \in \traces{\tsys}$ such that $\combine{t,\tilde t} \in L(\aut)$.
Thus, there exists a path~$\tilde\rho$ that starts from some $v_{\tau} \in V_\initmark$ and an accepting run~$\tilde r = c(0)\cdots$ on $\combine{t,\lambda(\tilde\rho)}$, say, with first transition~$\tau$, and with $\firstF{\tilde r} = \opt{(q_\initmark,\bot),\bullet,t}$.
Recall that $c(0) = (q_\initmark,\bot)$ and $0$ is a step because the stack is empty.
This yields
\[
t \in \begin{cases} 
    \prophyMCallAcc{\rho(0),\bullet,v_\tau,v_\eta}{\tau,\eta} & \text{if } \pvar{\properstep} \notin P(0) \text{ and } \type{t(0)} = call\\
    \prophyUCallAcc{\rho(0),\bullet,v_\tau}{\tau} & \text{if } \pvar{\properstep} \in P(0) \text{ and } \type{t(0)} = call\\
    \prophySkipAcc{\rho(0),\bullet,v_\tau}{\tau} & \text{if } \type{t(0)} = skip\\
    \prophyReturnAcc{\rho(0),\bullet,v_\tau}{\tau} & \text{if } \type{t(0)} = return
\end{cases}
\]
where $v_\eta$ (in $\tilde\rho$) and $\eta$ (in $\tilde r$) occur at the unique position in question w.r.t.\ the definition of\newline $\prophyMCallAcc{\rho(0),\bullet,v_\tau,v_\eta}{\tau,\eta}$.
That is, a matched call occurs, $\tau$ should be used to process the call, and $\eta$ should be used to process the matching return that occurs $n$ transitions later according to the definition.
Hence, the prophecy variable for some \myquote{\texttt{Step}}-prophecy is in $P(0)$.
By definition of $M(0)$, this variable is also selected to be in $M(0)$.
We conclude that $M(0)$ is nonempty.

Case $i > 0$.
If $i$ is a step, we show that there exists a path $\hat\rho$ and an accepting run $\hat r$ on $\combine{t[i,\infty),\lambda(\hat\rho)}$ with $\firstF{\hat r} = \opt{c(i),\rho'(i-1),t[i,\infty)}$ where $\hat\rho$ starts from some $v_{\tau} \in \Succ{\rho'(i-1)}$. By definition of the \myquote{\texttt{Step}}-prophecies, this shows that $t[i,\infty)$ is either in the {\texttt{MatchedCallStep}-}, or {\texttt{UnmatchedCallStep}-}, or {\texttt{SkipStep}-}, or {\texttt{UnmatchedReturnStep}}-prophecy, which is parameterized by the relevant vertices and transitions obtained from $\hat\rho$ and $\hat r$.
Which type of \myquote{\texttt{Step}}-prophecy it is depends on the type of $t(i)$ and whether $\pvar{\properstep} \in P(i)$.
Hence, the prophecy variable for some \myquote{\texttt{Step}}-prophecy is in $P(i)$.
By definition of $M(i)$, this variable is also selected to be in $M(i)$.
We can then conclude that $M(i)$ is nonempty which we need to show.

To obtain $\hat \rho$ and $\hat r$ as above, we make a further case distinction on whether $i-1$ is a step. 
We first assume that $i-1$ is a step.
By induction hypothesis, $M(i-1)$ is nonempty. 
Thus, $\ch(i-1)$ is of the form 
\begin{align*}
    & \prophyMCallAcc{\rho(i-1),\rho'(i-2),\rho'(i-1),v_{\eta}}{\tau(i-1),\eta} \text{ with $v_\eta \in V$, $\eta \in \Delta$, or}\\
    &\prophyUCallAcc{\rho(i-1),\rho'(i-2),\rho'(i-1)}{\tau(i-1)}, \text{ or}\\
    &\prophySkipAcc{\rho(i-1),\rho'(i-2),\rho'(i-1)}{\tau(i-1)}, \text{ or}\\
    &\prophyReturnAcc{\rho(i-1),\rho'(i-2),\rho'(i-1)}{\tau(i-1)},
\end{align*}
depending on the type of $t(i-1)$ and whether $\pvar{\properstep} \in P(i-1)$.
Here, and in the following, for the special case of $i=1$, $i-2$ is negative and we use $\rho'(-1) = \bullet$.

By definition of the \myquote{\texttt{Step}}-prophecies, this implies that there is some path $\tilde\rho = \rho'(i-1)\cdots$ and a run $\tilde r = c(i-1)\tau(i-1)\cdots$ on $\combine{t[i-1,\infty),\lambda(\tilde\rho)}$ that is accepting.
Clearly, $\tilde r[1,\infty]$ is an accepting run on $\combine{t[i,\infty),\lambda(\tilde\rho[1,\infty))}$ which is a necessary criterion for a prophecy variable for some \myquote{\texttt{Step}}-prophecy to be in $P(i)$ (and hence in $M(i)$).
However, $\tilde\rho[1,\infty)$ and $\tilde r[1,\infty]$ is simply a combination of a path and run which witnesses that Verifier can continue her path and continue building the run in a way that allows for acceptance.
Recall, as explained in the description of the prophecies, for Büchi acceptance it is actually necessary to not only have the possibility to be accepting, actual progress needs to be made towards visiting an accepting state.
Thus, instead of $\tilde\rho[1,\infty)$ and $\tilde r[1,\infty]$, we consider an \emph{optimal} combination of such a path and accepting run in terms of visiting an accepting state as soon as possible:
There is a path $\hat\rho$ that starts from some $v_\tau \in \Succ{\rho'(i-1)}$ and a run $\hat r = c(i)\cdots$ on $\combine{t[i,\infty),\lambda(\hat\rho)}$, say, with first transition $\tau$, and $\firstF{\hat r} = \opt{c(i),\rho'(i-1),t[i,\infty)}$.
This yields
\[
t[i,\infty) \in \begin{cases} 
    \prophyMCallAcc{\rho(i),\rho'(i-1),v_\tau,v_\eta}{\tau,\eta} & \text{if } \pvar{\properstep} \notin P(i) \text{ and } \type{t(i)} = call\\
    \prophyUCallAcc{\rho(i),\rho'(i-1),v_\tau}{\tau} & \text{if } \pvar{\properstep} \in P(i) \text{ and } \type{t(i)} = call\\
    \prophySkipAcc{\rho(i),\rho'(i-1),v_\tau}{\tau} & \text{if } \type{t(i)} = skip\\
    \prophyReturnAcc{\rho(i),v_\tau}{\tau} & \text{if } \type{t(i)} = return\\
\end{cases}
\]
where $v_\eta$ (in $\hat\rho$) and $\eta$ (in $\hat r$) occur at the unique position in question w.r.t.\ the definition of\newline $\prophyMCallAcc{\rho(i),\rho'(i-1),v_\tau,v_\eta}{\tau,\eta}$.
That is, a matched call occurs, $\tau$ should be used to process the call, and $\eta$ should be used to process the matching return that occurs $n$ transitions later according to the definition.
Hence, the prophecy variable for some \myquote{\texttt{Step}}-prophecy is in $P(i)$.
By definition of $M(i)$, this variable is also selected to be in $M(i)$.
We conclude that $M(i)$ is nonempty.

We now consider the case where $i-1$ is not a step.
Since $i$ is a step, this implies that the type of $t(i-1)$ is necessarily a return.
We are interested in the position where the matching call is made.
That is, the maximal $j < i-1$ where a call-transition is taken such that $\sh(c(j)) + 1 = \sh(c(i-1))$.
We have that $\sh(c(i)) = \sh(c(j)$, and the pair~$(j,i-1)$ of positions is a matching call-return pair.
Since $i$ is a step, $j$ is also a step.

In \cref{fig_pushdownstack}, such a situation is depicted choosing $i$ as position $8$, then the pair~$(2,7)$ of positions is the matching call-return pair.

In $j$ a matched call is made, hence $\pvar{\properstep} \notin P(j)$.
By induction hypothesis $M(j)$ is nonempty. Since $\pvar{\properstep} \notin P(j)$ and $\type{t(j)} = call$, we obtain that $\ch(j)$ is of the form 
\[
\prophyMCallAcc{\rho(j),\rho'(j-1),\rho'(j),v_\eta}{\tau(j),\eta} \text{ with $v_\eta \in V$ and $\eta\in\Delta$.}
\]
By definition of $\prophyMCallAcc{\rho(j),\rho'(j-1),\rho'(j),v_\eta}{\tau(j),\eta}$, this implies there exists some path $\tilde\rho = v_{j}v_{j+1}\cdots$ with $v_j = \rho'(j)$ and $v_i = v_\eta$ as well as an accepting run $\tilde r = c_j\tau_j\cdots$ on $\combine{t[j,\infty],\lambda(\tilde\rho)}$ with $c_j = c(j)$, $\tau_j = \tau(j)$ and $\tau_i = \eta$.
We now conclude that $M(i)$ is nonempty using suffixes of $\tilde\rho$ and $\tilde r$:
By induction hypothesis, $M(i-1)$ is nonempty.
By definition of $M(i-1)$, we obtain that $\rho'(i-1) = v_\eta$ and $\tau(i-1) = \eta$ because they are determined by $\ch(j)$.
Consequently, $c(i) = c_i$.
We split $\tilde\rho$ into $\rho'(j)v_{j+1}\cdots v_{i-1}v_\eta$ and its suffix $\tilde\rho'$.
We split $\tilde r$ into $c(j)\tau(j)c_{j+1}\cdots c_{i-1}\eta$ and its suffix $\tilde r' = c(i)\cdots$.
The run $\tilde r'$ on $\combine{t[i,\infty),\lambda(\tilde\rho')}$ is accepting.
As explained before (in the case where $i-1$ is assumed to be a step), the path $\tilde\rho'$ and run $\tilde r'$ are a combination that witnesses that Verifier can continue in a way that will be accepting, but we are looking to make the most progress towards accepting.
Hence, we take a combination of path and accepting run which are optimal in that sense, i.e., where an accepting state is visited as soon as possible:
There exists a path $\hat\rho$ that starts from some $v_\tau \in \Succ{v_\eta}$ (recall $v_\eta = \rho'(i-1)$) and an accepting run $\hat r= c(i)\cdots$, say with first transition $\tau$, which is optimal, i.e., $\firstF{\hat r} = \opt{c(i),\rho'(i-1),t[i,\infty)}$.
This yields
\[
t[i,\infty) \in \begin{cases}
    \prophyMCallAcc{\rho(i),\rho'(i-1),v_\tau,v_\eta}{\tau,\eta} &  \text{if } \pvar{\properstep} \notin P(i) \text{ and } \type{t(i)} = call\\
    \prophyUCallAcc{\rho(i),\rho'(i-1),v_\tau}{\tau} & \text{if } \pvar{\properstep} \in P(i)\text{ and } \type{t(i)} = call\\
    \prophySkipAcc{\rho(i),\rho'(i-1),v_\tau}{\tau} & \text{if } \type{t(i)} = skip\\
    \prophyReturnAcc{\rho(i),v_\tau}{\tau} & \text{if } \type{t(i)} = return\\
\end{cases}
\]
where $v_\eta$ (in $\hat\rho$) and $\eta$ (in $\hat r$) occur at the unique position in question w.r.t.\ the definition of \newline$\prophyMCallAcc{\rho(i),\rho'(i-1),v_\tau,v_\eta}{\tau,\eta}$.
That is, a matched call occurs, $\tau$ should be used to process the call, and $\eta$ should be used to process the matching return that occurs $n$ transitions later according to the definition.
Hence, the prophecy variable for some \myquote{\texttt{Step}}-prophecy is in $P(i)$.
By definition of $M(i)$, this variable is also selected to be in $M(i)$.
We conclude that $M(i)$ is nonempty.

We have completed the case where $i$ is a step.
Now we consider the case where $i$ is not a step, i.e., $i$ is in a hump.
Consequently, for this part of the proof, also \myquote{\texttt{Hump}}-prophecies are relevant.
We are interested in the previous call, that is, the maximal $j < i$ where a call-transition is taken such that $\sh(c(j)) + 1 = \sh(c(i))$.
By induction hypothesis, $M(j)$ is nonempty, thus, $\ch(j)$ is of the form 
\[
\prophyMCallAcc{\rho(j),\rho'(j-1),\rho'(j),v_\eta}{\tau(j),\eta} \text{ with $v_\eta \in V$ and $\eta\in\Delta$}
\]
or
\[
\prophyCall{\rho(j),\rho'(j-1),\rho'(j),v_\eta,v_\vartheta}{\tau(j),\eta,\vartheta} \text{ with $v_\eta,v_\vartheta \in V$ and $\eta,\vartheta \in \Delta$.}
\]

In \cref{fig_pushdownstack}, the first situation occurs by taking $i$ as position~$5$, then $j$ is position~$2$.
The second situation occurs by taking $i$ as $6$, then $j$ is $5$.

By definition of $\prophyMCallAcc{\cdot}{\cdot}$ resp.\ $\prophyCall{\cdot}{\cdot}$, the transition $\tau(j)$ processes the call $(t(j),\lambda(\rho'(j))) \in \Sigma^2$ such that there is a transition $\eta$ that processes the matching return $(t(j+k),\lambda(v_\eta)) \in \Sigma^2$ for some vertex $v_\eta \in V$ which occurs, say, $k$ transitions later.
Having this prophecy as $\ch(j)$, we show that Verifier plays in a way that honors the commitment made for the return.
We have that $j < i \leq k$.

We distinguish how $i$ is reached, i.e., we take a look at $i-1$.
By induction hypothesis, $M(i-1)$ is nonempty.

Firstly, assume the type of $t(i-1)$ is a skip as this is the simplest case.
Thus, $\ch(i-1)$ is of the form 
\[
\prophySkip{\rho(i-1),\rho'(i-2),\rho'(i-1),v_\eta}{\tau(i-1),\eta} \text{ with $v_\eta$ and $\eta$ as in $\ch(j)$.}
\]
Recall the case of \cref{def_strategy} where the set $M$ is populated by prophecy variables for \myquote{\texttt{SkipHump}}-prophecies (see \cref{eq_skiphump}), to see that $v_\eta$ and $\eta$ are determined by $\ch(j)$.

For the unique suitable $n$ (that is where the next return occurs according to the definition of $\prophySkip{\rho(i-1),\rho'(i-2),\rho'(i-1),v_\eta}{\tau(i-1),\eta}$), let $\tilde\rho$ be a finite path of the form
\[
 \tilde\rho = v_{0}\cdots v_n \text{ with } v_0 = \rho'(i-1) \text{ and } v_n= v_\eta
\]
and $\tilde r$ be a finite run on $\combine{t[i-1,n],\lambda(\tilde \rho)}$ of the form
\[
\tilde r = c_0\tau_0\cdots c_n\tau_n c_{n+1} \text{ with $c_0 = c(i-1)$ and $\tau_0 = \tau(i-1)$ (and hence $c_1 = c(i)$ and $\tau_n = \eta$)},
\]
such that $\tilde \rho$ and $\tilde r$ witness that the prophecy variable for $\ch(i-1)$ is in $P(i-1)$ (hence also in $M(i-1)$).
Taking $\tilde \rho[1,n] = v_1\cdots v_n$ and $\tilde r[1,n] = c_1\tau_1\cdots c_n\tau_n c_{n+1}$ gives us a path starting in $\Succ{\rho'(i-1)}$ and reaching $v_\eta$ as well as a run starting from $c(i)$ using $\eta$ as its last transition.
As before, we are not interested in any combination of path and run (such as $\tilde \rho[1,n]$ and $\tilde r[1,n]$) that reaches $v_\eta$ where $\eta$ can be taken in $n$ transitions, we are looking for an optimal combination of path and run with these properties in terms of visiting an accepting state as soon as possible:
There is a path 
\[
\hat \rho = v'_1\cdots v'_n \text{ with $v'_1 \in \Succ{\rho'(i-1)}$ and $v'_n = v_\eta$, and}
\]
a run on $\combine{t[i,n],\lambda(\hat \rho)}$ 
\[
 \hat r = c'_1\tau'_1\cdots c'_n\tau'_n c'_{n+1} \text { with } c'_1 = c(i) \text{ and } \tau'_n =\eta \text{ (and hence } c'_{n+1} = c_{n+1}) 
\]
such that for all paths $\hat \rho' = v''_0v''_1\cdots v''_{n-1}v_\eta$ where $v''_0 \in \Succ{\rho'(i-1)}$ and runs $\hat r' = c(i)\cdots \eta c_{n+1}$ on $\combine{t[i,n],\lambda(\hat \rho')}$ we have $\firstF{\hat r} \leq \firstF{\hat r'}$.
This yields
\[
t[i,\infty) \in \begin{cases} 
    \prophyCall{\rho(i),\rho'(i-1),v'_1,v'_k,v_\eta}{\tau'_1,\tau'_k,\eta} & \text{if $\type{t(i)} = call$}\\
    \prophySkip{\rho(i),\rho'(i-1),v'_1,v_\eta}{\tau'_1,\eta} & \text{if $\type{t(i)} = skip$}\\
    \prophyReturn{\rho(i),v_\eta}{\eta} & \text{if $\type{t(i)} = return$},
\end{cases}
\]
where $v'_k$ (in $\hat \rho$) and $\tau'_k$ (in $\hat r)$ occur at the unique position in question w.r.t.\ the definition of \newline$\prophyCall{\rho(i),\rho'(i-1),v'_1,v'_k,v_\eta}{\tau'_1,\tau'_k,\eta}$.
As a reminder of its definition, a matched call occurs, $\tau'_1$ processes the call, and $\tau'_k$ processes the matching return.
Hence, the prophecy variable for some \myquote{\texttt{Hump}}-prophecy is in $P(i)$.
By definition of $M(i)$, this variable is also selected to be in $M(i)$.
We conclude that $M(i)$ is nonempty.

Secondly, assume the type of $t(i-1)$ is a call.
This is the situation depicted in \cref{fig_pushdownstack} by taking $i$ as $6$, then $j$ is $5$.
Then, $j = i-1$, and we re-express $\ch(j)$ as $\ch(i-1)$ by replacing $j$ with $i-1$:
\[
\prophyCall{\rho(i-1),\rho'(i-2),\rho'(i-1),v_\eta,v_\vartheta}{\tau(i-1),\eta,\vartheta} \text{ with $v_\eta,v_\vartheta \in V$ and $\eta,\vartheta \in \Delta$.}
\]
Again, we take a look at a path and run that witness that the prophecy variable for $\ch(i-1)$ is in $P(i-1)$ (hence also in $M(i-1)$).
For the unique suitable $m < n$ (as in the definition of $\prophyCall{\cdot}{\cdot}$ three lines above), let $\tilde\rho$ be a finite path of the form
\[
 \tilde\rho = v_{0}\cdots v_n \text{ with  $v_0 = \rho'(i-1)$, $v_m = v_\eta$, and $v_n = v_\vartheta$},
\]
and $\tilde r$ be a finite run $\combine{t[i-1,n],\lambda(\tilde \rho)}$ of the form
\[
\tilde r = c_0\tau_0\cdots c_n\tau_n c_{n+1} \text{ with $c_0 = c(i-1)$, $\tau_0 = \tau(i-1)$, hence $c_1 = c(i)$, $\tau_m = \eta$, and $\tau_n = \vartheta$}
\]
such that $\tilde \rho$ and $\tilde r$ witness that the prophecy variable for $\ch(i-1)$ is in $P(i)$ (hence in $M(i)$).
Taking the infix~$\tilde \rho[1,m] = v_1\cdots v_m$ and $\tilde r[1,m] = c_1\tau_1\cdots c_m\tau_n c_{m+1}$ gives us a path starting in $\Succ{\rho'(i-1)}$ and reaching $v_\eta$ as well as a run starting from $c(i)$ using $\eta$ as its last transition.
As before, we are considering an optimal (in terms of visiting an accepting state as soon as possible) combination where the path and run end with the same vertex respectively transition as $\tilde{\rho}$ and $\tilde{r}$: 
There is a path 
\[
\hat \rho = v'_1\cdots v'_m \text{ with  $v'_1 \in \Succ{\rho'(i-1)}$  and $v'_m = v_\eta$, and }
\]
a run on $\combine{t[i,n],\lambda(\hat \rho)}$ 
\[
 \hat r = c'_1\tau'_1\cdots c'_m\tau'_m c'_{m+1} \text{ with  $c'_1 = c(i)$  and $\tau'_m =\eta$ (and hence $c'_{m+1} = c_{m+1}$)} 
\]
such that for all paths $\hat \rho' = v''_0v''_1\cdots v''_{m-1}v_\eta$ where $v''_0 \in \Succ{\rho'(i-1)}$ and runs $\hat r' = c(i)\cdots \eta c_{m+1}$ on $\combine{t[i,n],\lambda(\hat \rho')}$ we have $\firstF{\hat r} \leq \firstF{\hat r'}$.
This yields
\[
t[i,\infty) \in \begin{cases} 
    \prophyCall{\rho(i),\rho'(i-1),v'_1,v'_k,v_\eta}{\tau'_1,\tau'_k,\eta} & \text{if $\type{t(i)} = call$}\\
    \prophySkip{\rho(i),\rho'(i-1),v'_1,v_\eta}{\tau'_1,\eta} & \text{if $\type{t(i)} = skip$}\\
    \prophyReturn{\rho(i),v_\eta}{\eta} & \text{if $\type{t(i)} = return$},
\end{cases}
\]
where $v'_k$ (in $\hat \rho$) and $\tau'_k$ (in $\hat r)$ occur at the unique position in question w.r.t.\ the definition of \newline$\prophyCall{\rho(i),\rho'(i-1),v'_1,v'_k,v_\eta}{\tau'_1,\tau'_k,\eta}$.
As a reminder of its definition, a matched call occurs, $\tau'_1$ should be used to process the call, and $\tau'_k$ should be used to process the matching return.
Hence, the prophecy variable for some \myquote{\texttt{Hump}}-prophecy is in $P(i)$.
By definition of $M(i)$, this variable is also selected to be in $M(i)$.
We conclude that $M(i)$ is nonempty.

Lastly, assume the type of $t(i-1)$ is a return.
Such a situation is depicted in \cref{fig_pushdownstack} by taking $i$ as $5$, as position~$5$ lies in between the matching call-return pair~$(2,7)$ (hence, $j$ is $2$).

We are interested in its matching call:
There exists a maximal $j' < i-1$ where a call-transition is taken such that $\sh(c(j')) + 1 = \sh(c(i-1))$.
Note that $j'$ is not a step, as the call is undone by the return in $i-1$.
Furthermore, note that we have $j < j'$ and the type of $t(\ell)$ is a skip for every $j < \ell < j'$.

Returning to the situation depicted in \cref{fig_pushdownstack} by taking $i$ as $5$ where $i-1$ is a return, we obtain that $j'$ is position $3$.

By induction hypothesis, $M(j')$ is nonempty. Thus, $\ch(j')$ is of the form 
\[
\prophyCall{\rho(j'),\rho'(j'-1),\rho'(j'),v_{\eta'},v_\eta}{\tau(j'),\eta',\eta}
\]
for some $v_{\eta'} \in V$ and some return-transitions $\eta'$ and $v_\eta$ and $\eta$ as in $\ch(j)$.
Recall the case of \cref{def_strategy} where the set $M$ is populated by prophecy variables for \myquote{\texttt{CallHump}}-prophecies (see \cref{eq_callhump}), to see that $v_\eta$ and $\eta$ are determined by $\ch(j)$.

By induction hypothesis, $M(i-1)$ is nonempty. Thus, $\ch(i-1)$ is of the form 
\[
\prophyReturn{\rho(i-1),\rho'(i-1)}{\tau(i-1)}
\]
where $\rho'(i-1) = v_{\eta'}$ and $\tau(i-1) = \eta'$ with $v_{\eta'}$ and $\eta'$ as in $\ch(j')$.
Recall the case of \cref{def_strategy} where the set $M$ is populated by prophecy variables for \myquote{\texttt{ReturnHump}}-prophecies (see \cref{eq_returnhump}), to see that $v_{\eta'}$ and $\eta'$ are determined by $\ch(j')$.

There are a path and run that witness that the prophecy variable for $\ch(j')$ is in $P(j')$ (hence in $M(j')$).
In combination with $\ch(i-1)$, we will obtain a path and run which yields that $M(i)$ is nonempty.

For the unique suitable $m < n$ as in the definition of 
\[
\prophyCall{\rho(j'),\rho'(j'-1),\rho'(j'),v_{\eta'},v_\eta}{\tau(j'),\eta',\eta},
\]
let $\tilde\rho$ be a finite path of the form
\[
 \tilde\rho = v_{0}\cdots v_n \text{ with  $v_0 = \rho'(j')$, $v_m = v_{\eta'}$, and $v_n = v_\eta$},
\]
and $\tilde r$ be a finite run $\combine{t[j',n],\lambda(\tilde \rho)}$ of the form
\[
\tilde r = c_0\tau_0\cdots c_n\tau_n c_{n+1} \text{ with $c_0 = c(j')$, $\tau_0 = \tau(j')$, $\tau_m = \eta'$, and $\tau_n = \eta$}
\]
such that $\tilde \rho$ and $\tilde r$ witness that the prophecy variable for $\ch(j')$ is in $P(j')$ (hence in $M(j')$).
Using $\ch(i-1)$, we know that $v_m = \rho'(i-1)$ and $\tau_m = \tau(i-1)$.
Since $c_0 = c(j)$, this yields also $c_{m+1} = c(i)$.
Taking the suffixes $\tilde\rho[m+1,n]$ and $\tilde r[m+1,n]$ gives us a path starting in $\Succ{\rho'(i-1)}$ reaching $v_\eta$ as well as a run starting from $c(i)$ using $\eta$ as its last transition.
As before, instead of considering the combination $\tilde\rho[m+1,n]$ and $\tilde r[m+1,n]$, we consider an optimal combination (in terms of visiting an accepting state as soon as possible) where the path and run end with the same vertex respectively transition as $\tilde\rho[m+1,n]$ and $\tilde r[m+1,n]$:
There is a path 
\[
\hat \rho = v'_{m+1}\cdots v'_n \text{ with  $v'_{m+1} \in \Succ{\rho'(i-1)}$  and $v'_n = v_\eta$, and }
\]
a run on $\combine{t[i,n],\lambda(\hat \rho)}$ 
\[
 \hat r = c'_{m+1}\tau'_1\cdots c'_n\tau'_n c'_{n+1} \text{ with  $c'_{m+1} = c(i)$  and $\tau'_n =\eta$ (and hence $c'_{n+1} = c_{n+1}$)} 
\]
such that for all paths $\hat \rho' = v''_{m+1}v''_{m+2}\cdots v''_{n-1}v_\eta$ where $v''_{m+1} \in \Succ{\rho'(i-1)}$ and runs $\hat r' = c(i)\cdots \eta c_{m+1}$ on $\combine{t[i,n],\lambda(\hat \rho')}$ we have $\firstF{\hat r} \leq \firstF{\hat r'}$.
This yields
\[
t[i,\infty) \in \begin{cases} 
    \prophyCall{\rho(i),\rho'(i-1),v'_{m+1},v'_k,v_\eta}{\tau'_{m+1},\tau'_k,\eta} & \text{if $\type{t(i)} = call$}\\
    \prophySkip{\rho(i),\rho'(i-1),v'_{m+1},v_\eta}{\tau'_{m+1},\eta} & \text{if $\type{t(i)} = skip$}\\
    \prophyReturn{\rho(i),v_\eta}{\eta} & \text{if $\type{t(i)} = return$},
\end{cases}
\]
where $v'_k$ (in $\hat \rho$) and $\tau'_k$ (in $\hat r)$ occur at the unique position in question w.r.t.\ the definition of \newline$\prophyCall{\rho(i),\rho'(i-1),v'_{m+1},v'_k,v_\eta}{\tau'_{m+1},\tau'_k,\eta}$.
As a reminder of its definition, a matched call occurs, $\tau'_{m+1}$ should be used to process the call, and $\tau'_k$ should be used to process the matching return.
Hence, the prophecy variable for some \myquote{\texttt{Hump}}-prophecy is in $P(i)$.
By definition of $M(i)$, this variable is also selected to be in $M(i)$.
We conclude that $M(i)$ is nonempty.
\end{proof}

We now show that since Verifier has always used the \nonempty-case (\cref{lemma_nonempty}) for all outcomes that satisfy \cref{assump_truthful}, the run~$r$ defined during the strategy definition is accepting.

\begin{lemma}
\label{lemma_runAcc}
Let $\combine{\alpha,\beta}$ be an outcome of the Gale-Stewart game $\gsgame(L(\aut,\tsys,\cal{P}))$ that is consistent with the strategy constructed in \cref{def_strategy} and that satisfies \cref{assump_truthful}.
Let $t,t'$ be the traces induced by $\combine{\alpha,\beta}$.
The run~$r$ of $\aut$ on $\combine{t,t'}$ constructed during the strategy definition is accepting.
\end{lemma}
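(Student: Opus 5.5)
We show that $r$ visits an accepting state infinitely often via a potential argument over the steps of $r$. By \cref{lemma_nonempty}, $\ch(i)$ is defined at every position $i$. By \cref{assump_truthful}, $\pvar{\step} \in P(i)$ holds exactly when $i$ is a step of $r$. Every run has infinitely many steps (\cref{remark_steps}). Let $s_0 < s_1 < \cdots$ be these steps. At each step $s$, $\ch(s)$ is a \texttt{Step}-prophecy. Its witness is a fresh accepting run starting in the top-of-stack part of $c(s)$. Because $s$ is a step, this run lifts to an accepting run $\hat r_s$ from $c(s)$ on $\combine{t[s,\infty),\cdot}$, and $\firstF{\hat r_s} = \opt{c(s),\rho'(s-1),t[s,\infty)} < \infty$. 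We write $d(s)$ for this finite value.

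The key claim is the following. Let $s$ and $s'$ be consecutive steps, and suppose no configuration $c(\ell)$ with $s \le \ell \le s'$ has an accepting state. Then $d(s') = d(s) - (s'-s)$, and in particular $d(s') < d(s)$. Since $d$ takes values in $\nats$, an accepting state is then visited between any step and some later step, so infinitely often, and $r$ is accepting.

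To prove the claim we need the segment $r[s,s'-1]$ to coincide, in its $\firstF{\cdot}$ behaviour, with an optimal witness. There are two cases. If $s' = s+1$ (an unmatched call, a skip, or a return on the empty stack), the strategy takes exactly $v_\tau,\tau$ from an optimal witness of $\ch(s)$. Removing the first transition of that witness leaves a run from $c(s+1)$ with $\firstF{} = d(s)-1$. This run is still optimal among runs from successors of $\rho'(s)$; otherwise, prefixing $\tau$ would give a better run from $c(s)$, contradicting minimality. Hence $d(s+1) = d(s)-1$.

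If $s$ is a matched call with return at $s'-1$, then $\ch(s)$ is a \texttt{MatchedCallStep} and fixes $v_\eta,\eta$. Inside the hump, the strategy follows \texttt{CallHump}, \texttt{SkipHump} and \texttt{ReturnHump} choices. Each of these is optimal among finite runs ending with the prescribed return transition at the prescribed vertex. We argue by induction over the nesting structure of the hump, from inner humps outwards, that the constructed hump segment reaches no accepting state later than the global witness's hump segment. The global witness's hump segment is itself a candidate in each local optimisation, since it ends in the same $v_\eta,\eta$ and the same configuration. Splicing the constructed hump segment with the suffix of the global witness gives an accepting run from $c(s)$. Under the no-accepting-state hypothesis this run has $\firstF{} = d(s)$, so its suffix from $c(s')$ realises $d(s) - (s'-s)$. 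By the same concatenation argument as in the first case, no better run from $c(s')$ exists.

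The main obstacle is this hump case, because the local prophecies optimise only up to the return of their own level, not globally. To handle it, we would first show that at each hump position $i$ the local optimum from $c(i)$ to the committed return is bounded by the corresponding suffix of the committed witness, measured as steps to the first accepting state or $\infty$. We would then show this bound is preserved across skips, across nested \texttt{CallHump}/\texttt{ReturnHump} pairs (using the concatenation of the inner optimal run with the remaining outer witness, exactly as in the last case of \cref{lemma_nonempty}), and finally at the \texttt{MatchedCallStep} level. If strict decrease fails when accepting states occur inside humps, that is harmless: such a visit already gives the desired accepting visit.
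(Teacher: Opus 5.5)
Your outer framework is sound, and it reorganises the paper's proof in a useful way. The paper fixes a step $i$ and shows, by one induction over $\ell\in[i,i+k]$ with the shrinking set $S'$ of optimal runs, that $r$ copies some optimal run up to its first visit to $F$. Your potential $d(s)$ instead exploits that every \texttt{Step}-prophecy re-optimises globally, so only one hump has to be analysed at a time. The case $s'=s+1$ and the reduction of the matched-call case to the hump claim are correct. (Minor point: steps are the positions with $\pvar{\step}\in P(i)$ \emph{or} an empty stack in $c(i)$, because of returns on the empty stack.)

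The gap is the hump claim itself, which carries all the difficulty, and its stated justification is false. The global witness's hump segment is guaranteed to be a candidate only at $s+1$, not in each local optimisation. At a nested call $i$, the \texttt{CallHump} prophecy $\ch(i)$ may choose $v_\tau,\tau$ and an inner return $\eta'$ at $v_{\eta'}$ different from the witness's. Any optimum may be picked, and if no candidate reaches $F$ before the committed return, all candidates have $\firstF{\cdot}=\infty$ and the choice is arbitrary. From then on $c(\ell)$ and $\rho'(\ell-1)$ differ from the witness's configuration and vertex, so its remainder is no candidate at all. Moreover, inside the inner hump the horizon is $\eta'$, not $\eta$, so a visit of the witness to $F$ after the inner return is invisible to the prophecies chosen there.

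Your last paragraph points the right way, but the bound has to be tied to the \emph{current} witness, not the global one. Prove by strong induction on $e-i$, where $e$ is the position at which the committed return of the level of $i$ is taken, the following. If some candidate of the local optimisation at $i$ (a run from $c(i)$ along a path starting in $\Succ{\rho'(i-1)}$ and ending with the committed return at the committed vertex) visits $F$ at absolute position $p$, then $r$ visits $F$ at some position $\le p$.

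For $i=e$ the only candidates consist of the committed return, which $r$ takes. Otherwise, by optimality the witness of $\ch(i)$ also visits $F$ by $p$.

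For a skip, that witness without its first transition is a candidate at $i+1$.

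For a call with inner return at $i+m$, the witness's prefix ending with $\eta'$ is a candidate at $i+1$, which settles $p\le i+m+1$. Otherwise, $r$ also takes $\eta'$ at $v_{\eta'}$ at position $i+m$, and neither run touches the stack below the symbol pushed at $i$. Hence $c(i+m+1)$ coincides with the witness's configuration there, and the witness's remainder is a candidate at $i+m+1$.

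Applied at $s+1$, where the global witness really is a candidate, this yields your claim. Without such an argument the matched-call case, and hence the lemma, is not established.
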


\begin{proof}
We first remark that \cref{assump_truthful} guarantees that $t \in \traces{\tsys}$, this is a necessary precondition for the run $r$ on $\combine{t,t'}$ to be accepting (recall we also assume that $\tsys \models \varphi$).
As above, we use all the necessary notation introduced above \cref{lemma_nonempty}.
Furthermore, \cref{lemma_nonempty} implies that $\ch(i)$ is defined for all $i$.

The run $r = c(0)\tau(0)\cdots$ is constructed based on the prophecies.  
In general, to pick $\tau(i)$, the prophecy $\ch(i)$ indicates how to extend $c(0)\tau(0)\cdots\tau(i-1)c(i)$ in a way such that an accepting state is visited as soon as possible. 
But there is an important distinction to be made whether $i$ is a step or not.
If $i$ is step, the run $r$ is in a situation where all open calls have been matched (or never will be, hence are not important).
The prophecy $\ch(i)$ simply indicates an accepting run starting from (a sufficient representation of) $c(i)$ that visits an accepting state as soon as possible.
Now, if $i$ is not step, then $i$ is in a hump, i.e., there are open calls that will be matched in the future.
In that case, the prophecy $\ch(i)$ is parameterized by how the latest open call is to be closed.
Thus, the finite run indicated by $\ch(i)$ is a run that respects how the call is to be closed and furthermore this run visits an accepting state (if possible at all) as soon as possible. 

We prove that this construction implies that $r$ is build in a way such that whenever $r$ is at a step, it follows an accepting run that visits an accepting state as soon as possible until the accepting state has been visited.
When the accepting state has been reached, some calls in $r$ are still open, but the commitment on how to close them has already been made.
The run $r$ continues and closes them.
All calls have been closed when the next step is reached.
As every run contains infinitely many steps, this will eventually happen.
Thus, the run $r$ is free again to follow its way to an accepting state as soon as possible.
A this happens infinitely many times, an accepting state is visited infinitely many times. It follows that the run~$r$ is indeed accepting.

Formally, we show for all $i \geq 0$ that if $i$ is a step then $\opt{c(i),\rho'(i-1),t[i,\infty)} < \infty$ and furthermore, there exists a path
\[
\tilde \rho = v_iv_{i+1}\cdots \text{ with $v_i \in \Succ{\rho'(i-1)}$}, 
\]
and a run 
\[
\tilde r = c_i\tau_i\cdots \text{ with $c_i = c(i)$}
\]
on $\combine{t[i,\infty),\lambda(\tilde \rho)}$ such that $\firstF{\tilde r} = \opt{c(i),\rho'(i-1),t[i,\infty)}$ and
\[
\rho'(\ell) = v_\ell,\ c(\ell) = c_\ell,\ \tau(\ell) = \tau_\ell \text{ (recall that $r = c(0)\tau(0)\cdots c(\ell)\tau(\ell)\cdots$)}
\]
for all $i \leq \ell \leq i+k$, where $k = \firstF{\tilde r}$.
In short, we show that the optimal run $\tilde r$ and the run $r$ constructed during the strategy construction now coincide for the next $k$ positions.
Since at the $k$-th position (starting from $i$ which is a step), $\tilde r$ visits an accepting state, so does $r$.
Since $r$ has infinity many steps, this yields that at infinitely many positions $F$ is visited, i.e., $r$ is accepting.

Fix any $i$ that is a step. Then $\ch(i)$ is of the form
\begin{align*}
&\prophyMCallAcc{\rho(i),\rho'(i-1),\rho'(i),v_\eta}{\tau(i),\eta} \text{ if $\pvar{\properstep} \notin P(i)$ and $\type{t(i)} = call$}\\
&\prophyUCallAcc{\rho(i),\rho'(i-1),\rho'(i)}{\tau(i)}\text{ if $\pvar{\properstep} \in P(i)$ and $\type{t(i)} = call$}\\
&\prophySkipAcc{\rho(i),\rho'(i-1),\rho'(i)}{\tau(i)}\text{ if $\type{t(i)} = skip$}\\
&\prophyReturnAcc{\rho(i),\rho'(i)}{\tau(i)}\text{ if $\type{t(i)} = return$}
\end{align*}
for some vertex $v_\eta$ and some $\Sigmareturn$-transition $\eta$ in first case.
By definition of the prophecy~$\ch(i)$, this directly yields that $\opt{c(i),\rho'(i-1),t[i,\infty)} < \infty$ and there exists a path $\tilde\rho$ and a run $\tilde r$ on $\combine{t[i,\infty),\lambda(\tilde \rho)}$ such that 
\begin{equation}
    \label{eq_opti}
    \firstF{\tilde r} = \opt{c(i),\rho'(i-1),t[i,\infty)}.
\end{equation}

Furthermore, $v_i = \rho'(i) \in \Succ{\rho'(i-1)}$ and $c_i = c(i)$.
We note that there can actually be several of these optimal combinations, let $S$ be the set of these optimal $(\tilde \rho,\tilde r)$.

We argue that $\rho'[i,\infty)$ and $r[i,\infty)$ follows some optimal $\tilde\rho$ and $\tilde r$ at least until $F$ is visited.
Formally, we prove that there exists  $(\tilde \rho,\tilde r) \in S$, say $\tilde \rho = v_iv_{i+1}\cdots$ and $\tilde r = c_i\tau_i\cdots$, such that for all $i \leq \ell \leq i+k$ where $k = \firstF{\tilde r}$, $\ch(\ell)$ is picked w.r.t.\ $\tilde \rho$ and $\tilde r$ meaning the following:
\begin{description}
    \item[Condition 1]
    \label{item1}
    We have $\rho'(\ell) = v_\ell,\ c(\ell) = c_\ell,\ \tau(\ell) = \tau_\ell$. 
    \item[Condition 2]
    \label{item2}If the type of $t(\ell)$ is a call which has a matching return later, then $\ch(\ell)$ fixes the return (say $n$ positions later) as $\tau_{\ell+n}$ (from $\tilde r$) to be taken at $v_{\ell+n}$ (from $\tilde \rho$).
    Thus, \cref{def_strategy} ensures that $\rho'(\ell+n)$ will be $v_{\ell+n}$ and $\tau(\ell + n)$ will be $\tau_{\ell+n}$.
\end{description}
We note the first condition covers the statement given at the start of this lemma which is enough to show that the constructed run is accepting.
The second condition is only needed for correctness.

Before we start with the proof, we note that, for all $(\tilde \rho,\tilde r) \in S$, the prefix $\tilde r[0,k)$ does not visit $F$, and since $\tilde r$ is optimal, we obtain that 
\begin{align}
\label{eq_1}
   \firstF{\tilde r[i-\ell,\infty)} = \opt{c(\ell),\rho'(\ell-1),t[\ell,\infty)},\text{ and } 
\end{align}
\begin{align}
\label{eq_2}
\firstF{\tilde r} = k = (\ell - i) + \firstF{\tilde r[\ell-i,\infty)}
\end{align}
for all $i \leq \ell \leq i+k$.

For any $\ell$ with $i < \ell \leq i+k$, we assume that $\ch(i)$ up to $\ch(\ell-1)$ picked w.r.t.\ $\tilde \rho$ and $\tilde r$ for some $(\tilde \rho,\tilde r) \in S$.
We show that $\ch(\ell)$ is picked w.r.t.\ $\tilde \rho$ and $\tilde r$ for some $(\tilde \rho,\tilde r) \in S'$, where $S' \subseteq S$ is the subset of pairs $(\tilde \rho,\tilde r)$ which are compatible with $\ch(i)$ up to $\ch(\ell-1)$.
Now, towards a contradiction, assume $\ch(\ell)$ is not compatible with any $(\tilde \rho,\tilde r) \in S'$.
The prophecy $\ch(\ell)$ implies that there is a path $\hat \rho = \rho'(\ell)\cdots$ and run $\hat r = c(\ell)\tau(\ell)\cdots$ such that $\firstF{\hat r}$ is minimal among relevant alternative paths/runs.
Which paths/runs are relevant, is determined by whether $\ell$ is a step.
\begin{itemize}
    \item $\ell$ is a step: $\ch(\ell)$ is a \myquote{\texttt{Step}}-prophecy, these types of prophecies set no constraints on the future of the path/run that have to be respected.
    \item $\ell$ is not a step, i.e., $\ell$ is in a hump:  $\ch(\ell)$ is a \myquote{\texttt{Hump}}-prophecy, these types of prophecies set constraints on the future of the path/run regarding upcoming returns that have to be respected.
\end{itemize} 
See the definition of the individual prophecies for a more formal understanding. 
We now use the path $\hat \rho$ and run $\hat r$ which are an optimal combination among relevant alternatives to arrive at a contradiction.

For all $(\tilde \rho,\tilde r) \in S'$, it holds that $\rho'[i,\ell)$ is a prefix of $\tilde \rho$ and $r[i,\ell) = c(i)\tau(i)\cdots c(\ell)$ is a prefix of $\tilde r$.
We show that $(\rho'[i,\ell)\cdot\hat\rho,r[i,\ell)\cdot\hat r) \in S'$.
Take any $(\tilde \rho,\tilde r) \in S'$. We know that
\begin{equation}
\label{eq_3}
\firstF{\hat r} \geq \opt{c(\ell),\rho'(\ell-1),t[\ell,\infty)} = \firstF{\tilde r[i-\ell,\infty)}
\end{equation}
by definition of $\opt{\cdot,\cdot,\cdot}$ and \cref{eq_1}.
By \cref{def_strategy}, $\ch(\ell)$ continues the path $\rho'$ and the run $r$ in a way that respects the previous choices about upcoming returns. Thus, in particular, choices made by $\ch(i)$ up to $\ch(\ell-1)$.
We recall that we are guaranteed by Condition 2, that the choices about upcoming returns are made w.r.t.\ $\tilde \rho$ and $\tilde r$ for all $(\tilde \rho,\tilde r) \in S'$. 
Hence, for $\ch(\ell)$, in order to determine that $\hat \rho$ and $\hat r$ yield a combination such that $\firstF{\hat r}$ is minimal among all other relevant path/run combinations, all $\tilde \rho[i-\ell,\infty)$ and $\tilde r[i-\ell,\infty)$ with $(\tilde \rho,\tilde r) \in S'$ have been considered.
We can conclude that
\begin{equation}
\label{eq_4}
\firstF{\hat r} \leq \firstF{\tilde r[i-\ell,\infty)},
\end{equation}
for all $(\tilde \rho,\tilde r) \in S'$.
Thus, combining \cref{eq_3} and \cref{eq_4}, we have 
\begin{equation}
\label{eq_5}
\firstF{\hat r} = \firstF{\tilde r[i-\ell,\infty)},
\end{equation}
for all $(\tilde \rho,\tilde r) \in S'$.
Using \cref{eq_2} and \cref{eq_5}, we obtain
\begin{equation}
\label{eq_6}
\firstF{\tilde r} = (\ell - i) + \firstF{\hat r}
\end{equation}
for all $(\tilde \rho,\tilde r) \in S'$.
Finally, using that $(\ell-i) + \firstF{\hat r} = \firstF{r[i,\ell) \cdot \hat r}$ and \cref{eq_opti}, yields
\begin{equation}
\label{eq_7}
\firstF{r[i,\ell) \cdot \hat r} = \opt{c(i),\rho'(i-1),t[i,\infty)}.
\end{equation}
Thus, $(\rho'[i,\ell)\cdot\hat\rho,r[i,\ell)\cdot\hat r) \in S'$, which contradicts the initial assumption that $\ch(\ell)$ is not compatible to any pair in $S'$.
Hence, $\ch(\ell)$ is compatible to at least one pair in $S'$.

Recall that compatible means it satisfies Conditions 1 and 2, so we have proven that also in the $\ell$-th position after a step, the run $r$ has copied an optimal run $\tilde r$ starting from $c(i)$ processing $t[i,\infty)$ (this is Condition 1).
Since $\ell$ ranges between $0$ and $k$ (inclusive), and $\tilde r$ visits an accepting state at position $k$, we have shown that $r$ visits an accepting state at position $i+k$.

As this is true for all positions $i$ in $r$ that are a step, and as there are infinitely many steps in $r$, we have proven that $r$ visits infinitely many accepting states, i.e., it is accepting.
\end{proof}

We have just shown that if $\combine{\alpha,\beta}$ is an outcome of the Gale-Stewart game~$\gsgame(L(\aut,\tsys,\cal{P}))$ where Verifier played according to the strategy proposed in \cref{def_strategy}, then there exists an accepting run $r$ of $\aut$ (recall, that is, the automaton of the original formula $\varphi$) on $\combine{t,t'}$ where $t,t'$ are the traces induced by the outcome of the game.
Hence, our characterization is indeed complete: if $\tsys \models \forall \pi.\ \exists \pi'.\ \aut$ then Verifier wins $\gsgame(L(\aut,\tsys, \cal{P}))$


\section{Fragments of HyperVPA with Undecidable Model-Checking}
\label{sec_undec}

In this section, we consider the model-checking problem for the fragment~$\Pi_{2,2}$, i.e., formulas of the form~$\forall^*\exists^*.\ \aut$ such that the stack height in $\aut$ only depends on the existentially quantified traces.
Here, one can encode the undecidable universality problem for pushdown automata by letting the universal quantifier range over inputs~$w$ and the existential one over runs~$r$, i.e., $\aut$ checks that $r$ is indeed an accepting run on $w$ by simulating $r$. Hence, the stack of $\aut$ is indeed controlled by the existentially quantified variable.

\begin{theorem}
\label{thm_undec}
\hylogicvpa model-checking for $\Pi_{2,2}$ formulas is undecidable.
\end{theorem}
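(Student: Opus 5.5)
We reduce from the universality problem for $\omega$-PDA, which is undecidable~\cite{cgdet} and was already used for \cref{thm_altfreefragments}.\ref{item_universalfragment}. We may assume the input is an $\omega$-PDA $\autb = (Q, \Sigma, \Gamma, q_\initmark, \Delta, F)$ in a normal form where every transition pushes at most one symbol or pops one. We construct a transition system~$\tsys$ and a formula $\forall \pi.\ \exists \pi'.\ \aut$ in $\Pi_{2,2}$ such that $\tsys \models \forall \pi.\ \exists \pi'.\ \aut$ if and only if $L(\autb) = \Sigma^\omega$. The idea is that $\pi$ ranges over input words and $\pi'$ over encodings of runs, so that the stack discipline is read off the run encoding.

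\textbf{Alphabet and transition system.} Let the joint alphabet be $\Sigma \cup \Delta$; we may add a padding symbol if needed. Let $\tsys$ be a system whose traces are $\Sigma^\omega \cup \Delta^\omega$, namely the disjoint union of two universal components. We partition $\Delta$ visibly:
\begin{itemize}
  \item push transitions are calls,
  \item pop transitions (and returns on $\bot$, if present) are returns,
  \item all others, together with letters of $\Sigma$, are skips.
\end{itemize}
Letters of the product alphabet $(\Sigma\cup\Delta)^2$ get their type from the second component only, so $\aut$ is controlled by $\pi'$. The stack of $\aut$ can then faithfully simulate the stack of $\autb$: on reading the pair $(a,\tau)$ with $\tau=(q,X,b,q',\gamma)$, $\aut$ checks in its state that the current $\autb$-state is $q$ and, using the top-of-stack access permitted for calls and skips and the pop for returns, that the top symbol is $X$. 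It then pushes or pops according to $\gamma$ and moves to $q'$.

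\textbf{Handling $\epsilon$-transitions.} The main technical point is synchronising the input consumption with the run, since $\epsilon$-transitions mean that $\pi$ and $\pi'$ do not advance in lockstep. Since $\aut$ reads both traces letter-by-letter simultaneously, we would like $\pi$ to carry $w$ stretched by padding.

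The simplest fix is to avoid $\epsilon$-transitions entirely: universality remains undecidable for real-time $\omega$-PDA without $\epsilon$-moves (by the standard argument via computations of Turing machines, where the complement of valid computations is recognisable by a real-time PDA). So I would first argue, or cite, that universality of $\epsilon$-free $\omega$-PDA whose transitions push or pop at most one symbol is undecidable. Then $\pi'(n)$ is exactly the transition processing $\pi(n)$, and $\aut$ additionally checks that the letter of $\tau$ equals the $\Sigma$-letter of $\pi$ at the same position.

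The Büchi condition is inherited: the accepting states of $\aut$ are those whose simulated $\autb$-state lies in $F$. Pairs in which $\pi$ is not in $\Sigma^\omega$ (i.e., $\pi$ ranges over the $\Delta$-component) are accepted trivially: $\aut$ accepts whenever the first letter of $\pi$ is in $\Delta$, and in that case it may use the second component freely, which is still a valid VPA behaviour.

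\textbf{Correctness.} Correctness is then immediate. For every $w\in\Sigma^\omega$ there exists a $\pi'\in\Delta^\omega$ accepted together with $w$ if and only if $\autb$ has an accepting run on $w$. Hence the model-checking instance holds if and only if $\autb$ is universal.

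The main obstacle is the $\epsilon$-free/normal-form undecidability claim. If citing it is unconvenient, the fallback is to let $\pi$ range over $(\Sigma\cup\{\#\})^\omega$ and require $\pi$ to contain $\#$ exactly where $\pi'$ uses an $\epsilon$-transition. However, then the universal quantifier ranges over all paddings, many of which are incompatible, so $\aut$ must accept pairs where $\pi$ has a "wrong" padding. Defining "wrong" is not regular, which is why I prefer the real-time variant.
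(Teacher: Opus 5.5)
Your reduction is the one the paper uses. The universal variable $\pi$ ranges over inputs and the existential variable $\pi'$ over transition sequences. The $\omega$-VPA is controlled by the second component and simulates the run, and traces for $\pi$ that are not inputs are accepted trivially. Your system with traces $\Sigma^\omega\cup\Delta^\omega$ works as well as the paper's $(\Sigma\cup\Delta)^\omega$.

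\textbf{The gap is in the simulation step.} Requiring only that $\autb$ is $\epsilon$-free and that each transition changes the stack height by at most one still allows transitions $(q,X,b,q',X')$ and $(q,X,b,q',X'X'')$ with $X'\neq X$, which rewrite the top symbol. Your $\aut$ cannot mirror these:
\begin{itemize}
\item a skip of an $\omega$-VPA must leave the stack unchanged;
\item a call must have the form $(p,X,c,p',XY)$, so $X$ stays below the pushed symbol.
\end{itemize}
So ``pushes or pops according to $\gamma$'' is not realizable in general. The correspondence between accepting runs of $\autb$ and accepted pairs then breaks, and the direction ``$\autb$ universal implies the formula holds'' fails.

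What you need is the stronger shape the paper calls \emph{fully normalized}: every height-preserving transition has the form $(q,X,b,q',X)$ and every push has the form $(q,X,b,q',XY)$. You then have to show that universality stays undecidable under this shape. The paper does this by keeping the top stack symbol of the simulated automaton in the finite control (states $Q'\times\Gamma$). This preserves the language and yields exactly VPA-shaped stack operations. Alternatively, you could verify that your Turing-machine automaton only ever pushes the letter just read on top, pops, or leaves the stack alone. But that property must be part of the statement you cite; real-time-ness alone does not give it.

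\textbf{The deferred lemma is most of the paper's proof.} What you set aside as ``the main obstacle'' is where the paper does its work:
\begin{enumerate}
\item it starts from universality of context-free grammars;
\item it uses 2-Greibach normal form to obtain an $\epsilon$-free PDA with height change at most one;
\item it transfers to $\omega$-words with the $\#$-gadget, which preserves both universality and normalization;
\item it then fully normalizes.
\end{enumerate}
Your Turing-machine route is a legitimate alternative source of undecidability. Encoding infinite computations directly would even spare the finite-to-$\omega$ transfer. As written, however, it is only asserted. It also silently needs either that transfer or the direct infinite encoding, since the textbook argument concerns finite words.
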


\begin{proof}
Let $\aut$ be an $\omega$-PDA over $\Sigma$ with set~$\Delta$ of transitions and consider the language~$L$ containing all words of the form~$\combine{w, \tau_0 \tau_1 \tau_2\cdots}$ with $w \in \Sigma^\omega$ and $\tau_0 \tau_1 \tau_2\cdots \in \Delta^\omega$ such that $\tau_0 \tau_1 \tau_2 \cdots$ induces an accepting run of $\aut$ on $w$. 
Assuming that $\aut$ does not have $\epsilon$-transitions implies that $\tau_j$ processes $w(j)$, i.e., the two components of $\combine{w, \tau_0 \tau_1 \tau_2\cdots}$ are synchronized. 
Intuitively, we construct an $\omega$-VPA~$\autb$ recognizing $L$ that simulates the run induced by the second component of the input (if it does indeed induce a run) and accepts if the run processes the word in the first component and is accepting (we are omitting some technical details for now).
Note that $\autb$ is indeed controlled by the second component, as $\autb$ simulates the transitions in the second component.
Then, a suitable transition system satisfies $\forall \pi_0.\ \exists \pi_1.\ \autb$ if and only if $L(\aut)$ is universal, as the formula expresses that for every input~$w \in \Sigma^\omega$ there is an accepting run of $\aut$ on $w$. 
As universality of pushdown automata is undecidable, this yields the desired result.

To ensure that $L$ as described above can indeed be recognized by an $\omega$-VPA, it is convenient to start with context-free grammars instead of pushdown automata.
But we will also work with PDA (over finite words) that accept with an empty stack. 
Such automata have the form~$\aut = (Q, \Sigma, \Gamma, q_\initmark, X_\initmark, \Delta)$ where $Q$, $\Sigma$, $\Gamma$, $q_\initmark$ are as for $\omega$-PDA as defined in \cref{sec_prelims} and where $X_\initmark \in \Gamma$ is the initial stack symbol and where $\Delta$ is a finite subset of $Q \times \Gamma \times \Sigma_\epsilon \times Q \times \Gamma^*$. The initial configuration is $(q_\initmark, X_\initmark)$ and a configuration is accepting if its stack is empty (i.e., we do not have a dedicated stack bottom symbol).
In the following, we disregard the empty word for technical reasons, i.e., we only consider nonempty words. 

We say that a PDA or an $\omega$-PDA~$\aut$ is normalized if it does not have any $\epsilon$-transitions and if every transition increases the stack height by at most one during each transition, i.e., every transition~$(q, X, a, q', \gamma)$ satisfies $\size{\gamma} \le 2$.
Furthermore, we say that $\aut$ is fully normalized if it is normalized and additionally 
\begin{itemize}
    \item every transition of $\aut$ of the form~$(q, X, a, q', X')$ satisfies $X = X'$, and
    \item every transition of $\aut$ of the form~$(q, X, a, q', X'X'')$ satisfies $X = X'$.
\end{itemize}
Note that these two conditions are the same as for visibly pushdown automata.
Our first goal is to show that universality is undecidable for fully normalized $\omega$-PDA.
This result relies on several textbook constructions that we need to carefully combine. 
For the sake of self-containedness, we present these (known) constructions here.

Universality for context-free grammars (generating languages of nonempty finite words) is undecidable. 
Every context-free grammar can be transformed into an equivalent one in 2-Greibach normal form~\cite[Corollary 3.2]{gnf}, i.e., every rule has the form~$A \rightarrow a A_1 \cdots A_n$ where $a$ is a terminal, $A, A_1, \cdots, A_n$ are nonterminals, and $n \le 2$. 
Hence, universality is also undecidable for grammars in 2-Greibach normal form (here, we benefit from disregarding the empty word, as we do not need to allow a special rule to generate the empty word).
Applying the classical translation (see, e.g.,~\cite{HMU}) of context-free grammars into PDA that accept with an empty stack to grammars in 2-Greibach normal form yields normalized PDA's (note that the empty word can only be accepted with $\epsilon$-transitions, as the stack is initially nonempty, but we accept with an empty stack, i.e., the PDA would not be normalized).
Hence, universality is also undecidable for normalized PDA's.

Next, we turn a PDA into an $\omega$-PDA while preserving universality and normalization.
Formally, given a PDA~$\aut = (Q, \Sigma, \Gamma, q_\initmark, X_\initmark, \Delta)$ we construct the $\omega$-PDA~$\aut' = (Q', \Sigma', \Gamma', q_\initmark', \Delta', F')$ where (see also \cref{fig_automatatransform})
\begin{itemize}
    \item $Q' = Q \cup \set{q_\initmark', q_{s_1}', q_{s_2}', q_t'}$,
    \item $\Sigma' = \Sigma \cup \set{\#}$ where $\# \notin \Sigma$ is a fresh input letter, 
    \item $\Gamma' = \Gamma$, and
    \item $\Delta'$ contains the following transitions:
        \begin{itemize}
            \item $(q_\initmark', \bot, a, q_\initmark', \bot)$ for all $a \in \Sigma$: A self-loop on the new initial state with every non-$\#$ letter.

            \item $(q_\initmark', \bot, \#, q_\initmark, \# X_\initmark)$: A $\#$-transition to the (old) initial state of $\aut$ that puts the initial stack symbol of $\aut$ on top of the stack bottom symbol~$\bot$ of $\aut'$.

            \item All transitions in $\Delta$: Every run of $\aut'$ can be simulated, and if (and only if) the stack of $\aut$ is emptied during the simulation, the stack bottom symbol of $\aut'$ is exposed again.

            \item $(q, \bot, \#, q_{s_1}', \bot)$ for every state~$q \in Q$ of $\aut$: Once the simulation has ended with an empty stack, the new (sink) state~$q_{s_1}'$ can be reached by processing another~$\#$.

            \item $(q_{s_1}', \bot, a, q_{s_1}', \bot)$ for all $a \in \Sigma \cup \set{\#}$: A self-loop on $q_{s_1}'$ with every letter in $\Sigma \cup \set{\#}$.

            \item $(q_\initmark', \bot, \#, q_{s_2}'. \bot)$: A $\#$-transition from the new initial state to another new (sink) state~$q_{s_2}'$.

            \item $(q_{s_2}', \bot, a, q_{s_2}', \bot)$ for all $a \in \Sigma$: A self-loop on $q_{s_2}'$ with every letter in $\Sigma$, but not $\#$.

            \item $(q_\initmark', \bot, \#, q_t', \bot)$ and $(q_t', \bot, \#, q_{s_1}', \bot)$: Two transitions processing $\#\#$ from the new initial state to the first sink state~$q_{s_1}'$. 
            
        \end{itemize}

    \item Finally, we define $F' = \set{q_\initmark',q_{s_1}', q_{s_2}'}$, i.e., a run is accepting if when it enters $\aut$ (which can happen only once), then it also leaves it again. If it is never entered, then the run is accepting as well.
\end{itemize}
The resulting~$\omega$-PDA~$\aut'$ is indeed normalized if $\aut$  is normalized.

\begin{figure}
    \centering

    \begin{tikzpicture}[thick]
  \draw[thick,rounded corners] (1,0) rectangle (6,2);
  
    \node[anchor=north west] at (1,2) {$\aut$};
    
  \node[state,accepting] (qs1) at (8.5,1) {$q_{s_1}'$};

  \node[state,accepting] (qiprime) at (-1.5,1) {$q_\initmark'$};
  \node[state,accepting] (qs2) at (-4,1) {$q_{s_2}'$};

  \node[state] (qi) at (2,1) {$q_\initmark$};

  \node[state] (qt) at (3.5,2.75) {$q_t'$};

  \path[->, > = stealth]
  (-1.5,0) edge (qiprime)
  (qiprime) edge[loop above] node[above] {$*, \bot \mid \bot $} ()
  (qiprime) edge node[above] {$\#,\bot \mid \bot$} (qs2)
  (qiprime) edge node[above,xshift=-6] {$\#,\bot \mid \bot X_\initmark$} (qi)
  (qs2) edge[loop above] node[above] {$*, \bot \mid \bot $} ()
  (5.5,1) edge[line width=2] node[above,xshift = 6] {$\#, \bot \mid \bot $} (qs1)
  (qiprime) edge[bend left=17] node[above,yshift=6] {$\#,\bot\mid\bot$} (qt)
  (qt) edge[bend left=17] node[above,yshift=6] {$\#,\bot\mid\bot$} (qs1)
  (qs1) edge[loop above] node[above] {$\dagger, \bot \mid \bot $} ()
  ;
    \end{tikzpicture}

    \caption{Turning an PDA into an $\omega$-PDA while preserving universality and normalization. A transition~$(q, X, a, q', \gamma)$ is depicted by an edge from $q$ to $q'$ labeled by $a, X \mid \gamma$.
    Here, $*$ represents an arbitrary letter from $\Sigma$ (but not $\#$!) and $\dagger$ represents an arbitrary letter from $\Sigma\cup\set{\#}$. Furthermore, the thick transition connects every state of $\aut$ to $q_{s_1}'$.}
    \label{fig_automatatransform}
\end{figure}

Now, consider an $\omega$-word~$w$ over $\Sigma\cup\set{\#}$.
If $w$ does not contain a $\#$ then it is accepted by $\aut'$ using the self-loops on the new initial state~$q_\initmark'$. 
Also, if $w$ contains exactly one $\#$ then is accepted by $\aut'$ using the self-loops on the new initial state until the $\#$ is processed using the transition leading to $q_{s_2}'$, from where the remainder of the word can be processed (which does not contain a $\#$).

So, let us consider the case where $w$ has at least two $\#$. 
If the first two $\#$ occur at consecutive positions then the word is accepted by using the self-loops on the new initial state~$q_\initmark'$ until the first $\#$ is processed using the transition to $q_t'$.
This is, by our assumption, immediately followed by a second $\#$, which is processed by the transition to $q_{s_1}'$.
From there, the remainder of the word can be processed.

Finally, assume the first two $\#$ do not occur at consecutive positions, i.e., $w$ has the form~$x_0 \# x_1 \# w' $ such that $x_0, x_1$ are finite words, $x_1$ is nonempty, and $w'$ is an $\omega$-word over $\Sigma\cup\set{\#}$.
The only way to accept such a word is to process $x_0$ using the self-loop on the new initial state~$q_\initmark'$ until the first $\#$ is processed using the transition to $q_\initmark$.
From there, an accepting run of $\aut$ on $x_1$ has to be simulated, as the only way to process the second~$\#$ is to take a $\#$-transition to $q_{s_1}'$, which is only enabled if the stack only contains~$\bot$ (i.e., in the simulation, the stack of $\aut$ has been emptied).
Hence, we have indeed $L(\aut) = \Sigma^+$ if and only if $L(\aut') = (\Sigma\cup \set{\#})^\omega$.
Hence, universality for normalized $\omega$-PDA is undecidable.

As a next step, we need to fully normalize a normalized $\omega$-PDA~$\aut'$.
This can be achieved by simulating~$\aut'$ while storing the topmost stack symbol of $\aut'$ in the state and the rest of the stack of $\aut'$ in the stack of the simulating automaton.

Formally, given a normalized $\omega$-PDA~$\aut' = (Q', \Sigma', \Gamma', q_\initmark', \Delta', F')$ we define the \newline$\omega$-PDA~$\aut'' = (Q'', \Sigma'', \Gamma'', q_\initmark'', \Delta'', F'')$ with
\begin{itemize}
    \item $Q'' = Q' \times \Gamma$,
    \item $\Sigma'' = \Sigma'$,
    \item $\Gamma'' = \Gamma'$,
    \item $q_\initmark'' = (q_\initmark, \bot)$, and
    \item $\Delta''$ contains the following transitions:
        \begin{itemize}
            \item For every transition~$(q, X, a, q', \epsilon) \in \Delta'$ of $\aut'$ (which implies~$X \neq \bot$), $\aut''$ has the transition~$((q,X),Y,a,(q', Y),\epsilon)$ for every $Y \in \Gamma$ and the transition~$((q, X),\bot,a,(q', \bot),\bot)$: The topmost symbol of the stack (encoded in the state of $\aut''$) is removed and updated.

            \item For every transition~$(q, X, a, q', X') \in \Delta'$ of $\aut'$, $\aut''$ has the transition~$(q, X), Y, a, (q', X'), Y)$ for all $Y \in \Gamma \cup \set{\bot}$: The topmost symbol of the stack (encoded in the state of $\aut''$) is updated.

            \item For every transition~$(q, X, a, q', X'X'') \in \Delta'$ of $\aut'$ with $X \neq \bot$, $\aut''$ has the transition $((q, X), Y, a, (q', X''), YX')$ for all $Y \in \Gamma\cup \set{\bot}$: The topmost symbol of the stack (encoded in the state of $\aut''$) is updated and then pushed on the stack (i.e., $X$ is replaced by $X'$, which is pushed) and the new topmost symbol of the stack is $X''$.

            \item For every transition~$(q, \bot, q', X'X'') \in \Delta'$ of $\aut'$ (which implies $X' = \bot$), $\aut''$ has the transition~$((q,\bot),\bot, a, (q', X''),\bot )$: If a symbol ($X''$ here) is pushed on the stack containing only the stack bottom symbol, then it is stored in the state.
        \end{itemize}
    \item Finally, $F'' = F' \times \Gamma$.
\end{itemize}
Note that $\aut''$ is fully normalized by construction. 

Given a configuration~$c = (q, \bot X_1 \cdots X_n)$ of $\aut'$ with $n>0$, we define the configuration~$f(c) = ((q, X_n), \bot X_1 \cdots X_{n-1})$.
Furthermore, for $c = (q, \bot)$, we define $f(c) = ((q, \bot), \bot)$.
Note that $f$ is a bijection between configurations of $\aut'$ and $\aut''$.
Now, an induction over $w \in \Sigma^*$ shows that if $c_0 c_1 \cdots c_n$ are the configurations of a run prefix of $\aut'$ processing $w$, then $f(c_0) f(c_1) \cdots f(c_n)$ are the configurations of a run prefix of $\aut''$ processing $w$.
Dually, another induction over $w \in \Sigma^*$ shows that if $c_0 c_1 \cdots c_n$ are the configurations of a run prefix of $\aut''$ processing $w$, then $f^{-1}(c_0) f^{-1}(c_1) \cdots f^{-1}(c_n)$ are the configurations of a run prefix of $\aut'$ processing $w$.
From these two translations and the definition of $F''$, we can conclude~$L(\aut'') = L(\aut')$.
Hence, universality is undecidable for fully normalized $\omega$-PDA.

Now, given a fully normalized $\omega$-PDA~$\aut'' = (Q, \Sigma, \Gamma, q_\initmark, \Delta, F)$ (we drop the primes for the sake of readability), we can construct an $\omega$-VPA~$\autb$ recognizing the $\omega$-language
\begin{align*}
L(\autb) = &\left\{
  \binom{w_0}{\tau_0}
  \binom{w_1}{\tau_1}
  \binom{w_2}{\tau_2} \cdots \in (\Sigma \times \Delta)^\omega
  \,\middle|\,
\begin{minipage}{6.5cm}
\text{$\tau_0 \tau_1 \tau_2 \cdots$ induces an initial accepting run}\\
  \text{of $\aut''$ processing $w_0 w_1 w_2 \cdots$}
\end{minipage}
\right\}\cup\\
&
\left\{
  \binom{w_0}{\tau_0}
  \binom{w_1}{\tau_1}
  \binom{w_2}{\tau_2} \cdots \in ((\Sigma \cup \Delta) \times \Delta)^\omega
  \,\middle|\,
  w_n \in \Delta \text{ for some } n
\right\}
\end{align*}
with respect to the following partition of $(\Sigma \cup \Delta) \times (\Sigma \cup \Delta)$, where $a$ is an arbitrary letter in $\Sigma \cup \Delta$.
\begin{itemize}
    \item $\binom{a}{\tau}$ is in $\Sigmacall$ if $\tau \in \Delta$ pushes a letter on the stack, i.e., it is of the form~$(q, X, a', q', XX')$.
    \item $\binom{a}{\tau}$ is in $\Sigmareturn$ if $\tau \in \Delta$ pops a letter off the stack, i.e., it is of the form~$(q, X, a', q', \epsilon)$.
    \item $\binom{a}{\tau}$ is in $\Sigmaskip$ if $\tau \in \Delta$ does not change the stack, i.e., it is of the form~$(q, X, a', q', X)$.
    \item $\binom{a}{b}$ with $b \in \Sigma$ is in $\Sigmaskip$.
\end{itemize}
Thus, $\autb$ is controlled by the second component of the input letters.

Intuitively, $\autb$ simulates the run induced by $\tau_0 \tau_1 \tau_2 \cdots$, checks that each $\tau_n$ processes $w_n \in \Sigma$ (here, we crucially rely on $\aut''$ being fully normalized), and gets stuck when $\tau_0 \tau_1 \tau_2 \cdots$ does not induce an initial run on $w_0w_1w_2\cdots$. It accepts if the simulated run is accepting or if $w_0 w_1 w_2 \cdots$ contains a letter from $\Delta$.

More formally, $\autb$ uses the same states as $\aut''$, as well as a fresh sink state called~$q_s$. The initial state is the same, as are the accepting ones ($q_s$ is accepting as well).
Also, $\autb$ uses the same stack alphabet as $\aut''$ and each transition~$\tau = (q, X, a, q', \gamma)$ is turned into the transition~$(q, X, (a,\tau) , q', \gamma)$ and we add all transitions of the form~$(q, X, (\tau, *), q_s, \gamma)$ with $\tau \in \Delta$ for suitable $\gamma$, and transitions of the form~$(q_s, X, (*,*), q_s, \gamma)$ for suitable $\gamma$, where $*$ stands for an arbitrary letter. These transitions allow $\autb$ to accept if the first component contains a letter from $\Delta$.

Now, we claim that $(\Sigma \cup \Delta)^\omega \models \forall \pi_0.\ \exists \pi_1.\ \autb$ if and only if $\aut''$ is universal.
This concludes the proof, as $ \forall \pi_0.\ \exists \pi_1.\ \autb$ is in $\Pi_{2,2}$ and a finite transition system with language~$(\Sigma \cup \Delta)^\omega $ can trivially be constructed.

So, let $\aut''$ be universal. Then, no matter how a trace~$t_0$ for $\pi_0$ is selected, we can find a trace~$t_1$ for $\pi_1$ so that $\autb$ accepts the pair~$\combine{t_0, t_1}$:
\begin{itemize}
    \item If $t_0$ contains a letter from $\Delta$, then $\combine{t_0, t_1}$ is accepted by $\autb$, independently of the choice of $t_1$.
    \item If $t_0$ does not contain a letter from $\Delta$, i.e., only letters from $\Sigma$, then it is accepted by $\aut''$. Hence, there is an infinite sequence~$t_1$ of transitions of $\aut''$ inducing an initial accepting run of $\aut''$ processing $t_0$. Hence, $\autb$ accepts $\combine{t_0, t_1}$.
\end{itemize}

On the other hand, if $\aut''$ is not universal, then there is a trace~$t_0 \notin L(\aut'')$. 
Thus, for every $t_1 \in (\Sigma \cup \Delta)^\omega$, $\autb$ does not accept $\combine{t_0, t_1}$, i.e., 
$(\Sigma \cup \Delta)^\omega \not\models \forall \pi_0.\ \exists \pi_1.\ \autb$.    
\end{proof}

Finally, note that our construction in the proof of \cref{thm_undec} uses $\omega$-VPA with Büchi acceptance.
However, one can strengthen our result to use $\omega$-VPA with weak parity acceptance~\cite[Chapter 1]{GTW02} where states are labeled by natural numbers (so-called colors) and a run is accepting if the maximal color occurring during the run is even:
\begin{itemize}
    \item In $\aut'$, we color $q_\initmark'$ and $q_{s_2}'$ both with color~$0$, the states of $\aut$ (and $q_t'$) with color~$1$ (so they have to be left eventually), and the state~$q_{s_1}'$ by color~$2$ (so that reaching it implies the run is accepting).
    \item In the construction of $\aut''$, we do not need to introduce new colors, even when using weak parity acceptance, as $\aut''$ just simulates $\aut'$ step-by-step. Formally, the color of $(q,X)$ in $\aut''$ is the color of $q$ in $\aut'$.
    \item The construction of $\autb$ uses the colors of $\aut''$ (when simulating $\aut''$) and the colors $2$ for the sink state~$q_s$ that is reached if the first component contains a letter from $\Delta$.
\end{itemize}
Hence, model-checking is undecidable even for $\Sigma_{2,2}$ formulas and $\Pi_{2,2}$ formulas whose automaton is an $\omega$-VPA with weak parity acceptance with three colors.
We leave it open whether this result can be further improved to show undecidability for two colors, i.e., for safety or reachability acceptance.

Again, by applying \cref{remark_negprops} we also obtain undecidability for the dual fragment~$\Sigma_{2,2}$.
Further, as $\Sigma_{2,2}$ or $\Pi_{2,2}$ can be embedded in each $\Sigma_{n,j}$ and each $\Pi_{n,j}$ with $n>2$ and $j>1$ (using dummy variables), model-checking for these fragments is undecidable as well.

\begin{corollary}
Let $n > 1$ and $1 < j \le n$. Then,
\hylogicvpa model-checking for $\Sigma_{n,j}$ and $\Pi_{n,j}$ formulas is undecidable.
\end{corollary}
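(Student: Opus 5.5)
The plan is to reduce from the cases already established: undecidability of $\Pi_{2,2}$ model-checking (\cref{thm_undec}) and, through \cref{remark_negprops}, of $\Sigma_{2,2}$. We embed these into $\Sigma_{n,j}$ and $\Pi_{n,j}$ by adding dummy quantified variables that the automaton ignores. Fix $n>1$ and $1<j\le n$.

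We treat $\Pi_{n,j}$ first; the case $\Sigma_{n,j}$ then follows by \cref{remark_negprops}. Item~\ref{remark_negprops_frag_n} says $\phi\in\Pi_{n,j}$ if and only if $\neg\phi\in\Sigma_{n,j}$, and item~\ref{remark_negprops_mc} says $\tsys\models\phi$ if and only if $\tsys\not\models\neg\phi$. Since $\neg\phi$ is computable from $\phi$ (complementation of $\omega$-VPA is effective), a decision procedure for one fragment yields one for the other.

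Take the instance $\phi=\forall\pi_0.\ \exists\pi_1.\ \autb$ from the proof of \cref{thm_undec}. Here $\autb$ is controlled by index~$1$ and the transition system $\tsys$ has language $(\Sigma\cup\Delta)^\omega$. The two blocks of $\phi$ must become the blocks $j-1$ and $j$ of the new formula. In a $\Pi_n$ prefix, block $j-1$ and block $j$ are either $\forall,\exists$ or $\exists,\forall$.

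\emph{Case $\forall,\exists$.} We insert $\pi_0$ as the sole variable of block $j-1$ and $\pi_1$ as the sole variable of block $j$. Every other block receives one fresh dummy variable with the required quantifier.

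\emph{Case $\exists,\forall$.} We instead use the pair $\exists\pi_1.\ \forall\pi_0.\ \overline{\autb}$, i.e., the negation of $\phi$. It is controlled by the index of $\pi_1$, which is again the variable of the $j$-th block counted within this pair. Its model-checking is the complement of that of $\phi$, so it is undecidable as well.

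Next, the automaton $\autb'$ over $\Sigma'^{k}$ simulates $\autb$ (or $\overline{\autb}$) on the two relevant components and ignores all others. The call/return/skip type of a letter is defined from the component of the variable in block $j$. Then $\autb'$ is controlled exactly by the indices of block $j$ (a singleton), which puts the formula in the appropriate fragment. By the fourth item of the closure proposition (renaming), ignoring components keeps $\autb'$ an $\omega$-VPA; this cylindrification is polynomial.

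Since $\traces{\tsys}$ is nonempty, the quantifiers over dummy variables are vacuous: the formula holds if and only if the embedded two-variable formula holds. This yields the reduction.

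The main obstacle is the index bookkeeping. Placing the controlling variable in block $j$ forces a choice between $\phi$ and its negation according to the parity of $j$ and the leading quantifier. We also have to confirm that the control condition is preserved when components are added, which it is because types depend only on that one coordinate.
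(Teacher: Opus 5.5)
Your proposal is correct and follows the paper's own argument: embed $\Pi_{2,2}$ (or, via \cref{remark_negprops}, its dual $\Sigma_{2,2}$) into blocks $j-1$ and $j$ with dummy variables the automaton ignores, where the parity of $j$ decides which of the two to embed. Two small slips need fixing: the negation of $\phi$ is $\exists\pi_0.\ \forall\pi_1.\ \overline{\autb}$, since your $\exists\pi_1.\ \forall\pi_0$ would put the controlling variable into block $j-1$; and the cylindrification is an inverse image obtained by relabeling the transitions of $\autb$, not an instance of the renaming (forward-image) closure.
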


Hence, we have settled the (un)decidability of all fragments but $\Sigma_{n,1}$ and $\Pi_{n,1}$ for $n>2$. 

\section{Conclusion}

We have introduced \hylogicpda and \hylogicvpa to specify context-free hyperproperties, which extend $\omega$-PDA and $\omega$-VPA, respectively, by quantification over traces, just like \hyltl extends \ltl by trace quantification. 
Not surprisingly, \hylogicpda model-checking is undecidable for all quantifier-fragments but $\Sigma_1$ (i.e., $\exists^*$ formulas), as the undecidable universality problem for PDA can be encoded using formulas of the form~$\forall \pi_0.\ \aut$.

For \hylogicvpa, the situation is better. Here, model-checking for $\Pi_{2,1}$ and $\Sigma_{2,1}$ formulas (i.e., $\forall^*\exists^*$ and $\exists^*\forall^*$ formulas where the partition into calls, returns, and skips only depends on the letters of the traces quantified in the first quantifier block) is decidable. Thus, one quantifier alternation can be handled under some mild assumptions, which covers many hyperproperties from the literature.
We complemented this by showing that model-checking for $\Pi_{2,2}$ and $\Sigma_{2,2}$ (i.e., one quantifier alternation, but the partition depends only on the second block of quantifiers) is undecidable. 
These results also immediately imply undecidability for all $\Sigma_{n,j}$ and $\Pi_{n,j}$  for $n>2$ and $j>1$.

This leaves only the fragments $\Sigma_{n,1}$ and $\Pi_{n,1}$ for $n>2$. 
Our undecidability proof crucially depends on the partition depending on a non-first block of quantifiers. On the other hand, the game-based characterization of \hyltl model-checking using prophecies can be applied to arbitrary quantifier prefixes~\cite{prophy}, at the price of requiring imperfect information games.
We are currently investigating whether this approach can be lifted to $\Sigma_{n,1}$ and $\Pi_{n,1}$ relying on imperfect information games with visibly pushdown winning conditions.

\subparagraph*{Acknowledgments.}
The work on this paper was partly supported by the project  `Hyperlogics: Expressiveness, Monitorability and Tools (H.-Lo)' of the Icelandic Research Fund, project no.~2612260-051.

\bibliographystyle{plain}
\bibliography{bib}

\end{document}